\tikzstyle{x node}=[fill=black, draw=black, shape=circle, text=white, minimum size=1mm]
\tikzstyle{z node}=[fill=white, draw=black, shape=circle, minimum size=1mm]
\tikzstyle{hnode}=[fill=white, draw=black, shape=rectangle]
\tikzstyle{y node}=[fill=white, draw=black, shape=isosceles triangle, isosceles triangle apex angle=60, rotate=90]
\tikzstyle{green shaded}=[-, line width=3pt, draw={rgb,255: red,20; green,206; blue,0}]
\tikzstyle{red shaded}=[-, line width=3pt, draw=red]
\tikzstyle{blue shaded}=[-, line width=3pt, draw=blue]
\newcommand{\Z}{\mathbb Z}
\newcommand{\F}{\mathbb F}
\newcommand{\cS}{\mathcal{S}}
\newcommand{\cL}{\mathcal{L}}
\newcommand{\ii}{\text{i}}
\newcommand{\G}{\mathcal G}
\newcommand{\M}{\mathcal M}
\newcommand{\tr}{\text{Tr}}
\newtheorem{defn}{Definition}
\newtheorem{thm}{Theorem}
\newtheorem{col}{Corollary}
\begin{document}

\title{Fault-tolerant protocols through spacetime concatenation}
\author{Yichen Xu}
\email{yx639@cornell.edu}
\affiliation{Department of Physics, Cornell University, Ithaca, New York 14850, USA}
\author{Arpit Dua}
\email{adua@vt.edu}
\affiliation{Department of Physics, Virgina Tech, Blacksburg, Virginia 24061, USA}

\begin{abstract}
We introduce a framework called spacetime concatenation for fault-tolerant compilation of syndrome extraction circuits of stabilizer codes. This framework enables efficient compilation of syndrome extraction circuits into dynamical codes through structured gadget layouts and encoding matrices, facilitating low-weight measurements while preserving logical information.
This framework uses conditions that are sufficient for fault-tolerance of the dynamical code, including not measuring logical operators and preserving the spacetime distance. This framework goes beyond dynamical weight-reduction methods, which do not explicitly produce Floquet codes with nontrivial logical automorphisms.
Using this framework, we reproduce known Floquet codes such as the honeycomb Floquet code and the ruby lattice Floquet color code with nontrivial automorphisms and a fault-tolerant planar variant of the Floquet toric code. We also construct new explicit examples of dynamical codes, including a dynamical bivariate bicycle code and a dynamical Haah code. 
Beyond constructing examples, we write a restricted equivalence relation which can be used to discuss classification and resource trade-offs of dynamical codes. Lastly, we demonstrate the adaptability of our framework to arbitrary qubit layouts and fabrication defects, making it well-suited for a hardware-first design approach.
\end{abstract}
\maketitle
\tableofcontents

\section{Introduction}

Ensuring the reliability of quantum computations in the presence of environmental noise and operational imperfections is a fundamental challenge in quantum computing. Quantum error correction~\cite{calderbank_good_1996,gottesman_stabilizer_1997,knill_theory_1997,gottesman_introduction_2009} is crucial to combat noise, and significant efforts are being made toward its experimental realization~\cite{Google2021realizing, Semeghini2021probing, Zhao2022realization, Google2022suppressing, Google2022nonabelian, xu2022digital, iqbal2023topological}. Despite this progress, advancing quantum error correction remains a critical priority. Efforts continue to focus on designing error-correcting codes to minimize resource overhead for memory and fault-tolerant operations and to align more effectively with the constraints and capabilities of current quantum hardware.

The most common quantum error-correction strategy is to store information in a subspace of the Hilbert space of physical qubits, called a codespace, which is also an eigenspace of commuting operators, called stabilizers. In active error-correction protocols, these stabilizers are repeatedly measured to detect errors, and the measured eigenvalues form a syndrome that is processed by a classical decoding algorithm. This algorithm guesses the error pattern, enabling the application of corrective operators to return the qubits to the codespace. Recently, following the original work of Hastings and Haah (HH)~\cite{hastings_dynamically_2021}, dynamical codes have emerged as innovative extensions of stabilizer codes~\cite{townsend-teague_floquetifying_2023,bauer_xy_2024,zhang_x-cube_2023,davydova_floquet_2023,kesselring_anyon_2024,dua_engineering_2024,yan_floquet_2024,sullivan2023floquet,fu_error_2024,fuente_xyz_2024,higgott_constructions_2024,tanggara_simple_2025,basak_approximate_2025}. The active QEC protocols associated with these dynamical codes repeating in time have been referred to as Floquet code protocols and were proposed to reduce the reliance on high-weight stabilizer measurements by utilizing sequences of lower-weight measurements, making them more feasible for implementation on certain quantum hardware. A dynamical code can be viewed as a more general compilation of a syndrome extraction circuit (SEC) of a stabilizer code in which the stabilizers associated with a stabilizer code are not necessarily measured directly, but can be measured as products of lower-weight measurements in multiple rounds. Consequently, the associated Floquet code protocol with repeated dynamical codes can be viewed as an alternative compilation of the stabilizer code protocol with repeating rounds of stabilizer measurements~\cite{gidney_fault-tolerant_2021,Gidney2022benchmarkingplanar, Paetznick2023Performance}.

In this work, we construct efficient fault-tolerant protocols through spacetime concatenation, an explicit framework for compilation of SECs of stabilizer codes as efficient Floquet codes. This is done by structuring measurement gadgets using low-weight measurements while ensuring the preservation of logical information. In particular, spacetime concatenation reformulates the notion of a dynamical code associated with a stabilizer code in terms of code concatenation for every qubit (spatial concatenation) and measurements between these codes (temporal concatenation), leading to a temporal evolution of the stabilizer state. In fact, previous constructions in the work~\cite{dua_engineering_2024} by one of the authors used such a principle for particular examples, but a general strategy was lacking.

In general, fault-tolerant protocols can be considered through multiple levels of increasing dynamical complexity. At the most fundamental level, static stabilizer code protocols operate with a fixed instantaneous stabilizer group (ISG) that remains unchanged throughout the process.\footnote{This statement is written only for the time instants where the measurements are completed. The idea of ISGs applies also to the intermediate states during the execution of a standard stabilizer measurement circuit, i.e.\ after any of the gates involved in the circuit. In this picture, ISGs evolve within each step of a static stabilizer code protocol similar to the Floquet code protocols.} Moving beyond this, subsystem code protocols introduce dynamically changing ISGs while maintaining a static set of logical qubits with the same bare logical operators. Finally, Floquet code protocols, simply referred to as Floquet codes, represent the most general approach, allowing both the ISG and the logical operators to evolve over time. This hierarchical perspective clarifies the relationship between different protocols, with Floquet codes offering the broadest definition for fault-tolerant protocols with periodic measurement dynamics. Since all these protocols---static stabilizer codes, subsystem codes, and Floquet codes---are based on instantaneous stabilizer groups at any given time, they can be understood through the unified framework of Floquet code protocols. We illustrate these different classes of fault-tolerant protocols in Fig.~\ref{fig: flowchart}. This perspective allows for a unified understanding of fault-tolerant protocols in terms of the evolution of stabilizer groups and the logical information under measurements.

A central motivation for innovating dynamical codes is the challenge of compiling high-weight stabilizer measurements in quantum architectures. The standard stabilizer measurement method involves entangling gates of data qubits with appropriately initialized ancillary qubits and measurements of those ancillary qubits in an appropriate basis. However, recently developed dynamical weight-reduction techniques~\cite{fuente_dynamical_2024, rodatz_floquetifying_2024} have demonstrated that complex measurements can be systematically decomposed into sequences of lower-weight measurements, while maintaining fault tolerance. Such a compilation is well suited for certain hardware parameters, where low-weight measurements lead to better performance in spite of the increase in the total number of measurements involved.

Our spacetime concatenation framework embeds weight-reduction techniques within a more general structured concatenation scheme, allowing for an effective approach to designing fault-tolerant Floquet codes. It formulates the dynamical code using measurement gadgets that specify a map on Pauli operators according to the connectivity of the qubits in the given stabilizer code. We propose an algorithm (see Algorithm~\ref{alg: gadgetcon}) for designing gadget connectivity which respects the original $k$-locality of the stabilizer code. This ensures that local checks in the static code are implemented by measurements supported on a comparable neighborhood in spacetime.
We refer to this design rule as a strict-$k$ locality-preservation condition (SLPC) and use it as a convenient organizing principle for constructing and classifying gadgets. 

In addition to Floquet code design, spacetime concatenation provides a powerful tool in analyzing the fault-tolerance of the constructed protocols. In particular, to guarantee that the fault-tolerance of the dynamical protocol to be enhanced with increasing number of physical qubits, we propose Macroscopic Logical Support Condition (MLSC), which demands that the image of every logical operator under the dynamical map has support of order at least the static code distance. Our main theorems show that, for any stabilizer code with macroscopic distance, any dynamical code produced by Algorithm~\ref{alg: gadgetcon} that satisfies MLSC automatically has spacetime distance at least $d$. 
In the rest of the paper we often impose SLPC to simplify gadget search and make examples geometrically transparent, but MLSC is the essential condition for preserving fault tolerance.

We emphasize that our construction and fault-tolerance condition generally apply to dynamical code compilations of arbitrary quantum low-density parity-check (qLDPC) codes~\cite{gottesman_fault-tolerant_2014,panteleev_degenerate_2021,panteleev_quantum_2022}. This is in comparison to previous works where fault tolerance was proven on a case-by-case basis, i.e.\ for specific Floquet code examples. Our recipe automatically outputs fault-tolerant Floquet code protocols whenever the MLSC condition is satisfied.

We illustrate the versatility of the spacetime concatenation framework to construct Floquet codes through explicit examples (see Sec.~\ref{sec:examples}), including a Floquet bivariate bicycle (BB) code and a Floquet Haah code using the associated stabilizer codes~\cite{bravyi_high-threshold_2024,haah_local_2011}. Concatenation-based or tensor-network-based schemes~\cite{ferris_tensor_2014,farrelly_tensor-network_2021,cao_quantum_2022,cao_quantum_2024,masot-llima_stabilizer_2024} have been discussed for stabilizer code protocols and carry certain features of our dynamical construction defined in terms of measurement gadgets. Our framework also goes beyond the approach of Ref.~\cite{fuente_dynamical_2024}, which constructs Floquet codes by applying dynamical weight-reduction techniques to the static stabilizer code protocol’s syndrome extraction circuit and can only produce dynamical codes with trivial automorphism. Our framework produces all nontrivial automorphism dynamical codes that obey MLSC.

Although Floquet codes provide a novel and flexible approach to fault-tolerant quantum error correction, the full space of possible Floquet code compilations of SECs of a given stabilizer code is not yet fully understood. The understanding of the classification of Floquet code protocols extends to several fundamental aspects, including the scheduling of measurements and its effect on logical operations. We address the classification of Floquet codes under the notion of spacetime equivalence, which is more constrained than the notion of general ZX-equivalence and uses the equivalence of gadget layout and the number of spacetime stabilizers. We discuss this in Sec.~\ref{sec:classification}.

The understanding of the classification of Floquet code protocols also extends to the efficiency of various Floquet schedules in terms of the overhead of qubits and the number of measurements, circuit depth, and decoding complexity. Practically speaking, it is important to keep in mind the resource overheads of different compilations of SECs for a given stabilizer code, such as static stabilizer code, subsystem code, and Floquet code protocols, and choose a compilation accordingly. Thus, we also discuss in our framework how different resources such as the number of ancillary qubits, the number of low-weight measurements, and the circuit depth of the dynamical code are related to each other; see Sec.~\ref{sec:resource}. These resource overheads are unified into one single number, the number of internal legs that connect the gadgets defined in our framework.

Floquet code protocols can be naturally adapted to fabrication defects in quantum hardware. In fact, Floquet codes have shown promising capabilities in terms of resource overhead to deal with missing or faulty qubits by appropriate modifications~\cite{mclauchlan_accommodating_2024}. Our spacetime concatenation framework is flexible enough to apply to arbitrary quantum hardware layouts and thus easily adapts to fabrication defects. If one already builds a Floquet code protocol on a given layout and fabrication defects are introduced later, then we can discuss how to modify the protocol solution found in our framework. We discuss this in Sec.~\ref{sec:fabrication_defects}.

We now give a brief summary of the sections in this paper. In Sec.~\ref{sec:prior_context}, we discuss in further detail how our framework fits in the context of prior work on Floquet codes and why it is significant. In Sec.~\ref{sec:generalities}, we describe the general principles of our spacetime concatenation framework. In Sec.~\ref{sec:examples}, we discuss how to use the spacetime concatenation framework to construct Floquet codes such as the Floquet toric code, the Floquet bivariate bicycle code, and the Floquet Haah code. In Sec.~\ref{sec:fault_tolerance}, we discuss the fault-tolerance of our code construction. In Sec.~\ref{sec:classification}, we define the notion of spacetime equivalence to make progress toward the classification of dynamical codes. In Sec.~\ref{sec:resource}, we discuss the resource theory for dynamical codes constructed using our spacetime concatenation framework. In Sec.~\ref{sec:fabrication_defects}, we discuss the applicability of our framework to deal with fabrication defects in a general way for dynamical codes. In Sec.~\ref{sec:outlook}, we discuss open questions in the context of Floquet codes and fault-tolerant protocols.

\section{Prior context}
\label{sec:prior_context}
In this section, we position our spacetime concatenation framework in the context of existing approaches to Floquet codes. There are three core subtopics that inform the construction of our framework and we address these below.

\begin{enumerate}
    \item \textbf{Floquet codes from topological order.}  
    The HH Floquet toric code was originally discovered by studying a measurement schedule for the subsystem code associated with Kitaev’s honeycomb model, revealing a dynamically generated logical subspace that evolves coherently across instantaneous stabilizer groups (ISGs). This measurement schedule was later interpreted as a sequence of anyon condensations in a parent color code~\cite{kesselring_anyon_2024}. The notion of dynamic automorphism codes~\cite{davydova_quantum_2024}, where logical gates emerge from measurement dynamics, motivated further works on Floquet codes for topological phases. These include constructions often leveraged physical intuitions such as anyon condensation, foliation structures of topological order, or coupled spin chain models~\cite{zhang_x-cube_2023, yan_floquet_2024, dua_engineering_2024}.

    However, these methods typically rely on specific structures within the ISGs, such as those found in topological stabilizer codes. In contrast, it is not straightforward to exhibit such structures in many quantum LDPC (qLDPC) codes. Our spacetime concatenation framework addresses this gap by providing a first-principles algorithm for constructing Floquet codes from any qLDPC stabilizer code, without assuming any particular ISG intuition or topological interpretation.

    \item \textbf{Floquet codes from ZX-calculus rewrites.} 
    ZX-calculus provides a powerful, graph-theoretic language for representing and manipulating quantum circuits. Several works~\cite{townsend-teague_floquetifying_2023, rodatz_floquetifying_2024, fuente_dynamical_2024} have leveraged this to “Floquetify” standard single-ancilla-syndrome-extraction-circuits (SASECs) by applying ZX rewrite rules. These methods produce circuits with lower-weight measurements and, in principle, can generate Floquet codes for any stabilizer code, since such syndrome extraction circuits are always definable.

    However, by construction, these ZX-rewritten circuits are logically equivalent to the SASECs and therefore yield Floquet codes with only trivial logical automorphisms. In contrast, our approach does not start from a fixed syndrome extraction circuit. Instead, we construct a dynamical map that measures the incoming stabilizer group and generates a stabilizer group on the outgoing legs. This allows our framework to systematically yield Floquet codes with potentially nontrivial logical automorphisms. Moreover, by modifying the number of internal legs in our gadget layout or changing the scale of locality, we can systematically explore the space of all Floquet codes that satisfy the MLSC.

    \item \textbf{Quantum Lego codes.}  
    Because each gadget in our framework functions as a local encoder of a stabilizer code, the spacetime concatenation formalism bears structural resemblance to the tensor-network-based “quantum Lego” codes~\cite{cao_quantum_2022, cao_quantum_2024}. However, the two approaches differ fundamentally in both structure and purpose.

    In quantum Lego codes, the tensor network encodes a stabilizer or subsystem code, with the incoming and outgoing legs representing logical and physical qubits, respectively. The tensor network does not directly specify a physical circuit for implementation
    In contrast, the gadget network in spacetime concatenation encodes an explicit circuit, where the incoming and outgoing legs correspond to the Hilbert space before and after a dynamical transformation. Each gadget is selected such that the full tensor network implements a specific Clifford circuit that measures stabilizers on the incoming legs and generates a stabilizer group on the outgoing legs.

\end{enumerate}

To summarize, prior approaches to constructing Floquet codes have largely relied on structural intuitions, such as anyon condensation in topological codes or on weight reduction methods, such as ZX-calculus rewrites of standard syndrome extraction circuits, which limit their generality. These methods either apply to a narrow class of stabilizer codes or produce only trivial logical automorphisms. In contrast, our spacetime concatenation framework offers a principled, circuit-level construction that applies to arbitrary stabilizer codes, including generic qLDPC codes, and naturally incorporates nontrivial logical automorphisms. While it draws conceptual parallels with quantum Lego codes, our approach differs fundamentally by producing an explicit and local circuit implementation that ensures fault-tolerance through strict locality preservation and logical information retention. This generality and flexibility position our framework as a unifying platform for designing fault-tolerant Floquet protocols across a wide range of code families.
\begin{figure*}
    \centering
    \includegraphics[width=0.9\linewidth]{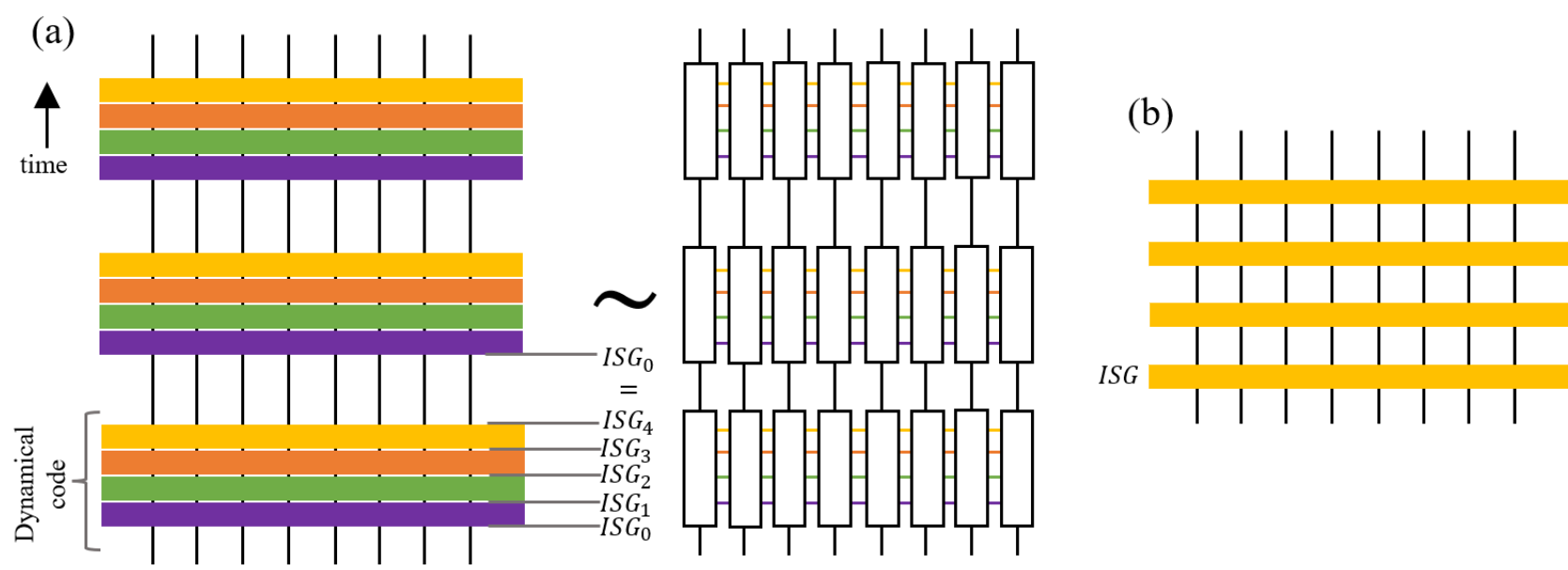}
    \caption{Comparison of fault-tolerant protocols. (a) Floquet codes are implemented via periodic sequences of dynamical codes, with ISGs that vary over time. The dynamical code circuit is constructed using gadgets connected temporally. Subsystem code protocols also involve time-varying ISGs but preserve a fixed logical space. (b) Static stabilizer code protocols maintain both a fixed ISG and logical space throughout the dynamics.}
    \label{fig: flowchart}
\end{figure*}

\section{Generalities}
\label{sec:generalities}
\subsection{Dynamical stabilizer codes: a general perspective}
\label{sec: generalcondition}

We begin by first defining our most general notion of a dynamical stabilizer code, which will be simply referred to as dynamical code in the subsequent discussion. Simply speaking, a dynamical code can be viewed as a SEC of a stabilizer code that measures all stabilizers of a stabilizer code at least once. A Floquet code can then be viewed as a sequential repetition of one or more dynamical code(s). Building on top of these intuitions, we propose a more general notion of dynamical codes that incorporates a large class of Floquet codes that exist in the literature.

Generically, a dynamical code can be defined for any pair of $[[n,k,d]]$ stabilizer codes that are defined on the same sets of qubits; here $n$ is the number of data qubits, $k$ is the number of logical qubits, and $d$ is the code distance. The stabilizer groups of this pair of codes are denoted by $\cS$ and $\overline{\cS}$, and their logical operators form the groups $\cL$ and $\overline{\cL}$, respectively.
In this work, we mainly consider $\cS$ and $\overline{\cS}$ to be Pauli stabilizer codes, so that each incoming and outgoing code stabilizer is a string of Pauli operators. Our definition and construction apply to general Pauli stabilizer codes $\cS$ and $\overline{\cS}$, which are not necessarily Calderbank-Shor-Steane (CSS) codes (i.e. $\cS$ and $\overline{\cS}$ are not necessarily generated by products of either $X$ or $Z$ operators).
We will refer to $\cS$($\overline{\cS}$) and $\cL$($\overline{\cL}$) as the incoming (outgoing) stabilizer group and the logical group for the dynamical code. 
For a dynamical code to preserve all the logical information, the two logical groups must be isomorphic to each other: $\cL\cong\overline{\cL}$. With these inputs, the dynamical code is a linear map $\M$ that acts on the Hilbert space of $n$ qubits as $\M: \mathbb{C}^{2^n}\to \mathbb{C}^{2^n}$ and satisfies the following conditions: 
\begin{enumerate}[(i)]
    \item It measures the incoming stabilizers, that is, it maps any stabilizer $s\in \cS$ of the initial code that is supported on the incoming legs to numbers:
    \begin{equation}\label{eq: ms}
        \M s=c_s\M,
    \end{equation}
    where $c_s$ is a complex phase that depends on $s$. For Clifford dynamical codes, $c\in e^{\frac{\ii \pi}{2}\Z}$.
    \item It generates the outgoing stabilizer group $\overline{\cS}$, which implies that for every $\overline{s}\in\overline{\cS}$, we have
    \begin{equation}\label{eq: m1}
        \overline{s}\M=c_{\overline{s}}\M.
    \end{equation}
    Again,  $c_{\overline{s}}$ is a complex phase that depends on $\overline{s}$.
    \item It is fault-tolerant, i.e. it maps the set of logical operators that is supported on the incoming legs to a set of logical operators that is supported on the outgoing legs, so that the logical information is preserved. That is, for every incoming logical operator $L\in\mathcal{L}$, we have
    \begin{equation}\label{eq: ml}
        \M L=\overline{L}\mathcal{M},
    \end{equation}
    where $\overline{L}\in\overline{\mathcal{L}}$ is an outgoing logical operator.
\end{enumerate}
Some comments on our definition are in order. From condition (i), we can regard any dynamical code as a generalization of a SEC. The conventional approach to stabilizer measurement by introducing one ancilla per each stabilizer, entangling each data qubit that participates in the stabilizer to the ancilla, and measuring the ancilla in the end \cite{divincenzo1996fault}, is a dynamical code once the circuit measures \textit{all the stabilizers} of the code. In this case, $\cS=\overline{\cS}$ and the logical action is apparently trivial, i.e. $L=\overline{L}$ in Eq.~\ref{eq: ml} for every $L\in \mathcal{L}$. We will refer to such a dynamical code as the single-ancilla-syndrome-extraction-circuit (SASEC) (see Appendix \ref{sec: zxrules} for the circuit realization of SASEC). The generalization in our definition is manifested in conditions (ii) and (iii).
In condition (ii), we did not require that the outgoing stabilizer group $\overline{\cS}$ of the outgoing $[[n,k,d]]$ code be exactly the same as the incoming stabilizer group $\cS$. Nevertheless, in conventional quantum error correction (QEC) practice of implementing any fault-tolerant protocol, the same measurement circuit will be repeated. For the measurement circuit $\M$ to be \textit{directly repeatable}, we must have $\cS=\overline{\cS}$. 
Subsequently, when discussing actual Floquet code examples, we will always impose this equality. 
We note that, even if $\M$ is not directly repeatable, it can be followed by a ``dual circuit" $\overline{\M}$ which satisfies
\begin{equation}
    \overline{\M}\overline{s}\propto \overline{\M}\text{ and }s\overline{\M}\propto\overline{\M}
\end{equation}
for any $s\in \mathcal{S}$ and $\overline{s}\in\overline{\mathcal{S}}$, so that the combined circuit, represented by the linear map $\overline{\M}\circ \M$, is repeatable. 

We also note that, if $\cS\neq \overline{\cS}$ and the circuit implementation of $\M$ only contains measurements, the Floquet code obtained by repeating $\overline{\M}\circ \M$ does not have a parent subsystem code. This generalizes the notion of Floquet codes without parent subsystem codes, which are constructed in Ref.~\cite{davydova_floquet_2023}. In fact, if such a Floquet code had a parent subsystem code, both $\mathcal{S}$ and $\overline{\mathcal{S}}$ would need to be part of its stabilizer subgroup, while $\cL\cong\overline{\cL}$ is the logical subspace of this subsystem code. This is true only if all stabilizers in $\mathcal{S}$ and $\overline{\mathcal{S}}$ commute. This would imply that we could merge $\cS$ and $\overline{\cS}$ into a single stabilizer group. However, given the fact that $\cS$ already stabilizes the incoming stabilizer code, the merged stabilizer group would be over-complete unless $\cS=\overline{\cS}$.

Below we will consider codes with $\cS= \overline{\cS}$ because sufficient periods of $\M$ or a combined circuit $\overline{\M}\circ \M$ can always satisfy that condition.

In condition (iii), we do not require that every outgoing logical operator be exactly the same as the incoming logical operator. Rather, $\M$ can act as a unitary operation in the logical subspace, i.e.
\begin{equation}
    P_\mathcal{L}\M P_\mathcal{L}=U_\M P_\mathcal{L},
    \label{eq: pmp}
\end{equation} 
where $P_\mathcal{L}$ is the projector onto the logical subspace and $U_\M\in U(2^k)$ is a unitary operator. This means for every logical operator $L\in \mathcal{L}$, the dynamical code acts as $\M(L)=U_\M LU_\M^\dagger \bar{s}$, where $\bar{s}\in\overline{\cS}$ is some stabilizer in the outgoing stabilizer group.  If $U_\M\neq \mathbbm{1}$, the dynamical code performs a non-trivial logical gate.

We now discuss how the above definitions incorporate some well-known examples of Floquet codes. 
For example, in the HH Floquet code\footnote{The version of this code on the honeycomb lattice has been referred to as the Hastings-Haah code~\cite{hastings_dynamically_2021} or the honeycomb code~\cite{gidney_fault-tolerant_2021}.} proposed in Refs. \cite{hastings_dynamically_2021,haah_boundaries_2022} (see Fig. \ref{fig: tclattice} for an adaptation on the square-octagon lattice),
the physical qubits are measured according to the schedule 0: gZZ; 1: bXX; 2: rYY, where the first letter denotes the color of the edge and the last two letters denote the pairwise measurement of the two qubits on that edge in the corresponding Pauli basis. The four measurement rounds 0, 1, 2, 0 is a dynamical code over the effective qubits formed by round 0 measurements, and $\cS=\overline{\cS}$ are the instantaneous stabilizer groups (ISG) of the 0th round, the toric code on square lattice. We note that one can choose the ISG $\cS_{i}$ ($i=0,1,2$) of any measurement rounds in 0,1,2 to be both the incoming and outgoing stabilizer group, so that the rounds $i,\ i+1,\ i+2, i+3\mod 3$ form a dynamical code with $\cS=\overline{\cS}$ due to the periodicity of the schedule. For the HH Floquet code on a torus, the action on the two logical qubits is, in fact, a Hadamard gate on every logical operator: $U_\M=(H\otimes H)\cdot\text{SWAP}$, see Sec.~\ref{sec: tcclifford}.

Another familiar example is the CSS Floquet toric code proposed in Refs. \cite{davydova_floquet_2023,kesselring_anyon_2024}, which can be implemented in the same lattice layout as in Fig. \ref{fig: tclattice} with a 6 round schedule gZZ, bXX, rZZ, gXX, bZZ, rXX. In this case, the 4 rounds of measurements gZZ, bXX, rZZ and gXX  form a dynamical code over the effective qubits supported on each pair of qubits on the green edges. The incoming stabilizer group $\cS$ and the outgoing stabilizer group $\overline{\cS}$ are both toric codes, but the locations of the stabilizers $X$ and $Z$ are swapped, and the logical space action $U_\M=\mathbbm{1}$ is trivial. Alternatively, the 7 rounds gZZ, bXX, rZZ, gXX, bZZ, rXX and gZZ also form a dynamical code, in which case $\cS=\overline{\cS}$ and $U_\M=\mathbbm{1}$. Repeating the dynamical code yields the Floquet code protocol.

\subsection{Compiling dynamical codes from gadget layout}\label{sec: compile}

A generic measurement circuit involves the introduction of ancillary qubits, the application of unitary gates, and projective measurements. To combine these aspects of the circuit in one unified approach, we use ZX-calculus as a powerful tool in constructing the circuit of dynamical codes. For a general introduction to ZX-calculus and its application in quantum computing, we refer the reader to Refs. \cite{backens_zx-calculus_2014,wetering_zx-calculus_2020}. For convenience, we put together a short list of useful Clifford ZX-calculus rules in Appendix \ref{sec: zxrules} that are directly relevant to our work.

From the viewpoint of ZX-calculus, every dynamical code $\M$ represents blackbox of ZX diagram with $n$ incoming legs and $n$ outgoing legs, see Fig. \ref{fig: instrument}. The ZX diagram produces a linear map between the incoming and outgoing legs that satisfies the rules in Eqs. \eqref{eq: ms}, \eqref{eq: m1} and \eqref{eq: ml}. Since our main focus is to construct Floquet codes using Pauli basis measurements, the corresponding ZX-diagram of $\M$ has to be Clifford, that is, it only consists of $Z$ and $X$ nodes with phases $\frac{\pi}{2}\Z$.

\begin{figure*}
    \centering
    \includegraphics[width=0.8\linewidth]{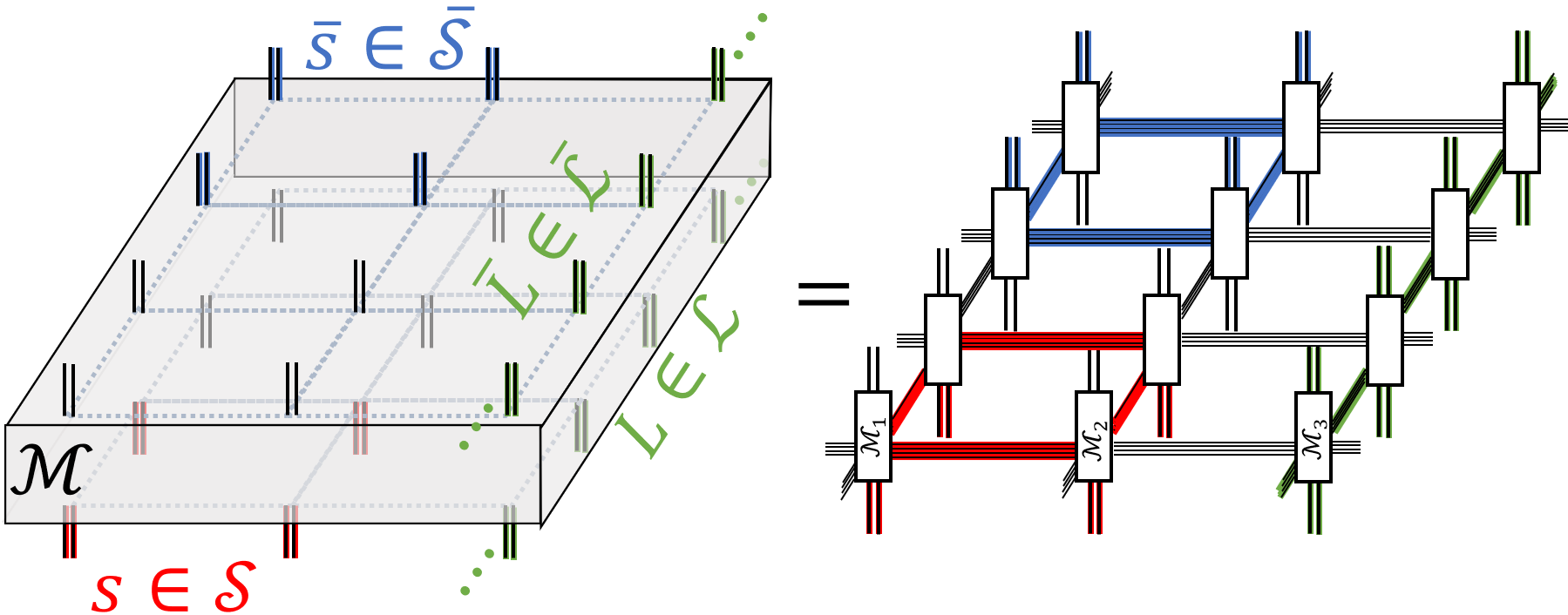}
    \caption{The linear map $\M$ and the layout into gadgets $\M_i$, where $i=1,2,3,\dots$ are lattice site indices. In the example case shown in the figure, the incoming and outgoing stabilizers are located on a 2D square lattice (marked in grey dotted lines) with $m=2$ data qubits per site and each pair of nearest neighbor gadgets are connected by $4$ internal legs. For each incoming/outgoing stabilizer $s$/$\overline{s}$ (marked by vertical red/blue shaded lines) to be measured/generated, the gadgets should map the incoming/outgoing stabilizers to bond operators over the internal legs that connects these gadgets (marked by horizontal red/blue shaded lines). To preserve the logical information, the gadgets should also map incoming logical operator (marked in green) to an outgoing one. The different colors of the shaded legs are used to mark different operators. }
    \label{fig: instrument}
\end{figure*}
\begin{figure}
    \centering
    \includegraphics[width=0.7\linewidth]{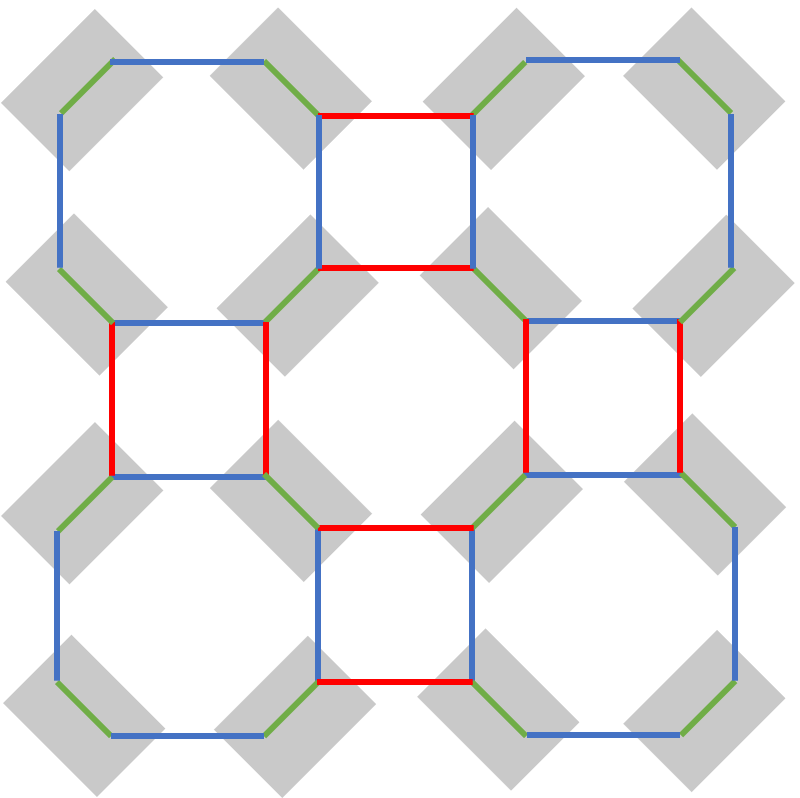}
    \caption{The Floquet toric codes on the square-octagon lattice. Here the red, blue and green bonds are denoted in the measurement schedules as ``r", ``b" and ``g" respectively. The pair of qubits in shaded rectangles form an effective qubit after the pairwise measurements on the green bonds. From the gadget layout perspective, the two qubits is the outcome of the spatial concatenation inside the gadget. 
    }
    \label{fig: tclattice}
\end{figure}

\subsubsection{The gadget layout  with strict-$k$-locality-preservation (SLP)}
We now propose algorithms to construct the linear map $\M$ that takes us from the stabilizers $\cS$ and $\overline{\cS}$. The key strategy is to decompose $\M$ into a set of interconnected gadgets $\M_i$ that are located at the original locations of the data qubits (see Fig. \ref{fig: instrument}). Suppose that there are $m$ data qubits on every site\footnote{This includes the scenario where we have done coarse-graining to bring multiple qubits onto a single site. In general, a gadget can be introduced for every qubit. In some cases, a gadget for a coarse-grained site with multiple qubits provides a simpler representation. For example, see Floquet Haah code in Sec.~\ref{sec: haah} for which the stabilizers are represented on a cube with two qubits per site and the gadget is defined for the site instead of individual qubits, but this is not necessary.} in the original stabilizer codes $\cS$ and $\overline{\cS}$, each gadget $\M_i$ can then be represented by a ZX-diagram with $m$ incoming and $m$ outgoing legs, and additionally a set of $n_L$ internal legs that connect the gadget to other gadgets on neighboring sites. 
In total, the linear map $\M$ is obtained from the gadgets via
\begin{equation}
        \M=\tr_\text{bond}\left[\bigotimes_j\M_j\right],
\end{equation}
where $j=1,2,3,\dots$ are lattice site indices and $\tr_\text{bond}$ denotes the contraction of pairs of internal legs on the bonds which connect the gadgets.  For readers familiar with tensor networks, we essentially turn the linear map $\M$ of the dynamical code into a matrix product operator (MPO)\cite{verstraete_matrix_2004,zwolak_mixed-state_2004}, where the gadgets correspond to the local tensors. Each pair of gadgets connected by $l$ internal legs gives rise to a bond Hilbert space of dimension $2^l$.

For the dynamical code represented by the linear map $\mathcal{M}$ to satisfy the conditions in Eqs.~\eqref{eq: ms} and \eqref{eq: m1}, the Pauli web of every $s\in \mathcal{S}$ or $\bar{s}\in \overline{\mathcal{S}}$ on the incoming (or outgoing) legs needs to close inside $\mathcal{M}$ via the bonds that connect the gadgets, see Fig.~\ref{fig: instrument} (see Appendix~\ref{sec: pauliweb} for details of Pauli webs). We refer to the support of the Pauli web of $s$ on the bonds as \textit{bond operators}.

In order for the dynamical code to respect the underlying $k$-locality of the stabilizer codes $\mathcal{S}$ and $\overline{\mathcal{S}}$, we first determine the connections between gadgets in the gadget layout so that all incoming/outgoing stabilizers are measured/generated with a controlled locality in the graph of gadgets $G_\text{gadget}=(V_\text{gadget},E_\text{gadget})$, where $V_\text{gadget}$ is the collection of all gadgets and $E_\text{gadget}$ is the collection of bonds that connect them. For simplicity, we first assume that $\mathcal{S}=\overline{\mathcal{S}}$ are CSS stabilizer codes, and that the minimum-weight generating set of stabilizers of $\mathcal{S}$ is specified by the Tanner graph $G_\mathcal{S}=(V_\mathcal{S},C^X_\mathcal{S},C^Z_\mathcal{S},E_\mathcal{S})$, where $V_\mathcal{S}$, $C^{X}_\mathcal{S}$ and $C^{Z}_\mathcal{S}$ are collections of vertices that represent the physical qubits, $X$ checks, and $Z$ checks, respectively.\footnote{Note that when $m=1$, we have $V_\mathcal{S}=V_\text{gadget}$.}

Within our framework we impose a \textit{strict-$k$ locality-preserving} design rule for gadget layouts, which we refer to as the strict-$k$ locality-preservation condition (SLPC).\footnote{We also use SLP as an acronym for strict-$k$-locality-preserving when using it as an identifier for the gadget layout or the dynamical code. In the case where SLPC is not obeyed, we will use the nomenclature non-SLP.} Concretely, for every check $c\in C^X_\mathcal{S}$ or $C^Z_\mathcal{S}$ whose participating data qubits are $V_s\equiv N(c)\subseteq V_\mathcal{S}$, we demand that there exists a Pauli web of $s$ from the incoming or outgoing legs that stays within the bonds connecting the gadgets $V_{g|s}$ that replace the data qubits in $V_s$. That is, the internal part of the Pauli web (represented by the coloured bonds in the plane in Fig.~\ref{fig: tclattice}) is supported on a collection of edges $E_{g|s}\subseteq E_\text{gadget}$, such that for every $e\in E_{g|s}$, $\phi(e)\in V_{g|s}$.

We emphasize that SLPC is a structural ingredient of our construction, ensuring that the compiled dynamical code mirrors the locality structure of the underlying stabilizer code and making the gadget layout geometrically transparent. However, SLPC by itself is neither necessary nor sufficient for fault-tolerance. The actual fault-tolerance requirement is captured by the macroscopic logical support condition  introduced in Sec.~\ref{sec: spacetimed}, which demands that logical operators do not map to short bond operators.

We now present an algorithm that, given a Tanner graph for $\mathcal{S}$, constructs a gadget layout obeying SLPC. The construction takes intuition from the fact that each gadget gives rise to an isometry between the incoming/outgoing legs and the internal legs: the bond operators of $X$ and $Z$ on the internal legs between the gadgets must preserve the Pauli algebra of $X$ and $Z$ on the incoming/outgoing legs.

\begin{algorithm}[H]
  \caption{Gadget connectivity with SLPC}
  \label{alg: gadgetcon}
  \textbf{Input:} $G_\cS=(V_\cS,C^X_\cS,C^Z_\cS,E_\cS)$, the Tanner graph of the minimum weight generating set of a CSS code $\cS$.
  
  \textbf{Output:} $G_\text{gadget}=(V_\text{gadget},E_\text{gadget})$, the gadget connectivity graph that satisfies SLPC.
  
   \begin{algorithmic}[1]
   \State For every lattice site with $m$ data qubits, replace all the $m$ data qubits in $V_\cS$ with a gadget vertex in $V_\text{gadget}$. Record the replacement map $R:\ V_\cS\to V_\text{gadget}$.
   \State Initialize $E_\text{gadget}=\{\}$.

   \State For $i=1$ to $n$,
   
   \ \ For $j>i$ to $n$,

   \ \ \ \ If $R(v_i)\neq R(v_j)$, do step 4. Otherwise, continue.

   \State If $\exists c^X\in C^X_\cS, c^Z\in C^Z_\cS$ such that $\{v_i,v_j\}\subseteq N(c^X)\cap N(c^Z)$, add an edge between $R(v_i)$ and $R(v_j)$ to $E_\text{gadget}$ if such an edge does not exist in $E_\text{gadget}$.
   \end{algorithmic}
\end{algorithm}
We note that such an algorithm does not give details such as the bond dimension, i.e. the number of internal legs needed on each bond. This will be determined in the following discussion in Sec. \ref{sec: gadgetencoder}. 

We expect the Algorithm \ref{alg: gadgetcon} to be straightforwardly applicable in the case where $\cS\neq\overline{\cS}$. Furthermore, the algorithm is likely generalizable to constructing gadget layout of non-CSS stabilizer codes,in which case the step $4$ in Algorithm \ref{alg: gadgetcon} is modified so that whenever there exist two checks $c_{1,2}$ such that $\{v_i,v_j\}\subseteq N(c_1)\cap N(c_2)$, draw an edge between $R(v_i)$ and $R(v_j)$ if the support of $c_1$ and $c_2$ on at least one of the two data qubits $v_i$ and $v_j$ does not commute.  
Such a modification for non-CSS stabilizer codes is sufficient to guarantee SLPC. But there may exist more efficient ways with fewer inter-gadget connections. However, those gadget solutions will always be a subset of our solutions for the gadget connection from the algorithm.

\subsubsection{The encoding matrices of the gadget}
\label{sec: gadgetencoder}
Now that we have a gadget layout, we need to solve the bond operators for every incoming and outgoing stabilizer. As we demonstrate below, this problem can be reformulated as the construction of Clifford encoders at the gadget level. 

From the perspective of a lattice site, the $m$ data qubits there participate in multiple stabilizers in $\cS$ and $\overline{\cS}$. This means that, in the dynamical code $\M$, the gadget that replaces the $m$ data qubits involves in Pauli webs of all these stabilizers. The Pauli web of every stabilizer is supported on either the incoming or outgoing legs (but not both), and a collection of internal legs depending on the spatial connection between the gadgets, where its support are the bond operators that we are trying to solve.
As a result, each gadget functions as an encoder of a local stabilizer code that encodes $2m$ logical qubits, as represented by the $m$ incoming and $m$ outgoing legs, to $n_L$ physical qubits that correspond to the internal legs. We note again that we can always choose a gadget for every qubit such that $m$ is 1. Furthermore, the $2m$ logical operators of the local stabilizer codes need to be encoded in the $n_L$ physical qubits as bond operators. Note that this local stabilizer code is not directly concatenated with the incoming stabilizer code throughout the dynamics. Rather, in an actual circuit implementation of $\M$, the $n_L$ internal legs give rise to a temporal sequence of inter-gadget interactions along the spatial direction of the connected internal legs. Therefore, we call this a \textit{spacetime concatenation} where the dynamics induced by the circuit $\M$ is ``concatenated'' to the incoming stabilizer code. We discuss more details in Sec. \ref{sec: fixfloquet}.

Since we only consider $\M$ to be a Clifford circuit and $\cS/\overline{\cS}$ to be Pauli stabilizer codes, every bond operator is a string of Pauli operators whose length is equal to the number of internal legs on the bond. In this case, every gadget is a Clifford encoder that encodes the incoming/outgoing Pauli operators to Pauli operators supported on the internal legs. The encoding map of the gadget can be described by a binary matrix $H=(H_X|H_Z)$.
Every incoming/outgoing stabilizer in which the gadget is involved gives rise to a row $\vec{v}_{X/Z}\in \F_2^{2m+n_L}$ in $H_{X/Z}$. The first $m$ entries of $\vec{v}_{X}$ and the first $m$ entries of $\vec{v}_{Z}$ represent the incoming Pauli operator $\prod_{j=1}^m X^{\vec{v}_{X,j}}_jZ^{\vec{v}_{Z,j}}_j$ on the incoming leg that is to be encoded. Similarly, the following $m$ entries in $\vec{v}_{X/Z}$ represent the outgoing Pauli operator. The remaining $n_L$ rows of $\vec{v}_{X/Z}$ represent the Pauli string acting on the $n_L$ legs of the bonds. In total, if $n_\text{in/out}$ is the number of stabilizers in which the lattice site participates in the incoming/outgoing stabilizer groups, the encoding matrices will have dimensions $H_X, H_Z\in \F_2^{(n_\text{in}+n_\text{out})\times (2m+n_L)}$. In order for the encoding map to preserve Pauli algebra, the two matrices must satisfy the symplectic condition
\begin{equation}\label{eq: cliffordcon}
    H_XH_Z^T+H_ZH_X^T=0\mod 2.
\end{equation}

A simpler encoding scheme for the gadget is CSS encoding, which only works if both the incoming and outgoing stabilizer groups are also CSS codes.
In this case, any gadget encodes incoming/outgoing $X$ ($Z$) operators exclusively to $X$ ($Z$) operators on the internal legs. 
Therefore, each incoming/outgoing $X$/$Z$ stabilizer in which the lattice site is involved gives rise to a row $v_{X/Z}\in \F_2^{2m+n_L}$ in the encoding matrix $H_X$/$H_Z$. The consistency condition is then
\begin{equation}\label{eq: csscon}
    H_XH_Z^T=0\mod 2.
\end{equation}
One benefit of CSS encoding, as we will see in Sec. \ref{sec: fixzx}, is that the entire dynamical code can be represented by a ZX-diagram where every node can only carry phases $0$ or $\pi$, which means that it can be implemented using only $X$/$Z$ basis measurements only.
Since CSS encoding is a strict subset of Clifford encoding, we expect a general Clifford encoding of the gadgets could give rise to a more efficient\footnote{ The number of internal legs $n_L$ as obtained from the solution to the consistency equation quantifies the efficiency for a dynamical code in the following sense. From the ZX-diagram point of view, the number of internal legs $l$ on each bond that connects a pair of gadgets gives the number of entangling gates or measurements between this pair of gadgets. Therefore, the total number of internal legs represents a combined overhead of the dynamical code, which includes the depth of the circuit, the number of physical qubits needed in each gadget, and the non-local connectivity needed in the circuit. In general, with syndrome extraction error, every dynamical code should be repeated $O(d)$ times. Our notion of efficiency is therefore a quantification of the factor in front of the order $d$.} dynamical code compared to a CSS dynamical code; see Sec. \ref{sec:resource} for more details. However, in some examples, we find that the CSS and non-CSS dynamical code solutions of the consistency equations both have the same and the smallest possible number of internal legs $n_L$.

After obtaining solutions to the consistency condition(s), it is important to
check whether the encoding is maximal, i.e. whether the encoding matrices are
full-rank. The rank of the encoding matrix is
\begin{equation}
    r \equiv \mathrm{rank}(H) = \mathrm{rank}(H_X \mid H_Z)
\end{equation}
for a general Clifford encoding and
\begin{equation}
    r \equiv \mathrm{rank}(H_X) + \mathrm{rank}(H_Z)
\end{equation}
for a CSS encoding. If $r < 2m + n_L$, we have to impose additional
stabilizers on the internal legs to completely fix the encoding map of the
gadget. In practice, these additional internal stabilizers are usually easy to
identify with the help of symmetries in the stabilizer code layout.

Finally, we must check whether the dynamical code obtained from the gadgets is
fault-tolerant, i.e. whether it preserves logical information from the incoming
to the outgoing stabilizer code in the sense of Eq.~\eqref{eq: ml}. At the
level of individual examples, this can often be verified explicitly by
tracking the Pauli web of each incoming logical operator through the gadget
layout, which is what we do for most of the constructions in
Sec.~\ref{sec:examples} and the Appendices. In general, in
Sec.~\ref{sec:fault_tolerance} we formalize this as the Macroscopic Logical Support (MLSC) condition, which
requires that the image of every logical under the dynamical map has
macroscopic weight. We prove there that any dynamical code produced by our
algorithm that satisfies MLSC has spacetime distance at least the static code
distance. The SLPC introduced
above is used as a convenient structural constraint in our constructions, but
it is not required as an additional condition for fault-tolerance.

\subsubsection{Constructing the ZX-diagram inside the gadget from encoding maps}\label{sec: fixzx}

After obtaining the maximal encoding maps of the gadgets, we construct the ZX-diagram inside the gadgets accordingly to arrive at the circuit implementation of the dynamical code. To this end, we note that diagrammatic algorithms in ZX-calculus to compile the encoding circuit from CSS encoding maps have already been proposed in Ref. \cite{kissinger_phase-free_2022}, while the generalization to the Clifford encoding is given in Ref. \cite{khesin_universal_2025}. For completeness, we provide a brief summary of the algorithms proposed in these references.

We first discuss the way to compile the encoding circuit for CSS encoding, which, as pointed out by Ref. \cite{kissinger_phase-free_2022}, only requires the knowledge of one of the CSS encoding matrices, $H_X$ or $H_Z$. We will use $H_X$ in the following construction. To begin with, each incoming, outgoing, and internal leg is assigned a $X$ node. For each row $\vec{p}_i$ of $H_X$ ($i=1,2,\dots$ is the row index), we draw a $Z$ node. The $i$-th $Z$ node is then connected to the $X$ nodes of the legs that have a non-zero entry in $\vec{p}_i$ (see Fig. \ref{fig: encode}(a)). This ensures that the corresponding stabilizer $X$ $s_{X,i}=\prod_{j=1}^{2m+n_L}X_j^{(\vec{q}_i)_j}$ is stabilized. At this point, the CSS encoding circuit is already complete.

The ZX-diagrammatical compilation of a general Clifford encoding circuit is a literal implementation of the algorithm that is used to prove the Gottesman-Knill theorem~\cite{aaronson_improved_2004,khesin_universal_2025}. To begin with, we turn every row $(\vec{{v_X}_i}|\vec{{v_Z}_i})$ ($i=1,2,\dots, 2m+n_L$ is the row index) of the encoding matrix $(H_X|H_Z)$ into a Pauli string $\prod_{j=1}^{2m+n_L} X_j^{(\vec{v_X}_i)_j}Z_j^{(\vec{v_Z}_i)_j}$. Since the gadget is maximally stabilized, the set of Pauli strings forms a complete tableau $[s_1,s_2,\dots, s_{2m+n_L}]$. The core of the algorithm is to disentangle the legs one by one from the rest of the legs in the stabilizer tableau using Clifford circuits. For example, starting from any stabilizer $s_1$ that involves the first leg, denoted by subscript 1, we apply a Clifford circuit $U_1$ between it and the rest of the legs, so that $U_1 s_1 U_1^{-1}=Z_1$, that is, the first leg is disentangled from all other legs. For every other stabilizer $s_2$ in the tableau, we check whether $U_1 s_2 U_1^{-1}$ still involves the first leg. Since every pair of stabilizers must commute, if this happens, the involvement can only be $Z_1$. Therefore, the first leg can be disentangled from $s_2$ by multiplying the two stabilizers after the Clifford unitary. The first legs are then measured on the basis of $Z$ and discarded, and the rest of the stabilizers in the tableau will be updated by conjugating with $U_1$. The algorithm continues until all legs are disentangled and measured (see Fig. \ref{fig: encode}(b)). 

\begin{figure*}
    \centering
    \includegraphics[width=0.75\linewidth]{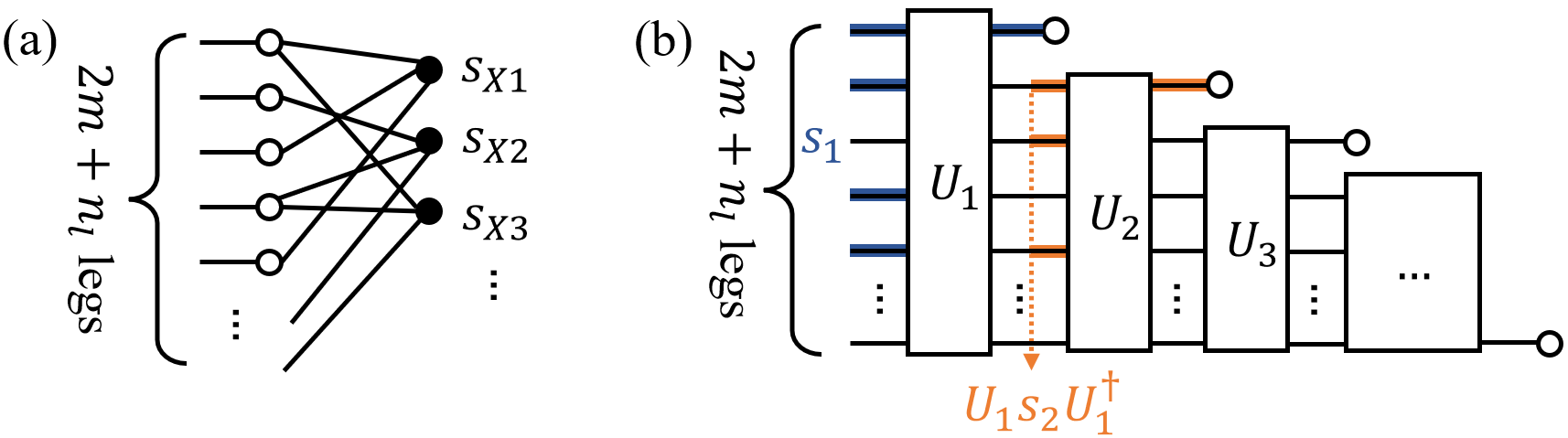}
    \caption{Compiling the encoding circuit for the gadget. (a) The CSS encoder. Every leg of the gadget enters through an $X$ node and every $X$ stabilizer gives a $Z$ node which is connected to the $X$ nodes based on the $X$ stabilizer. (b) The Clifford encoder. Here $U_{1,2,3,\dots}$ are Clifford disentangling circuit, each of them is followed by a projective measurement of the disentangled qubit in $Z$ basis. The first two steps for stabilizers $s_1$ and $s_2$ are explicitly shown in the figure.}
    \label{fig: encode}
\end{figure*}

\begin{figure}
    \centering
    \includegraphics[width=0.75\linewidth]{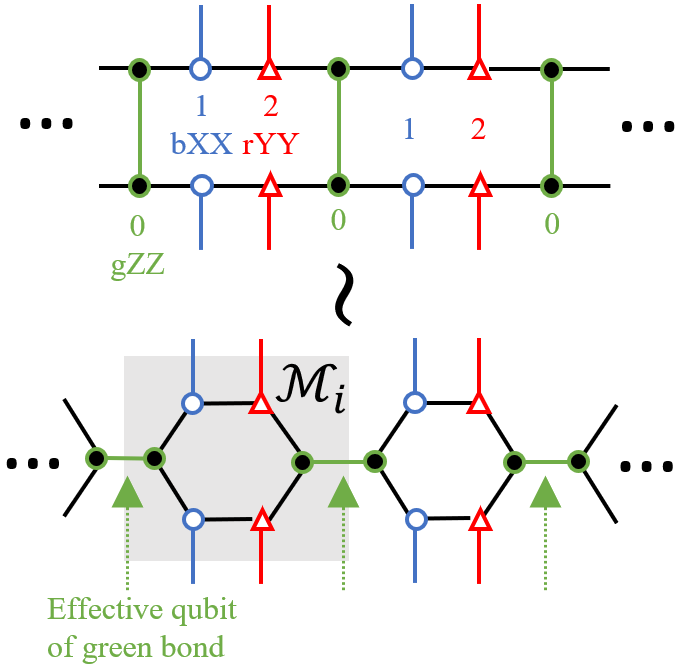}
    \caption{The ZX-diagram of the two qubits connected by a green bond in the HH Floquet code. Here the blue and red lines are connected to nearby qubits, which are not shown in the diagram. The coloring of these lines are mere indications of the actual direction of the leg on the lattice. Using rules of ZX calculus, we can deform the $ZZ$ pairwise measurement so that the effective qubit formed after the 0th round measurment is represented by a single line. The gadget $\M_i$ in this case, marked by the shaded region, has 1 incoming, 2 outgoing and 4 internal legs. The triangle nodes are $Y$ nodes, which are defined in Fig. \ref{fig: tcxyz}.}
    \label{fig: tceffective}
\end{figure}
\begin{figure*}
    \centering
    \includegraphics[width=0.65\linewidth]{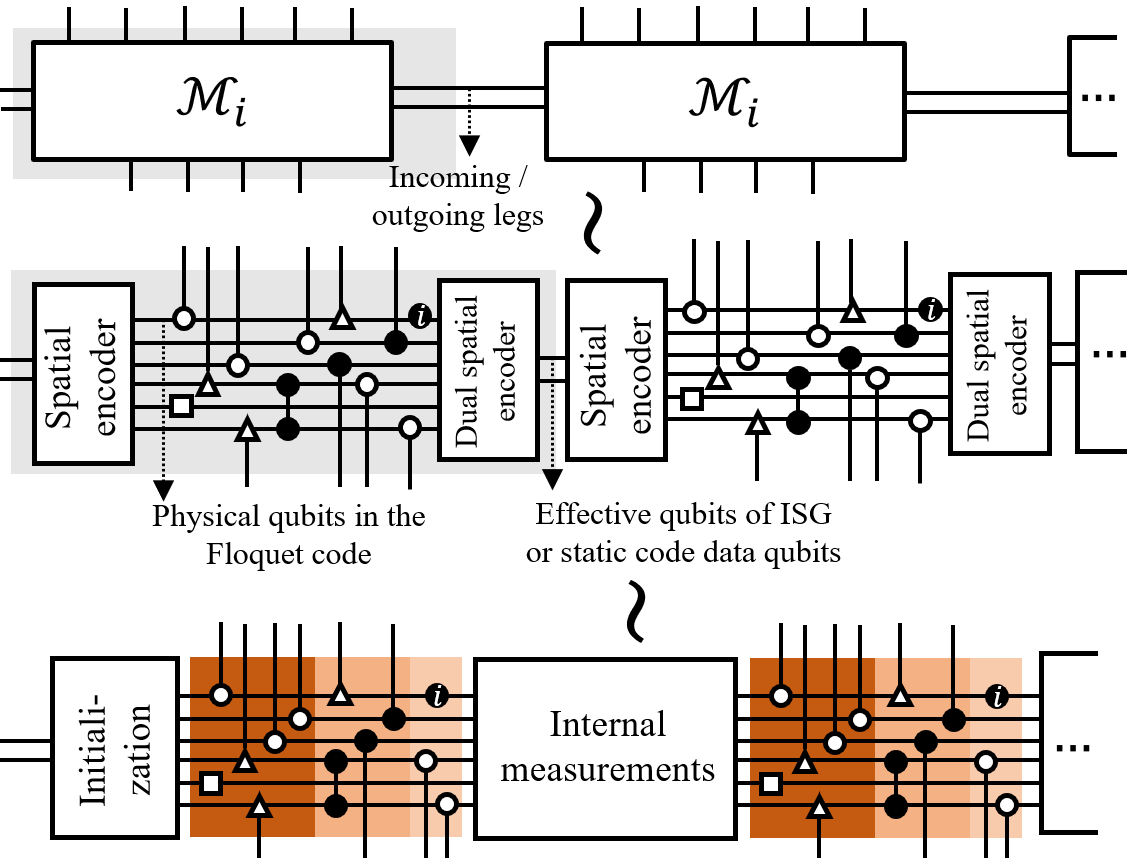}
    \caption{The workflow from the ZX-diagram in the gadget layout to the circuit layout of the Floquet code. The first step is to manipulate the ZX-diagram of the gadget $\M_i$ towards a three-stage layout in the second row, which exposes the spatial and temporal concatenation separately. The second step is to merge dual spatial encoder with the ensuing spatial encoder, which leads to a measurement circuit of all the stabilizers of the code in the spatial concatenation. For clarity, we deferred the time step assignment to the third row. For the example ZX-diagram in the figure, the internal legs can be packed into three time steps. The warm-up of the dynamical code includes the ``initialization" block (where physical qubits will be initialized in a fixed basis based on ZX-diagram of the spatial encoder), the inter-gadget dynamics and the following ``internal measurements" block.}
    \label{fig: floqueteff}
\end{figure*}

\subsubsection{Bringing the gadget layout to the circuit of the dynamical code}
\label{sec: fixfloquet}

Until this point, we have obtained the ZX-diagram of the circuit of the dynamical code. To physically implement the dynamical code, we have to bring the ZX-diagrams of the gadgets to an actual circuit layout. To this end, it is helpful to first gather some intuitions from existing examples of Floquet codes.

After each round of measurement, the physical qubits of a Floquet code are in a stabilizer code state, whose stabilizer group is known as the instantaneous stabilizer group (ISG). 
The ``data qubits" of the ISG are not single physical qubits, but rather effective qubits supported on a set of physical qubits as a result of measurements. In other words, the stabilizer group of interest is concatenated with a local code whose stabilizers are described by the measurements. 
In the ZX-diagram, every effective data qubit of the ISG can be represented via a single leg by deforming the ZX-diagram of the round of measurements. 
Consider again, for example, the HH Floquet code. 
After the measurement round 0 (gZZ), we can deform the pairwise measurement circuit so that each pair of qubits being pairwise measured merge into a single line, which represents the effective qubit created by the measurement (see Fig. \ref{fig: tceffective}). We can then treat the ZX-diagram between two 0 round measurements as a black box which has 1 incoming and 1 outgoing legs, and 4 internal legs which connect to nearby physical qubits. Therefore, we have brought the ZX-diagram of the HH Floquet code between measurement rounds 0, 1, 2 and 0 to a gadget-layout form, where the ISG of round 0 now becomes the incoming and outgoing stabilizer group of the dynamical code. 

Building upon the example of the HH Floquet code, to work our way from the gadget to a circuit layout of the dynamical code, we need to rewrite the ZX-diagram into three consecutive stages (see the second row of Fig. \ref{fig: floqueteff}): 
\begin{enumerate}
    \item The first stage is a spatial encoder that maps the $m$ data qubits in the incoming stabilizer code $\cS$ to the $m_F$ outgoing legs, which will become physical qubits of the final Floquet code. This stage can be understood as a ``spatial concatenation" of the incoming stabilizer code.
    \item The second stage is where $m_F$ physical qubits are engaged in entangling dynamics with physical qubits in other connected gadgets. Of course, there can still be intra-gadget gates and measurements during this stage. Over all, we want to ensure that the ZX-diagram in this stage has an actual \textit{circuit layout}. This means that when internal legs between different gadgets are connected during this stage of the ZX-diagram, they correspond to physically implementable two-qubit Clifford gates or pairwise measurements, whose representation in ZX-calculus are shown in Sec. \ref{sec: zxrules}.
    This stage can be regarded as ``temporal concatenation", since every physical qubit from the spatial concatenation now undergoes inter-gadget temporal dynamics, which measures/generates stabilizers of the stabilizer codes. This can be also seen as a time evolution of the spatial concatenation because it changes across the dynamical code.
    \item The third stage is a dual spatial encoder that map the $m_F$ physical qubits back to $m$ data qubits in the outgoing state code $\overline{\cS}$. 
\end{enumerate}

To reach the circuit layout of the Floquet code, we repeat the dynamical code over time (if it is directly repeatable). In this way, the dual spatial encoder and the ensuing spatial encoder merge into a single circuit block that is completely local to the gadget (see the third row of Fig. \ref{fig: floqueteff}). In fact, merging a dual spatial encoder and a subsequent encoder yields a local SASEC that measures all stabilizers in the local code represented by the spatial encoder\footnote{To see this, we consider the operator map of the merged circuit of the dual spatial encoder and the ensuing encoder, which has $m_F$ incoming and $m_F$ outgoing legs correspoding to the $m_F$ physical qubits in the gadget. This merged circuit maps all the incoming and outgoing stabilizer of the local code to $c\mathbbm{1}$, since they are stabilized the (dual) spatial encoder. It also acts trivially on the $m$ logical operators of the local code which are represented by the $m$ legs between the dual encoder and the encoder. Therefore, from the completeness of ZX-calculus\cite{backens_zx-calculus_2014,wetering_zx-calculus_2020,jeandel_completeness_2020}, the merged circuit must be ZX-equivalent to an SASEC of the local code. } (but not the full spatial code after concatenation). Therefore, we dub it the ``internal measurements" circuit in Fig. \ref{fig: floqueteff}. As such, the ZX-diagram of the internal circuit can be easily brought to a circuit layout where the number of physical qubits stays the same. 

The last subtle but important step towards the final Floquet measurement schedule is to ``synchronize" all pairs of connected gadgets from the final ZX-diagram of the gadget in the circuit layout. The necessity of synchronization arises for the following reason. When the ZX-diagram of the gadget is brought to the circuit layout using ZX-rules, there comes a natural temporal order of internal legs in the second stage where each internal leg can be assigned with a time step. In order for a pair of internal legs from two connected gadgets to be interpreted as a pairwise measurement (or a two-qubit unitary gate), the two internal legs must be at the same time step. However, this is not always guaranteed. One possible way of synchronization is to look for alternative ways that the gadget can be turned to a circuit layout, which is usually not unique, as indicated in Sec. \ref{sec: fixzx}. The temporal order of each internal legs will be different in these other circuit layouts. Therefore, switching one of the two gadgets into this alternative circuit layout may synchronize a connected pair of internal legs that were not synchronized before. However, it is totally possible that no matter how one alters the circuit layout of each gadget, all the internal legs are still not fully synchronized.
If this happens, a simple solution is to introduce one additional ancilla for each unsynchronized pair of legs. 

Fortunately, if the stabilizer code is defined with data qubits on two sublattices, we can always find a gadget layout so that the connected gadgets form a bipartite graph where the gadget on one sublattice only connects to gadgets of the other sublattice. In this case, synchronization is always possible without the need of extra ancillas: once the time steps of the gadgets on one sublattice are fixed by the circuit layout of the ZX-diagram, the schedule of the gadgets on the other sublattice can always be arranged accordingly due to the bipartite connectivity. In most of our examples in this paper, the connected gadgets indeed form bipartite graphs, with the only exception being the Floquet Haah code. 

We note that, in the circuit layout, the warm-up stage of dynamical code corresponds to the combination of the ``initialization" block of spatial concatenation, inter-gadget dynamics and the ``internal measurements" block in the third row of Fig. \ref{fig: floqueteff}. The data qubits, represented by the $n$ incoming legs, can be initialized in an arbitrary initial state, while the logical space will be dynamically generated and stabilized once we go through these three steps. The fault-tolerance of this procedure will be elaborated on in more detail in Sect. \ref{sec: spacetimed}.

Before moving on, we point out that the ZX-diagram of the gadget can be brought to multiple different three-stage layouts, depending on the (dual) spatial encoder we choose. Switching between different spatial encoders may change the number of physical qubits and also the depth of the circuit of the dynamical code. We will analyze these aspects in more detail in Sec. \ref{sec:resource}. Moreover, the spatial encoder and the dual spatial encoder do not have to the ones of the same stabilizer code. This guarantees that every ZX-diagram of the gadget can be turned into the three-stage layout.

\section{Floquet code protocols using spacetime concatenation}
\label{sec:examples}

In this section, we illustrate the versatility of our spacetime concatenation framework through concrete examples of Floquet code constructions. We begin by reproducing all examples of Floquet toric codes, showing how our formalism naturally reproduces these well-known examples. We then present new constructions that extend beyond previously studied cases, such as the Floquet bivariate bicycle code and the Floquet Haah code. These examples highlight the generality of our modular approach in producing Floquet codes aka dynamically generated logical spaces. We also provide, in Appendix~\ref{subsec:steane}, an instructive and minimal example: a Floquet version of the Steane code constructed using gadgets of the Floquet color code in Appendix~\ref{sec: csscc}. In addition to this, we also provide the Floquet checkerboard model in Appendix~\ref{sec: csscb} and a pairwise measurement $\Z_2^{(1)}$ subsystem code in Appendix~\ref{sec: fermion}.
These examples serve as a pedagogical illustration of the principles underlying our framework, particularly how familiar static codes can be adapted to measurement-driven Floquet dynamics.     
For each construction, we explicitly verified that the MLSC is satisfied.

\subsection{Primer: reproducing all Floquet toric codes}

\begin{figure}
    \centering
    \includegraphics[width=0.8\linewidth]{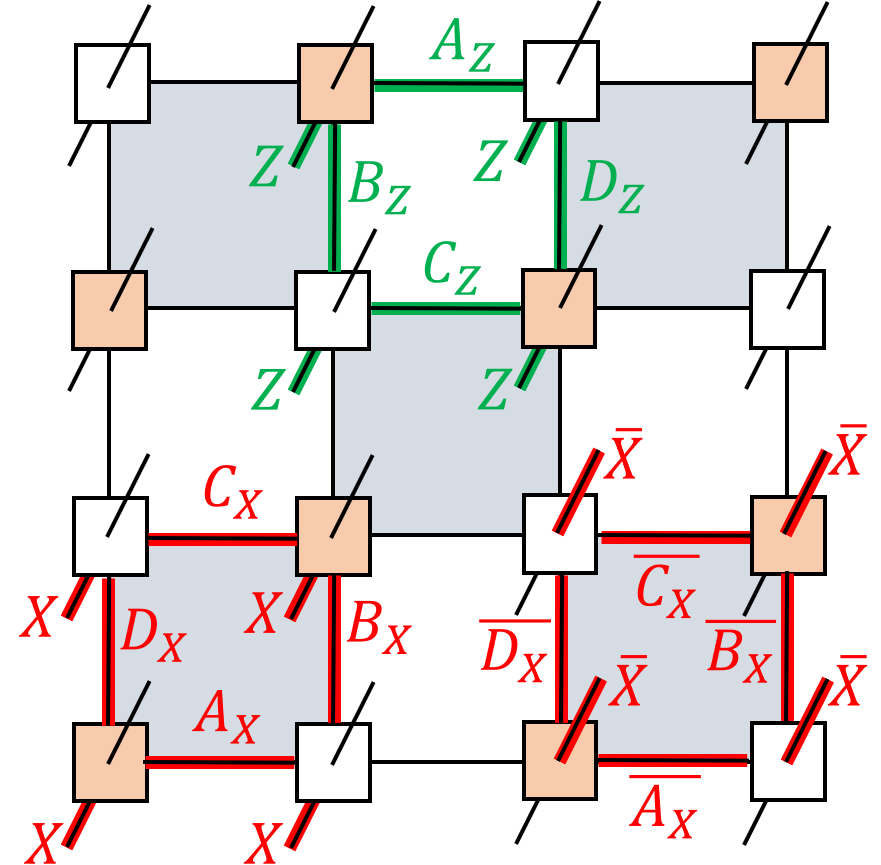}
    \includegraphics[width=0.45\linewidth]{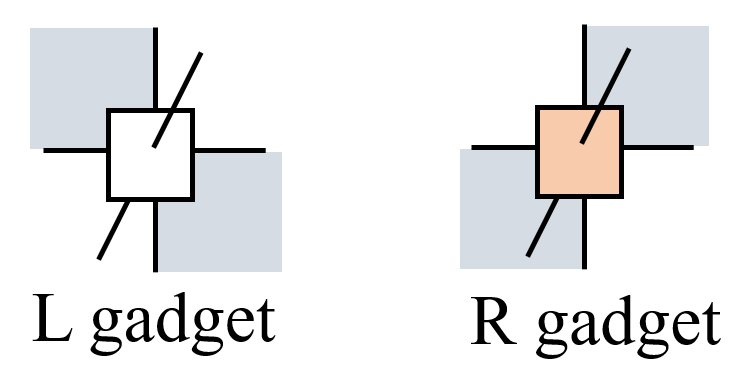}
    \caption{The gadget layout for toric code. Here the L/R gadget are marked by white/orange shaded squares, the incoming/outgoing legs are pointing to the bottom left/upper right, and the internal legs are connected in the plane. The weight-4 $X$/$Z$ stabilizers are located at the shaded/white squares. The incoming/outgoing $X$ and $Z$ operators are mapped to bond operators which are supported on the internal legs, which are denoted by $A_{X/Z}$ to $D_{X/Z}$ and $\overline{A_{X/Z}}$ to $\overline{D_{X,Z}}$.}
    \label{fig: tcgadget}
\end{figure}
\begin{figure*}
    \centering
    \includegraphics[width=0.65\linewidth]{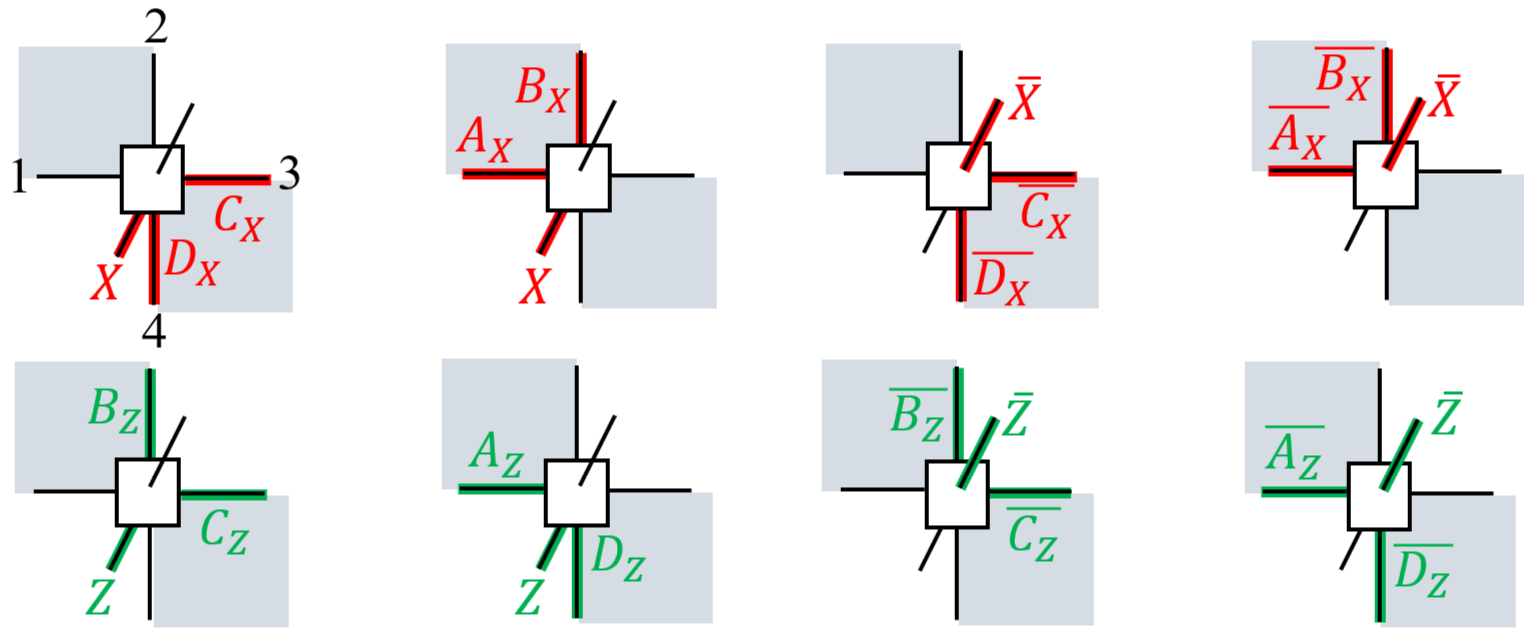}
    \caption{ The encoding map of the incoming/outgoing $X$ and $Z$ operators of the L gadget from the parametrized bond operators in Fig. \ref{fig: tcgadget}. The four bond directions of the internal legs are marked in the first figure from 1 to 4. We use $\overline{X}$ and $\overline{Z}$ to indicate Pauli operators supported on the outgoing leg.}
    \label{fig: tcparam}
\end{figure*}
\begin{figure*}
    \centering
    \includegraphics[width=0.7\linewidth]{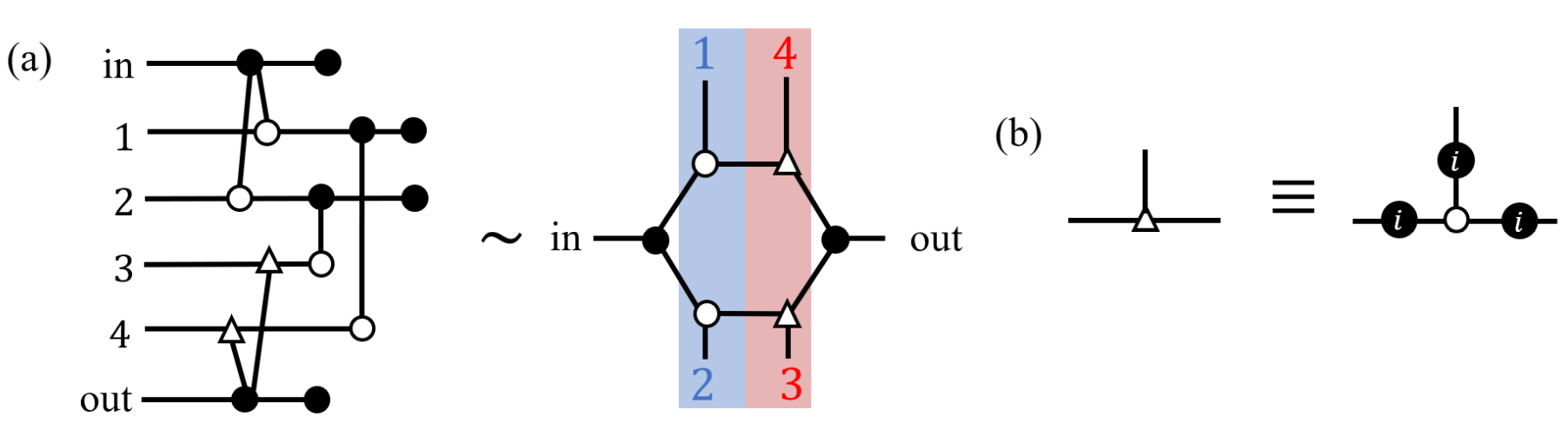}
    \caption{(a) The ZX-diagrammatic Clifford encoder from the complete stabilizer tableau in Eq. \eqref{eq: tctableau} with $P=X$ and $Q=Y$. Here we choose to disentangle the incoming (in) and outgoing (out) legs from the internal legs first. The diagram can be brought to a circuit layout where the spatial concatenation is two-qubit repetition code, and the internal legs form two time steps marked by blue/red shades. (b) We define the Y node, denoted by a triangle on the ZX-diagram, to be an $X$ node whose every leg being conjugated by the $S$-gate, which is a $\pi/2$ phase $Z$ node.}
    \label{fig: tcxyz}
\end{figure*}
\begin{figure}
    \centering
    \includegraphics[width=0.7\linewidth]{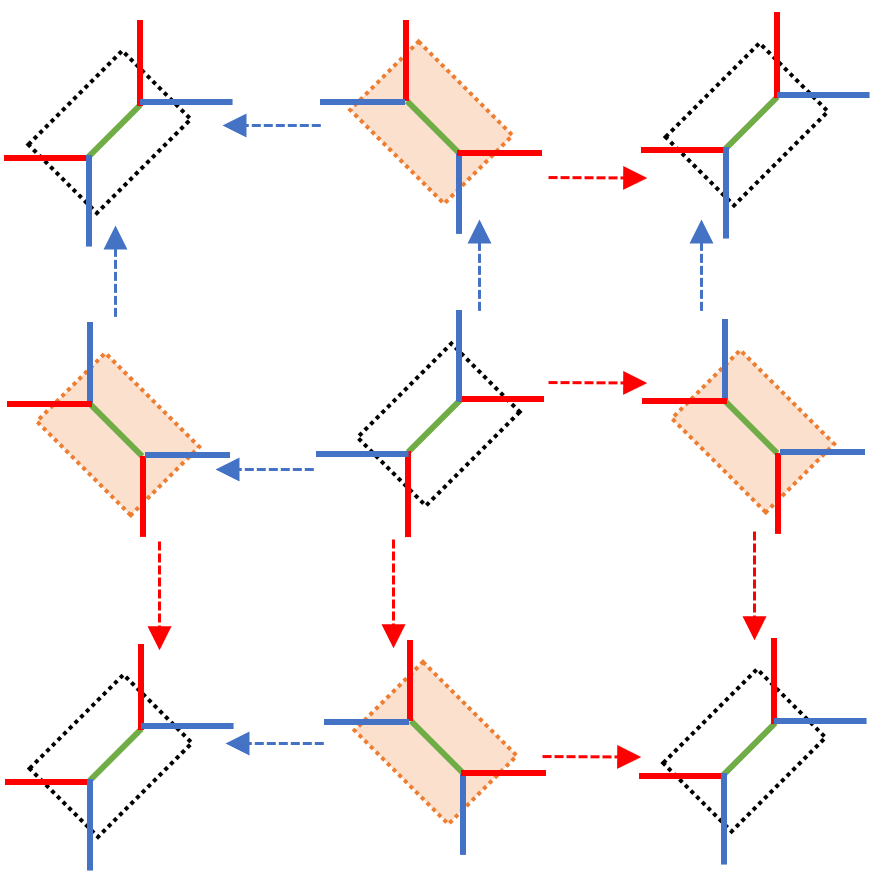}
    \caption{The synchronization between the L and R gadgets for the Floquet toric code in a planar layout, which are marked by dotted rectangles in white and orange, respectively. From the ZX-diagram in Fig. \ref{fig: tcxyz}, the internal legs $1$ and $2$ ($3$ and $4$) of the L gadget are the first (second) time step, which are marked in blue (red) in the figure to be consitent with the coloring of time steps in Fig. \ref{fig: tcxyz}. The orientation of the nearby R gadgets is adjusted according to the L gadget in the middle. Further steps of synchronization are marked by dashed arrows. The outcome is the square-octagon layout of the physical qubits.}
    \label{fig: tcsync}
\end{figure}

As a simple illustration of our approach, we discuss the construction of a dynamical code for 2D toric code with the surface code layout, i.e. the data qubits are located at the lattice sites of a 2D square lattice and the weight-4 $X$ and $Z$ stabilizers are defined on alternating squares, see Fig. \ref{fig: tclattice}. We mainly focus on the case where the incoming and outgoing stabilizer groups are exactly the same toric code, that is, $\cS=\overline{\cS}$.

The lattice sites in this case can be divided into two sublattices, which we will refer to as left (or simply L) and right (or simply R)\footnote{The names will be related to the L/R qubits and L/R gadgets in our discussion in the Floquet BB code in Sec. \ref{sec: bicycle}}.
To perform the gadget layout, we replace each qubit with a gadget with one incoming and one outgoing leg. Gadgets that replace the L/R qubits will be referred to as L/R gadgets.  Given the underlying square lattice of the toric code, we consider the internal leg layout along the square lattice so that each L gadget is connected to four nearby R gadgets; see Fig. \ref{fig: tcgadget}. In this way, we only need to consider the encoding maps of the L (or R) gadgets, after which the encoding map of the R (or L) gadgets will be specified accordingly.

With such an internal leg layout, we show that the most leg-efficient Clifford dynamical code is the HH Floquet code, while the CSS Floquet toric code and the SASEC are both the most efficient CSS dynamical code. In fact, in Sec. \ref{sec:classification}, we prove that the CSS Floquet toric code and the SASEC are equivalent to each other on the level of ZX-diagram, which is also shown in Ref. \cite{bauer_xy_2024}.

\subsubsection{Clifford encoding}\label{sec: tcclifford}
Following the algorithm in Sec. \ref{sec: compile}, we need to first parametrize the bond operators for each incoming and outgoing $X$ and $Z$ stabilizer. From the SLPC, we assume that the mapping of the $X$ Pauli operator is supported on the bonds on which the $X$ stabilizer loop of the incoming stabilizer code is mapped to. We will see below that solutions with such an assumption exist. We label the four bond operators around the loop of the incoming $X$ stabilizer as $A_X$, $B_X$, $C_X$ and $D_X$ in Fig. \ref{fig: tcgadget}. We further assume that the gadgets are translation invariant, so that each incoming $X$ stabilizer is mapped to a translation of the same set of bond operators.
The outgoing $X$, incoming $Z$, and outgoing $Z$ stabilizers are parametrized similarly. From the bond operators, we obtain eight parametrized encoding maps of the L gadget given by the bond operators of the incoming/outgoing $X$ and $Z$ stabilizers, which are shown in Fig. \ref{fig: tcparam}.

The consistency equation can be written down using the encoding maps as follows. 
For example, according to Fig. \ref{fig: tcparam}, the incoming $X$ can be mapped to bond operators $X\to A_{X,1}B_{X,2}$, which means the bond operator $A_{X}$ on the bond along direction 1 together with $B_X$ on the bond along direction 2. Similarly, the incoming $Z$ can be mapped to $Z\to A_{X,1}D_{Z,4}$. We require the algebra of the incoming operators to be preserved under this mapping. Then, to be consistent with the anticommutation relation between the incoming $X$ and $Z$, the two sets of bond operators that are mapped from $X$ and $Z$ must anticommute. Since they only overlap in direction 1, we have $\{A_X,A_Z\}=0$. Repeating this for every pair of maps of Pauli $X$ and $Z$ operators that overlap on at least one direction, we arrive at the following set of equations:
\begin{align}\label{eq: tceqs}\nonumber
    &\{A_X,A_Z\}=\{\overline{A_X},\overline{A_Z}\}=0,\ (A\leftrightarrow B,C,D),\\\nonumber
    &[A_X,\overline{A_Z}]=[\overline{A_X},A_Z]=0,\ (A\leftrightarrow B,C,D),\\\nonumber
    &[A_XB_X,\overline{A_X}\overline{B_X}]=0,\ [C_XD_X,\overline{C_X}\overline{D_X}]=0,\\
    &[A_ZB_Z,\overline{A_Z}\overline{B_Z}]=0,\ [C_ZD_Z,\overline{C_Z}\overline{D_Z}]=0.
\end{align}
The first row of equations come from the anticommutation relation between incoming/outgoing $X$ and $Z$ operators; $(A\leftrightarrow B,C,D)$ means there are three more equations in the first row, where $A$ is replaced by $B,\ C$ and $D$. The second row of equations arise since incoming $X$($Z$) operator needs to commute with outgoing $\overline{Z}$($\overline{X}$) operators; again there are three more equations where $A$ is replaced by $B,\ C$ and $D$. The third and fourth rows are due to the commutation between incoming $X$($Z$) and outgoing $\overline{X}$($\overline{Z}$) operators. Here we use $\overline{X}$ and $\overline{Z}$ to denote outgoing Pauli operators.

The set of consistency equations admits simple solutions where every bond operator is a single-qubit Pauli operator as follows:
\begin{align}\label{eq: tcpaulimat}\nonumber
    A_X=B_X=C_X=D_X=P,\\\nonumber
    A_Z=B_Z=C_Z=D_Z=Q,\\\nonumber
    \overline{A_X}=\overline{B_X}=\overline{C_X}=\overline{D_X}=Q,\\
    \overline{A_Z}=\overline{B_Z}=\overline{C_Z}=\overline{D_Z}=P,
\end{align}
where $P$ and $Q$ are two different Pauli operators in $\{X,Y,Z\}$. Therefore, the most leg-efficient Clifford dynamical toric code only needs one internal leg per bond. In fact, all possible choices of $P$ and $Q$ lead to dynamical code equivalent up to a bond-local Clifford unitary on the internal legs that changes $P$ and $Q$. Choosing these bond operators completely fixes the encoding map of the gadget. To see this, we write down the stabilizer tableau of the gadget using the graphical representation of the encoding map in Fig. \ref{fig: tcparam}, which reads 
\begin{equation}\label{eq: tctableau}
    \left[\begin{array}{c|cccc|c}
        X & P & P & I & I & I  \\
        X & I & I & P & P & I\\
        Z & I & Q & Q & I & I\\
        Z & Q & I & I & Q & I\\
        I & I & I & Q & Q & X\\
        I & Q & Q & I & I & X\\
        I & I & P & P & I & Z\\
        I & P & I & I & P & Z\\
    \end{array}\right].
\end{equation}
Here each row is a spacetime stabilizer, and the first and last columns corresponds to the incoming and outgoing legs, while the second to fifth columns are the four internal legs in the directions from 1 to 4 in Fig. \ref{fig: tcparam}. Since the rank of the tableau is equal to the total number of legs of the gadget $6=2m+n_L=2+4$, the tableau is complete. 

We now show that the solution with $P=X$ and $Q=Y$ leads to the HH Floquet code. Using the method outlined in Sec. \ref{sec: fixzx} for Clifford encoders, we construct the ZX-diagram which produces the encoding maps of the L gadget in Fig. \ref{fig: tcxyz} with the solution $P=X$ and $Q=Y$ in Eq. \eqref{eq: tcpaulimat}. After bringing the diagram to the three-stage form, as shown in Fig. \ref{fig: floqueteff}, we see that the spatial concatenation inside the gadget is a two-qubit repetition code, i.e. the original data qubit is replaced by two physical qubits in the Floquet code. Furthermore, internal legs form two time steps inside the gadget: legs 1 and 2 in the first time step \textcircled{\raisebox{-0.9pt}{1}}, while legs 3 and 4 at the second step \textcircled{\raisebox{-0.9pt}{2}}. 
We now fix the spatial layout using the ZX-diagram of the L gadget and synchronize the R gadgets accordingly, which is illustrated in Fig. \ref{fig: tcsync}.  We first fix the orientation of one L gadget according to the ZX-diagram in Fig. \ref{fig: tcxyz}(a). From Fig. \ref{fig: tcgadget}, we see that the R gadget is essentially the L gadget rotated by $90^\circ$, which can be either clockwise or anticlockwise. To lay the physical qubits and the internal legs in the plane, we choose the direction of rotation of the R gadgets based on the orientation of the L gadget that we fixed. In this way, the orientations of the four nearest-neighbor R gadgets are fixed. The other gadgets can be synchronized similarly. We see that the outcome is exactly the square-octagon layout of the HH Floquet code in Fig. \ref{fig: tclattice}. Since the two internal legs in step \textcircled{\raisebox{-0.9pt}{1}} are both connected to $X$ nodes in the ZX-diagram, time step \textcircled{\raisebox{-0.9pt}{1}} consists of two-qubit pairwise measurements in $X$. SImilarly, step \textcircled{\raisebox{-0.9pt}{2}} represents pairwise measurements in $Y$. The measurement schedule of the Floquet toric code is thus given by gZZ, bXX and rYY where gZZ corresponds to the spatial concatenation, bXX corresponds to the XX measurements in the first time step inside the gadget and rYY corresponds to the YY measurements in the second time step inside the gadget. 

The logical automorphism of the HH Floquet code can also be seen from the gadget layout. We denote the two sets of logical operators in the toric code as as \( (\mathbb{X}_1, \mathbb{Z}_1) \) and \( (\mathbb{X}_2, \mathbb{Z}_2) \), where $\mathbb{X}_1$/$\mathbb{Z}_1$ are products of lines of $d$ Pauli $X$/$Z$ operators in the horizontal/vertical direction of the square lattice, and vice versa for $\mathbb{X}_2/ \mathbb{Z}_2$.  From the stabilizer tableau in Eq. \eqref{eq: tctableau}, we see that the gadget is stabilized by $X|PIPI|Z$. Now consider one of the $X$ logical operators that extends in the bond directions 1 and 3. Given the gadget stabilizer, we see that the incoming logical $X$ operator $\mathbb{X}_1$ is mapped by a line of gadgets to the logical $Z$ operator $\mathbb{Z}_2$ of the logical qubit supported on the outgoing legs, which extends in the same direction (see Fig. \ref{fig: tclogical}). Therefore, the dynamical code representing one period of the HH Floquet code performs a logical Hadamard gate plus swapping the two logical qubits:
\[
\mathbb{X}_1 \to \Z_2, \quad \Z_1 \to \mathbb{X}_2,
\]
\[
\mathbb{X}_2 \to \Z_1, \quad \Z_2 \to \mathbb{X}_1.
\]
This corresponds to the logical Hadamard on each qubit, combined with a SWAP between the logical qubits:
\[
(H \otimes H) \cdot \text{SWAP}.
\]

\begin{figure}
    \centering
    \includegraphics[width=0.8\linewidth]{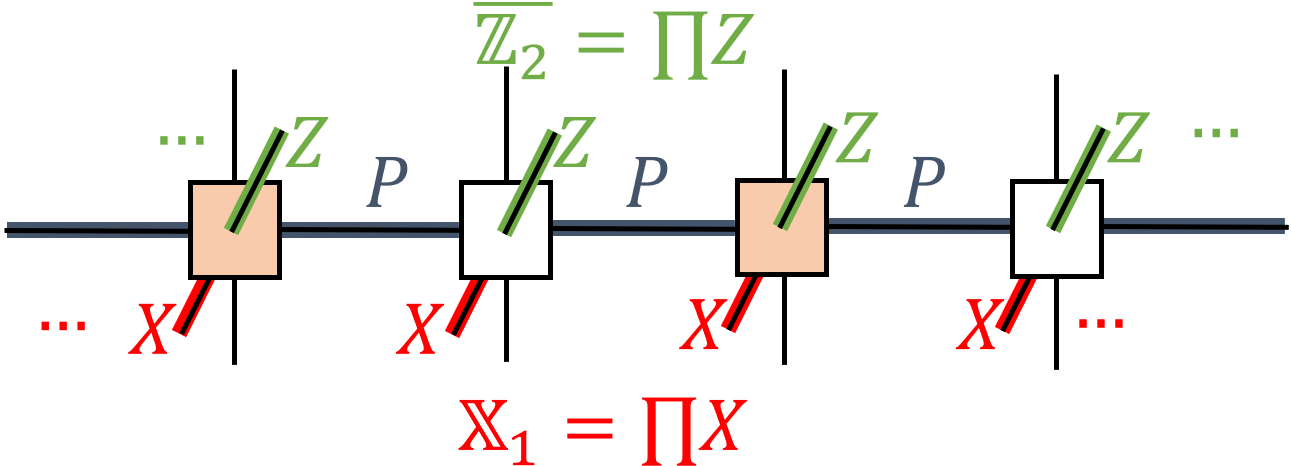}
    \caption{The logical automorphism of the HH Floquet code from gadgets. From the encoding map of the gadget, the incoming $X$ logical is mapped to the outgoing $Z$ logical. }
    \label{fig: tclogical}
\end{figure}

\subsubsection{CSS encoding}\label{sec: tcccs}

We now consider CSS encodings, where bond operators defined in Fig. \ref{fig: tcparam} are strings of $X$ or $Z$ Pauli operators, depending the subscript of the bond operator. Now the equations in Eq. \eqref{eq: tceqs} cannot be solved with every bond operator being a single Pauli operator. Instead, we find the following solution with weight-2 bond operators:
\begin{align}\label{eq: tccss}\nonumber
    A_X=B_X=C_X=D_X=XI,\\\nonumber
    A_Z=B_Z=C_Z=D_Z=ZI,\\\nonumber
    \overline{A_X}=\overline{B_X}=\overline{C_X}=\overline{D_X}=IX,\\
    \overline{A_Z}=\overline{B_Z}=\overline{C_Z}=\overline{D_Z}=IZ.
\end{align}
The rest of the solutions are all bond-unitary equivalent to the one above. The encoding map with the solution in Eq. \eqref{eq: tccss} is not maximal. To see this, we calculate the rank of the $X$ encoding matrix
\begin{align}\nonumber
    \text{rank}\left(H_X\right)=&\text{rank}\left(\begin{array}{c|cccc|c}
        1 & [A_X] & [B_X] & 00 &00 & 0\\
        1 & 00 & 00&[C_X]&[D_X] & 0\\
        0&[\overline{A_X}]&[\overline{B_X}]&00&00 & 1\\
        0&00&00&[\overline{C_X}]&[\overline{D_X}]& 1
    \end{array}\right)\\
    =&\text{rank}\left(\begin{array}{c|cccc|c}
        1 & 10 & 10 & 00 &00 & 0\\
        1 & 00 & 00&10&10 & 0\\
        0&01&01&00&00&1\\
        0&00&00&01&01&1
    \end{array}\right)=4.
\end{align}
Here, $[A_X]$ denotes the binary vector corresponding to the $X$ Pauli string of $A_X$. Similar to Eq. \ref{eq: tcpaulimat}, the two columns on the left of the vertical line are incoming and outgoing legs, and the four columns on the right are internal legs in the direction 1 to 4 in Fig. \ref{fig: tcparam}. The rank of $Z$ encoding matrix is $\text{rank}(H_Z)=4$ from a similar calculation. Therefore, the rank of the encoding map is $\text{rank}(H_X)+\text{rank}(H_Z)=8<2m+n_L=10$, which means we need two extra stabilizers, one $X$ and one $Z$. There are only two different choices here. The one that preserves the logical infornation is to further stabilize $(XX)_2 (XX)_3$ and $(ZZ)_1(ZZ)_2$\footnote{The other choice, which is to stabilize $(XI)_2(XI)_3$ and $(IZ)_1(IZ)_2$, leads to a gadget that directly measures the incoming $X$ logical operators. We formally avoid such solutions using MLSC stated in Sec. \ref{sec:fault_tolerance}.}.
With these internal stabilizers, the CSS encoder can be compiled using the algorithm in Sec. \ref{sec: fixzx}, as shown in Fig. \ref{fig: tczx}. Following a similar procedure as in Fig. \ref{fig: tcsync}, we see that  the connected L and R gadgets lead to the CSS Floquet toric code on the square-octagon lattice in Fig. \ref{fig: tclattice} with a 6 round schedule gZZ, bXX, rZZ, gXX, bZZ and rXX. Therefore, we recover the CSS Floquet toric code proposed in Refs. \cite{kesselring_anyon_2024,davydova_floquet_2023}.

In Sec. \ref{sec:classification}, we prove that the CSS Floquet toric code is, in fact, equivalent to the SASEC that measures all the stabilizers of the toric code once in the sense that the ZX diagrams of the two circuits are equivalent under ZX-rewrite rules. Furthermore, they are also equivalent to two rounds of the HH Floquet code.

\begin{figure}
    \centering
    \includegraphics[width=\linewidth]{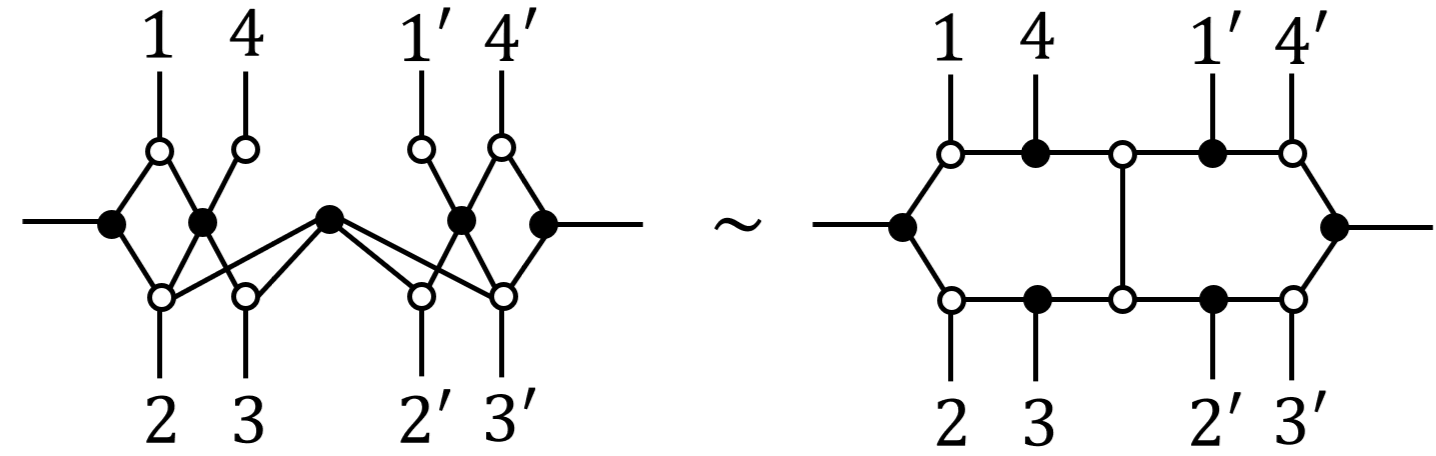}
    \caption{The ZX-diagrams of the gadgets that produces the encoding maps given by the solution in Eq. \eqref{eq: tccss} with two additional stabilizers $(XX)_2 (XX)_3$ and $(ZZ)_1(ZZ)_2$. Following the algorithm for CSS encoder in Sec. \ref{sec: fixzx},the ZX-diagram is constructed using only the $X$ encoding maps and stabilizers. The unprimed and primed legs represent the first and second legs in each direction.}
    \label{fig: tczx}
\end{figure}

\subsection{Floquet BB code}\label{sec: bicycle}

\begin{figure}
    \centering
    \includegraphics[width=0.75\linewidth]{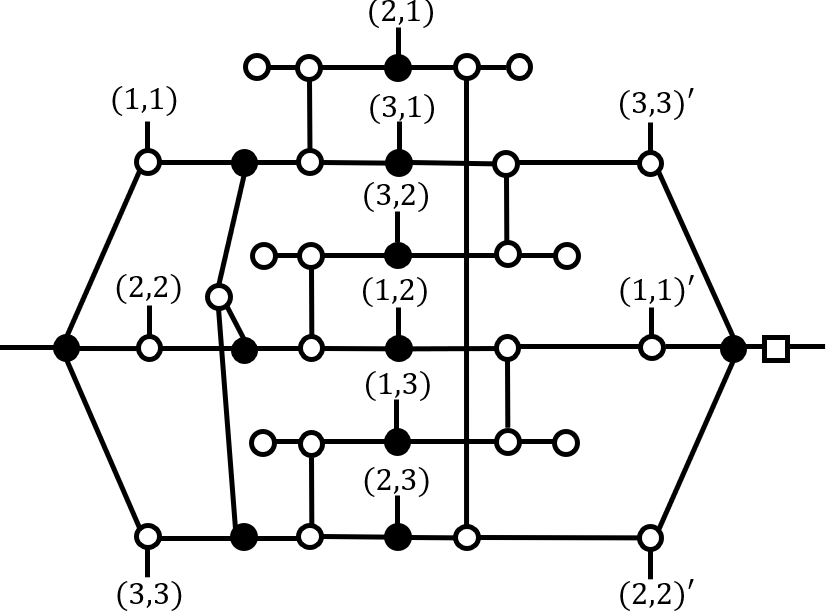}
    \caption{The ZX-diagram of the gadget that produces the encoding map given by Eq. \eqref{eq: bbsol} in the circuit layout with 6 physical qubits. See Fig. \ref{fig: bbzx4} for derivation.}
    \label{fig: bbzx}
\end{figure}

\begin{figure*}
    \centering
    \includegraphics[width=0.75\linewidth]{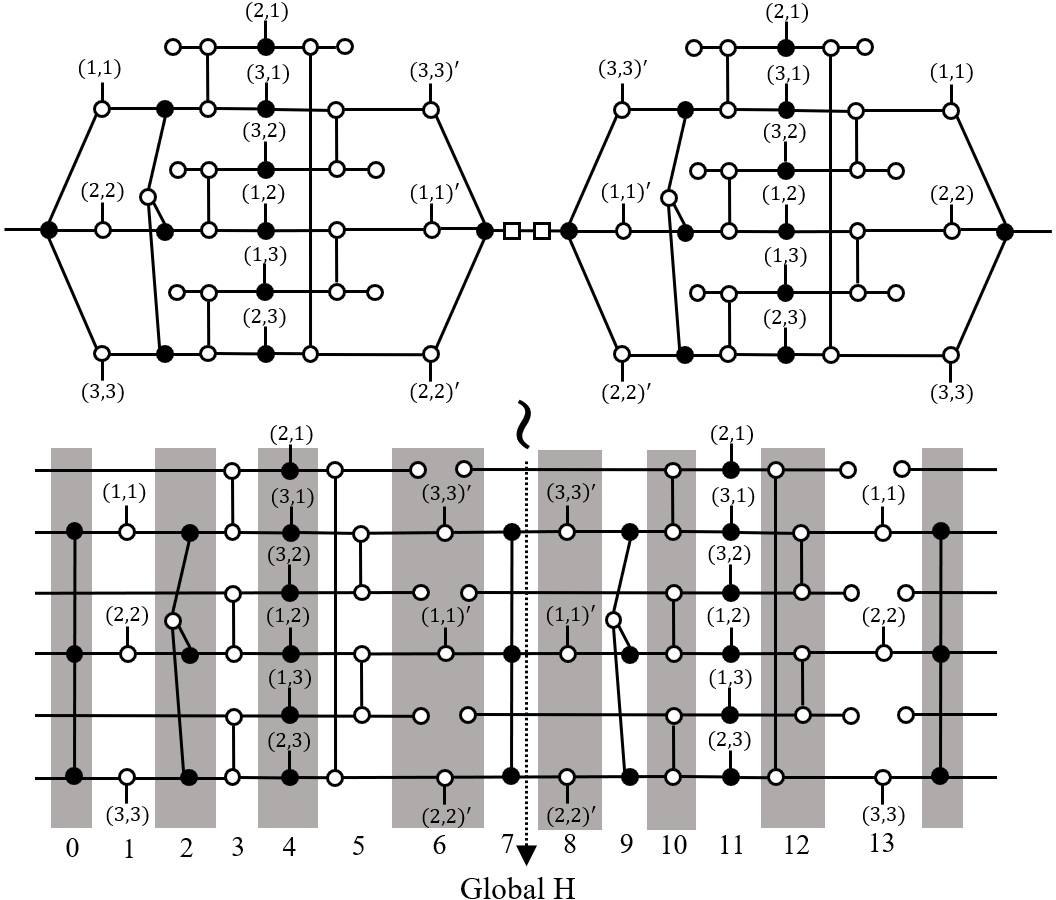}
    \caption{From the ZX-diagram in Fig. \ref{fig: bbzx} of the Floquet BB code followed by the dual gadget, the two round schedule schedule can be packed into 14 step, which are alternatively shaded/unshaded in the figure for clarity. The steps 0 and 7 are pairwise measurement in $Z$ basis. The steps 1, 6, 8 and 13 are inter-gadget pairwise measurements of $X$, while the three of the six qubits are measured in $Z$ basis at steps 6 and 13. Steps 2 and 9 are measurements of the three body $ZZZ$ operator. Steps 3, 5, 10 and 12 are intra-gadget pairwise measurements in $X$. And steps 4 and 11 are inter-gadget pairwise measurements in $Z$.  The cancellation of the Hadamard gate through rewinding means the ISG after step 7 is in the BB stabilizer code conjugated by a global Hadamard gate, i.e. a ``dual" BB code where the $X$ and $Z$ stabilizers switch places. }
    \label{fig: bbschedule}
\end{figure*}
\begin{figure}
    \centering
    \includegraphics[width=\linewidth]{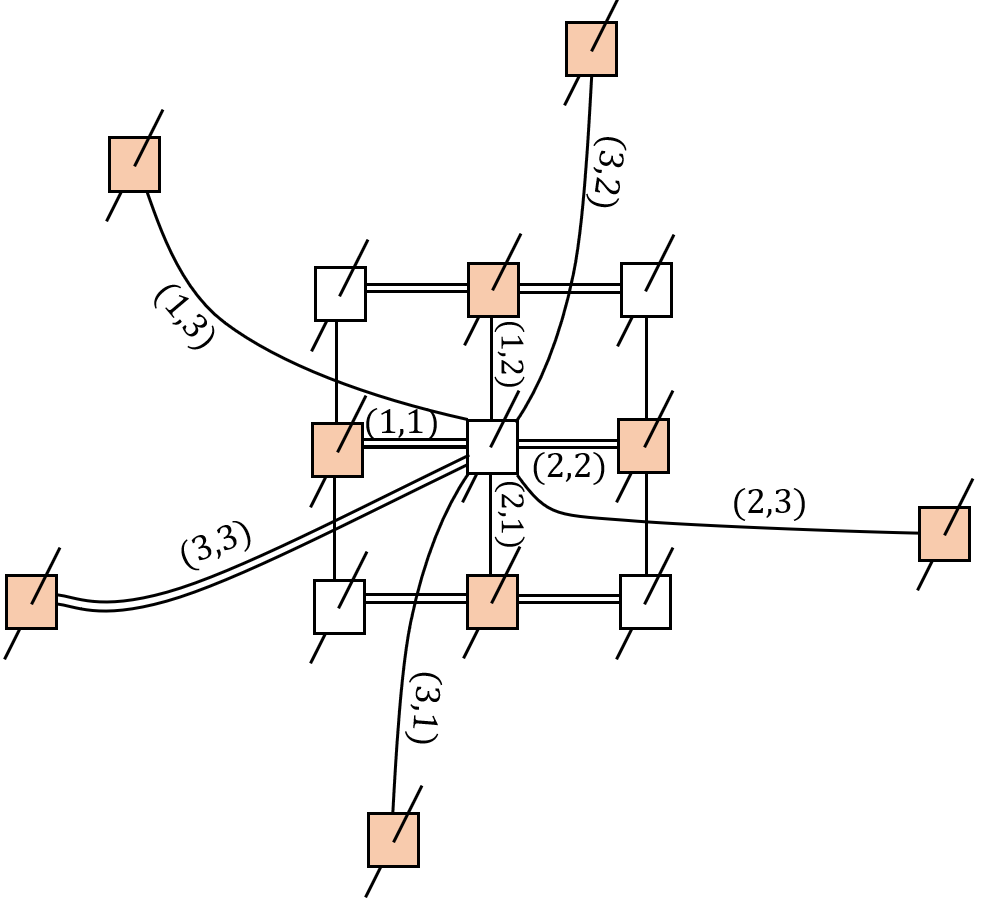}
    \caption{Connection between L and R gadgets on a toric layout of the Floquet BB code, where the four directions $(\alpha,\beta)$ with $\alpha,\beta=1,2$ form bonds on the square lattice, and the rest of the bond directions are long-range connections. The number of internal legs in the figure reflects the actual number of internal legs as given by Eq. \eqref{eq: bbsol}. Note that the square lattice has been rotated by $45^\circ$ comparing to the toric layout of the Tanner graph in Ref. \cite{bravyi_high-threshold_2024}.}
    \label{fig: bbconnection}
\end{figure}

The bivariate bicycle (BB) code, proposed in Refs. \cite{bravyi_high-threshold_2024,panteleev_degenerate_2021}, represents a large class of qLDPC codes
with a high encoding rate $k/n$ and a high error threshold. To demonstrate the capability and flexibility of spacetime concatenation, we construct an efficient Floquet BB code that only contains pairwise measurements of Pauli $X$ and $Z$ operators while maintaining the underlying graph-connectivity of the BB stabilizer code to satisfy SLPC.

We first briefly review the construction of the BB stabilizer code. We define the cyclic permutation matrix $S_m$ of dimension $m$ as $(S_m)_{i,j}=\delta_{j,i+1\mod m}$, where $i,j$ are the indices of the rows and columns, respectively. We then define $x=S_p\otimes \mathbbm{1}_m$ and $y=\mathbbm{1}_p\otimes S_m$, where $\mathbbm{1}_m$ is the identity matrix of dimension $m$. The parity check matrices are then defined as (to avoid confusion, we use $\mathcal{H}_{X,Z}$ to denote check matrices of stabilizer codes)
\begin{equation}\label{eq: bbcheck}
    \mathcal{H}_X=(\mathcal{A}\vert \mathcal{B}),\ \mathcal{H}_Z=(\mathcal{B}^T\vert \mathcal{A}^T),
\end{equation}
where
\begin{equation}
    \mathcal{A}=\mathcal{A}_1+\mathcal{A}_2+\mathcal{A}_3,\ \mathcal{B}=\mathcal{B}_1+\mathcal{B}_2+\mathcal{B}_3
\end{equation}
and $\mathcal{A}_{1,2,3},\mathcal{B}_{1,2,3}$ are monomials of permutation matrices $x$ and $y$ of the form $x^q y^r$, where $q=0,1,\dots,p-1$ and $r=0,1,\dots,m-1$. In this way, each row $\vec{v}$ of $\mathcal{H}_{X}$ represents an $X$ stabilizer $\prod_{j=1}^{2pm}X_j^{v_j}$, the left $pm$ columns are $pm$ data qubits, which we will refer to as ``left qubits" (or simply L qubits), while the right $pm$ columns are similarly referred to as ``right qubits" (or simply R qubits). The total number of data qubits is thus $n=2pm$. The number of logical operators $k$ depends on the choice of the six monomials in $\mathcal{A}$ and $\mathcal{B}$, and it is computationally costly to search for a good combination of $p,m$ and the monomials that yields a high encoding rate. An example is given in Ref. \cite{bravyi_high-threshold_2024} is $\mathcal{A}=x^3+y+y^2$ and $\mathcal{B}=y^3+x+x^2$ with $p=2m=12$, which encodes $k=12$ logical qubits. 

Meanwhile, our construction of Floquet BB code applies with any  choice of the monomials\footnote{Although our construction does not rely on a specific form of the monomials, we do assume that the Tanner graph of the BB code is fully connected, i.e. the monomials $\mathcal{A}_\alpha^{-1}\mathcal{A}_\beta$ and $\mathcal{B}_\alpha^{-1}\mathcal{B}_\beta$ ($\alpha,\beta=1,2,3$) should generate the entire group $\Z_p\times\Z_m$ of the monomials. This ensures that the gadget layout also forms a connected graph. } . Before deploying our algorithm, we need to understand the connectivity between the data qubits in the BB code. 
Given the form of the parity check matrices in \eqref{eq: bbcheck}, each $X$ and $Z$ stabilizer consists of three L and three R qubits, and each L or R qubit participates in three $X$ and three $Z$ stabilizers. Each pair of L and R qubits shares either no or two common stabilizers, one $X$ and one $Z$, while each pair of two L qubits shares at most a single common stabilizer $X$ or $Z$. Therefore, to replace the L/R qubits with L/R gadgets, we need a bipartite connection so that each L gadget is connected with nine R gadgets, as given by the Algorithm \ref{alg: gadgetcon}. Their relative positions on the graph are represented by monomials $\mathcal{A}_\alpha^{-1}\mathcal{B}_\beta$ ($\alpha,\beta=1,2,3$). Therefore, we label the corresponding nine bond directions as $(\alpha,\beta)$. 

We now write down the Clifford encoding map from \eqref{eq: bbcheck} and the gadget connectivity above. Due to the translation symmetry given by the cyclic permutation matrix, we assume every incoming or outgoing $X$/$Z$ stabilizer is encoded to the same set of bond operators between the three L and three R gadgets. Furthermore, since the gadget connection is bipartite, once the encoding map of the L gadget is fixed, so is the map for every R gadget. Therefore, we mainly focus on the L gadget, where the three incoming and three outgoing $X$ stabilizers, labeled by $\alpha=1,2,3$, needs to be encoded onto three bond directions $(\alpha,1)$, $(\alpha,2)$ and $(\alpha,3)$, while each $Z$ stabilizer labeled by $\beta=1,2,3$ need to be encoded onto the three bond directions $(1,\beta)$, $(2,\beta)$ and $(3,\beta)$. Denoting the nine bond operators for incoming/outgoing $X$/$Z$ operators as $A^{X/Z}_{(\alpha,\beta)}$ and $\overline{A^{X/Z}_{(\alpha,\beta)}}$, the encoding maps can be written as
\begin{align}\label{eq: bbencode}\nonumber
X\to \prod_{\beta=1}^3A^X_{(\alpha,\beta)},\ \overline{X}\to \prod_{\beta=1}^3\overline{A^X_{(\alpha,\beta)}}\ (\alpha=1,2,3),\\
    Z\to \prod_{\alpha=1}^3 A^Z_{(\alpha,\beta)},\ \overline{Z}\to \prod_{\alpha=1}^3 \overline{A^Z_{(\alpha,\beta)}}\ (\beta=1,2,3).
\end{align}
The consistency equations that the encoding maps need to satisfy can be directly expressed via the commutation relation between the bond operators:
\begin{align}\label{eq: bbeqs}\nonumber
    &\left[ \prod_{\beta=1}^3A^X_{(\alpha,\beta)},\prod_{\beta=1}^3\overline{A^X_{(\alpha,\beta)}} \right]=0\ (\alpha=1,2,3),\\\nonumber
    &\left[ \prod_{\alpha=1}^3A^Z_{(\alpha,\beta)},\prod_{\alpha=1}^3\overline{A^Z_{(\alpha,\beta)}} \right]=0\ (\beta=1,2,3),\\\nonumber
    &\left\{A^X_{(\alpha,\beta)},A^Z_{(\alpha,\beta)}\right\}=\left\{\overline{A^X_{(\alpha,\beta)}},\overline{A^Z_{(\alpha,\beta)}}\right\}=0,\\
    &\left[A^X_{(\alpha,\beta)},\overline{A^Z_{(\alpha,\beta)}}\right]=\left[\overline{A^X_{(\alpha,\beta)}},A^Z_{(\alpha,\beta)}\right]=0\ (\alpha,\beta=1,2,3).
\end{align}
The set of equations does not admit solutions with every bond operator being a single Pauli operator. We find that the solution with the fewest internal legs for each gadget is $n_L=12$, where Pauli strings of bond operators in the directions $(1,1)$, $(2,2)$ and $(3,3)$ are weight-two while the rest of the directions are weight-one. The solutions to the bond operators are conveniently written in the following matrix format\footnote{We note that the most efficient CSS encoding requires every bond operators to be weight-two, i.e. $n_L=2\times9=18$, in which case the connected gadgets give rise to the SASEC, similar to Fig. \ref{fig: tccnot} for toric code.}:
\begin{align}\label{eq: bbsol}\nonumber
    A^X=\left(\begin{array}{ccc}
        XI & X & X \\\hline
        X & XI & X \\\hline
        X & X & XI\\
    \end{array}\right),\ \overline{A^X}=\left(\begin{array}{ccc}
        IZ & Z & Z \\\hline
        Z & IZ & Z \\\hline
        Z & Z & IZ\\
    \end{array}\right),\\
    A^Z=\left(\begin{array}{c|c|c}
        ZI & Z & Z \\
        Z & ZI & Z \\
        Z & Z & ZI\\
    \end{array}\right), \ \overline{A^Z}=\left(\begin{array}{c|c|c}
        IX & X & X \\
        X & IX & X \\
        X & X & IX\\
    \end{array}\right),
\end{align}
where $A^{X/Z}_{(\alpha,\beta)}$ is located at row $\alpha$ and column $\beta$ of the correspond matrix $A^{X/Z}$. The horizontal/vertical lines are visual guidance for the encoding map in Eq. \eqref{eq: bbencode} where incoming/outgoing $X$ operators are encoded in the rows while the $Z$ operators are encoded in the columns of the corresponding matrices. The encoding maps given by Eq. \eqref{eq: bbsol} are not maximal. To completely fix the gadget, we impose the following two additional stabilizers.
\begin{equation}\label{eq: bbadditional}
    \left(\begin{array}{ccc}
        XX & X & I \\
        X & XX & I \\
        I & I & I\\
    \end{array}\right),\ \left(\begin{array}{ccc}
        ZZ & Z & I \\
        Z & ZZ & I \\
        I & I & I\\
    \end{array}\right).
\end{equation}
Although the encoding map in Eq. \eqref{eq: bbsol} is not a CSS one, it becomes CSS once we conjugate the outgoing leg by a Hadamard gate. Therefore, we can easily construct the ZX-diagram using the algorithm for CSS encoders in Sec. \ref{sec: fixzx} and only the encoding maps of $X$ and $\overline{X}$. 
The diagrammatic algorithm yields multiple possible circuit layouts corresponding to different spatial concatenations and measurement schedules. For example, the circuit layout shown in Fig. \ref{fig: bbzx} corresponds to a spatial concatenation of six physical qubits per data qubit in the original BB code.

In Appendix \ref{sec: bblogical}, we prove that the dynamical code from the gadgets with the solution in Eq. \eqref{eq: bbsol} preserves the logical space and performs Hadamard gates plus SWAP between pairs of logical operators.

To repeat the gadget and obtain the Floquet BB code, the simplest way, from the ZX-diagram point of view, is simply to follow the gadget with its dual gadget, so that the Hadamard gate on the outgoing leg cancels out with the one on the incoming leg of the following gadget, resulting in a rewinding schedule, as shown in the first row of Fig. \ref{fig: bbschedule}. 
We note that such a circuit layout is already synchronized between the L and R gadgets. To see this, we note that from the parity check matrices in Eq. \ref{eq: bbcheck}, the encoding map of the R gadget is the ``transpose" of the encoding map of the L gadget, i.e. the incoming/outgoing $X$ operators are encoded in the directions $(1,\beta)$, $(2,\beta)$, and $(3,\beta)$ ($\beta=1,2,3$). Therefore, the ZX-diagram of the R gadget can be obtained by flipping $(\alpha,\beta)$ in Fig. \ref{fig: bbschedule} to $(\beta,\alpha)$. Since the ``off-diagonal" directions are all at the same time step, the schedules between the L and R gadgets are hence synchronized. Therefore, after going through the procedure in Fig. \ref{fig: floqueteff}, we arrive at a 14 step rewinding schedule as shown in the second row of Fig. \ref{fig: bbschedule}.
We note that after step 7, the ISG of the Floquet BB code is, in fact, a ``dual" BB stabilizer code obtained from conjugating every qubit in the original BB code by a Hadamard gate, hence swapping the definition of $X$ and $Z$ stabilizers compared to the incoming BB code. By rewinding, we effectively remove the Hadamard gate on the outgoing leg on the ZX-diagram in Fig. \ref{fig: bbzx}. 

To obtain a lattice layout of the Floquet BB code, one can go through a similar procedure as we did for Floquet toric code in Fig. \ref{fig: tcsync}. Assuming the BB code has a toric layout, i.e. the L and R data qubits can be laid on a square lattice with periodic boundary conditions, the gadget layout will naturally inherit the square lattice toric layout of the BB stabilizer code, where each L gadget will be connected to four R gadgets located on the nearest neighbor of the square lattice, and five more R gadgets at further locations. We can choose the four nearest-neighbor directions to be $(\alpha,\beta)$ where $\alpha,\beta=1,2$, see Fig. \ref{fig: bbconnection}. Since the $(3,3)$ direction has two internal legs, the total number of long range inter-gadget pairwise measurement is 6 per gadget. In the circuit implementation in Fig. \ref{fig: bbschedule}, the long-range pairwise measurements happen for four of the six physical qubits inside each gadget, while the remaining two only engage in intra-gadget and nearest-neighbor measurements. For details of the measurement schedule, see the caption of Fig. \ref{fig: bbschedule}.

In Appendix \ref{sec: additionalzx}, we construct another Floquet BB code with a much shorter schedule from an alternative circuit layout of the gadget that is equivalent to the one in Fig. \ref{fig: bbzx}. There, each data qubit of the BB code is replaced by 12 physical qubits as the spatial concatenation. In this way, all stabilizers of the BB code will be measured twice in a six-round schedule. These two implementations of Floquet BB code clearly show that circuit depth can be traded with a dynamical qubit overhead in compiling dynamical codes, a crucial idea that will be elaborated more in Sec. \ref{sec:resource}. 

\subsection{Floquet Haah code}\label{sec: haah}

\begin{figure*}
    \centering
    \includegraphics[width=0.7\linewidth]{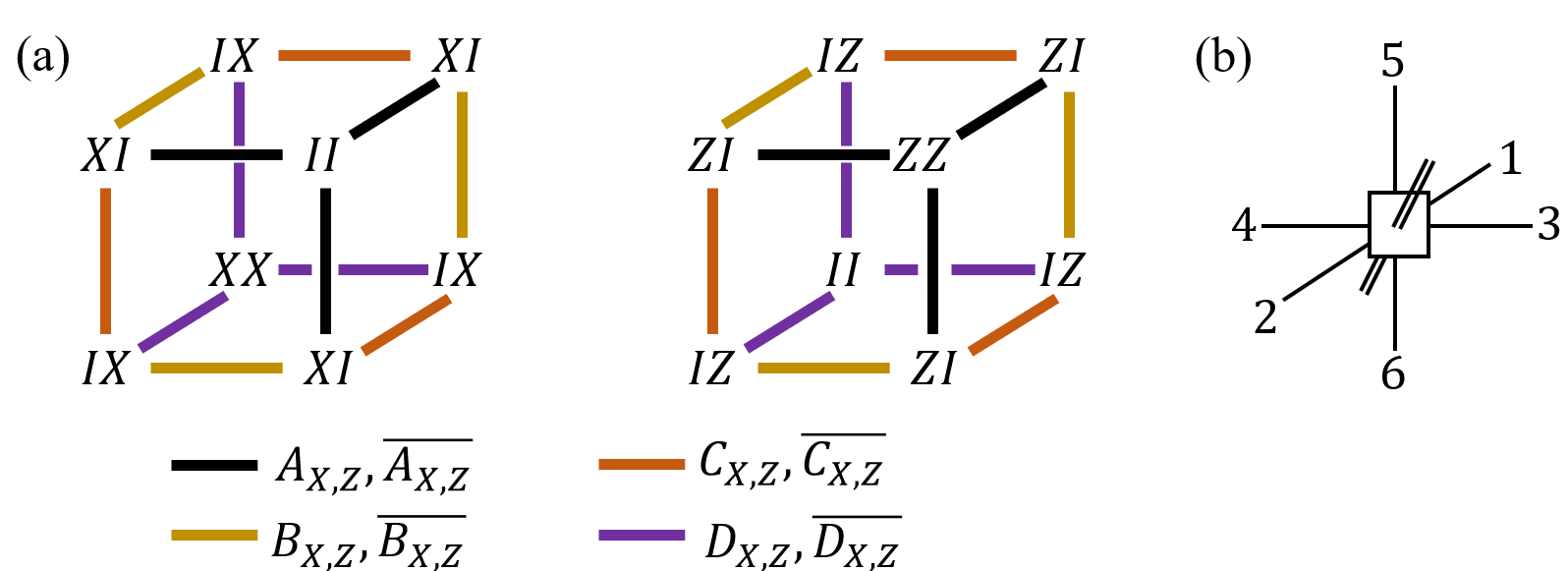}
    \caption{(a) The cubic stabilizer of the Haah code and the parameterization of the bond operators in the encoding matrices in Eq. \eqref{eq: haahmat}. (b) The gadget that replaces each pair of data qubit in the Haah code and the six bond directions.}
    \label{fig: haahparam}
\end{figure*}

We now apply our approach to construct a dynamical code whose incoming and outgoing stabilizer codes are both the Haah's cubic code \cite{haah_local_2011,haah_algebraic_2017} which we will refer to as the Haah code below. Such a dynamical code, to the best of our knowledge, has not yet been constructed in the literature. The Haah code is defined on a coarse-grained 3D cubic lattice with $m=2$ qubits per site, and every cube in the lattice hosts a pair of $X$ and $Z$ stabilizers, whose expressions are shown in Fig. \ref{fig: haahparam}(a). To turn it into a dynamical code, we adopt a gadget layout so that each pair of data qubits in the original Haah code is replaced by a gadget with two incoming and two outgoing legs. From Algorithm \ref{alg: gadgetcon}, each pair of neighboring gadgets on the lattice is connected by internal legs that go along the six lattice directions, labeled from 1 to 6 in Fig. \ref{fig: haahparam}(b). Again, we assume translation symmetry so that the mapping of each stabilizer (incoming $X$, incoming $Z$, outgoing $X$ and outgoing $Z$) on the bond operators is translation invariant. Since the Haah code is three-fold rotation symmetric around the diagonal axis of the cubic lattice in the direction that connects $II$ and $XX$ in the cubic stabilizer, we only consider a homogeneous internal leg layout where each bond direction $i=1,2,\dots,6$ contains $n_{L,i}$ legs. 

Before constructing the encoding matrices, we first prove that the weight of every bond operator of the $X$ and $Z$ cubic stabilizers, $n_{L,i}$, must be at least two. In fact, suppose $n_{L,i}=1$, then every bond operator is a single Pauli operator (or identity). Consider the black and yellow bond operators for the incoming $XX$, incoming $ZI$ and outgoing $ZI$ operators, which we denote by $a_{XX}$, $b_{ZI}$ and $\overline{b_{ZI}}$. Consider the  pair of stabilizers: an $X$ stabilizer and the $Z$ stabilizer on its bottom left, see Fig. \ref{fig: haahproof}. Since they only overlap on one bond, due to translation invariance, the two bond operators $a_{XX}$ and $b_{ZI}$ must anticommute, since they are encoded from anti-commuting incoming Pauli operators $XX$ and $ZI$. Therefore, they are both nontrivial Pauli operators, not identity. Meanwhile, since incoming operators always commute with outgoing operators, we must have $[a_{XX},\overline{b_{ZI}}]=[\overline{a_{XX}},b_{ZI}]=0$. Since $n_{L,i}=1$, we must have $a_{XX}=\overline{b_{ZI}}$, and therefore $\{b_{ZI},\overline{b_{ZI}}\}=0$. 
One can similarly prove that $\{c_{ZI},\overline{c_{ZI}}\}=\{d_{ZI},\overline{d_{ZI}}\}=0$ in Fig. \ref{fig: haahproof}. However, this leads to an apparent contradiction that the incoming $ZI$ and outgoing $ZI$ operators are encoded onto anti-commuting operators, since $\{b_{ZI}c_{ZI}d_{ZI},\overline{b_{ZI}}\overline{c_{ZI}}\overline{d_{ZI}}\}=0$. Therefore, the Pauli weight of one of the three legs that connects to the $ZI$ operator must be at least two, hence the number of legs in each direction $n_{L,i}\geq 2$.

We now demonstrate that $n_{L,i}=2$ is indeed feasible by constructing a CSS Floquet Haah code. We parametrize the bond operators of the 8 body $X$ and $Z$ stabilizers in a 3-fold rotational symmetric way around the diagonal axis that connects $II$ and $XX$ where every bond operator of the incoming/outgoing $X$ and $Z$ stabilizers is represented by a weight-$l$ Pauli string from $A_{X,Z}/\overline{A_{X,Z}}$ to $D_{X,Z}/\overline{D_{X,Z}}$, see Fig. \ref{fig: haahparam}(a). In this way, the CSS encoding matrices read
\begin{widetext}
\begin{align}\label{eq: haahmat}
    H_X=\left(\begin{array}{cc|cccccc|cc}
        0&0 &[A_X]&0&0&[A_X]&0&[A_X] &0&0\\
        1&0&[B_X]&0&[A_X]&0&0&[C_X]&0 &0\\
        1&0&0&[A_X]&0&[C_X]&0&[B_X]&0&0\\
        1&0&[C_X]&0&0&[B_X]&[A_X]&0&0&0\\
        0&1&[D_X]&0&[B_X]&0&[C_X]&0&0&0\\
        0&1&0&[C_X]&0&[D_X]&[B_X]&0&0&0\\
        0&1&[B_X]&[C_X]&0&0&[D_X]&0&0&0\\
        1&1&0&[D_X]&[D_X]&0&[D_X]&0&0&0\\
        0&0 &[\overline{A_X}]&0&0&[\overline{A_X}]&0&[\overline{A_X}] &0&0\\
        0&0&[\overline{B_X}]&0&[\overline{A_X}]&0&0&[\overline{C_X}]&1&0\\
        0&0&0&[\overline{A_X}]&0&[\overline{C_X}]&0&[\overline{B_X}]&1&0\\
        0&0&[\overline{C_X}]&0&0&[\overline{B_X}]&[\overline{A_X}]&0&1&0\\
        0&0&[\overline{D_X}]&0&[\overline{B_X}]&0&[\overline{C_X}]&0&0&1\\
        0&0&0&[\overline{C_X}]&0&[\overline{D_X}]&[\overline{B_X}]&0&0&1\\
        0&0&0&[\overline{B_X}]&[\overline{C_X}]&0&0&[\overline{D_X}]&0&1\\
        0&0&0&[\overline{D_X}]&[\overline{D_X}]&0&[\overline{D_X}]&0&1&1\\
    \end{array}\right),\ H_Z=\left(\begin{array}{cc|cccccc|cc}
        1&1 &[A_Z]&0&0&[A_Z]&0&[A_Z] &0&0\\
        1&0&[B_Z]&0&[A_Z]&0&0&[C_Z]&0&0\\
        1&0&0&[A_Z]&0&[C_Z]&0&[B_Z]&0&0\\
        1&0&[C_Z]&0&0&[B_Z]&[A_Z]&0&0&0\\
        0&1&[D_Z]&0&[B_Z]&0&[C_Z]&0&0&0\\
        0&1&0&[C_Z]&0&[D_Z]&[B_Z]&0&0&0\\
        0&1&0&[B_Z]&[C_Z]&0&0&[D_Z]&0&0\\
        0&0&0&[D_Z]&[D_Z]&0&[D_Z]&0&0&0\\
        0&0&[\overline{A_Z}]&0&0&[\overline{A_Z}]&0&[\overline{A_Z}] &1&1\\
        0&0&[\overline{B_Z}]&0&[\overline{A_Z}]&0&0&[\overline{C_Z}]&1 &0\\
        0&0&0&[\overline{A_Z}]&0&[\overline{C_Z}]&0&[\overline{B_Z}]&1&0\\
        0&0&[\overline{C_Z}]&0&0&[\overline{B_Z}]&[\overline{A_Z}]&0&1&0\\
        0&0&[\overline{D_Z}]&0&[\overline{B_Z}]&0&[\overline{C_Z}]&0&0&1\\
        0&0&0&[\overline{C_Z}]&0&[\overline{D_Z}]&[\overline{B_Z}]&0&0&1\\
        0&0&0&[\overline{B_Z}]&[\overline{C_Z}]&0&0&[\overline{D_Z}]&0&1\\
        0&0&0&[\overline{D_Z}]&[\overline{D_Z}]&0&[\overline{D_Z}]&0&0&0\\
    \end{array}\right).
\end{align}
\end{widetext}
Here, the first two columns denote the two incoming legs, and the last two columns denote the two outgoing legs. The six columns in the middle denote the six directions of the internal legs from 1 to 6 in Fig. \ref{fig: haahparam}(b), and every ``0" over there stands for a string of 0 with length $l$. We solve the consistency equations in Eq. \eqref{eq: csscon} with loosened conditions that the vectors $A_X,\ \overline{A_X},\ D_Z$ and $\overline{D_Z}$ can be zero, since the $X$ or $Z$ operators of the two qubits do not actually participate in the cubic stabilizer in these directions. In this case, we find the following solution at $n_{L,i}=2$:
\begin{align}\label{eq: haahsol}\nonumber
    &[A_X]=00,\ [B_X]=10,\ [C_X]=11,\ [D_X]=01,\\\nonumber
    & [\overline{A_X}]=00,\ [\overline{B_X}]=11,\ [\overline{C_X}]=01,\ [\overline{D_X}]=10,\\ \nonumber
    &[A_Z]=01,\ [B_Z]=10,\ [C_Z]=11,\ [D_Z]=00,\\ \nonumber
    &[\overline{A_Z}]=10,\ [\overline{B_Z}]=11,\ [\overline{C_Z}]=01,\ [\overline{D_Z}]=00.\\
\end{align}
The rest of the solutions are simply permutations of the three nonzero vectors $01,\ 10$ and $11$, and the resulting encoding maps will only differ by CNOT or swap gates over the internal legs. The total rank of the encoding matrices for the solution above is
\begin{equation}
    \text{rank}(H_X)+\text{rank}(H_Z)=16=2m+\sum_{i=1}^6n_{L,i}=2\times 2+2\times 6.
\end{equation}
Therefore, the gadget is completely fixed by the encoding maps. 

The ZX-diagram that produces the encoding map given by the solution in Eq. \eqref{eq: haahsol} is shown in Fig. \ref{fig: haahzx}. Details of the derivation from the encoding map can be found in Appendix \ref{sec: additionalzx}. In particular, we find a compilation of the CSS encoding circuit in the three-stage form as in Fig. \ref{fig: floqueteff} with the spatial encoder representing two three-qubit repetition codes. Therefore, every pair of data qubits in the original Haah code is replaced by 6 physical qubits. Furthermore, the temporal order of the internal legs in the ZX-diagram is already synchronous: all the primed legs are measured before the unprimed ones. Hence the repeated gadget leads to a CSS Floquet Haah code with a 5-round schedule (see Fig. \ref{fig: haahschedule}): bZZ, pXX (measure every pair of $X$ operators on the purple bonds), $CNOT_\text{br}$ (apply CNOT on every brown bond where each brown qubit is controlled by the black one), rZZZ (measure the product of 3 $Z$ operators on each red triangle), gXX. The coloring of the bonds can be found in Fig. \ref{fig: haahlat}. To make this a measurement-only circuit, the $CNOT_\text{Br}$ gates can be further decomposed into pairwise measurements by introducing one ancilla per each CNOT gate. The depth of the circuit will be increased accordingly.

Based on the topological nature of the (static) Haah code\cite{haah_algebraic_2017}, we expect the Floquet Haah code to be fault-tolerant with a preserved logical space based on the argument in Sec. \ref{sec: spacetimed}. However, since the number of logical qubits is macroscopic and highly sensitive to the system size and boundary conditions\cite{aitchison_no_2024}, we leave the task of studying the logical automorphism of the Floquet Haah code to future investigations.

\begin{figure}
    \centering
    \includegraphics[width=\linewidth]{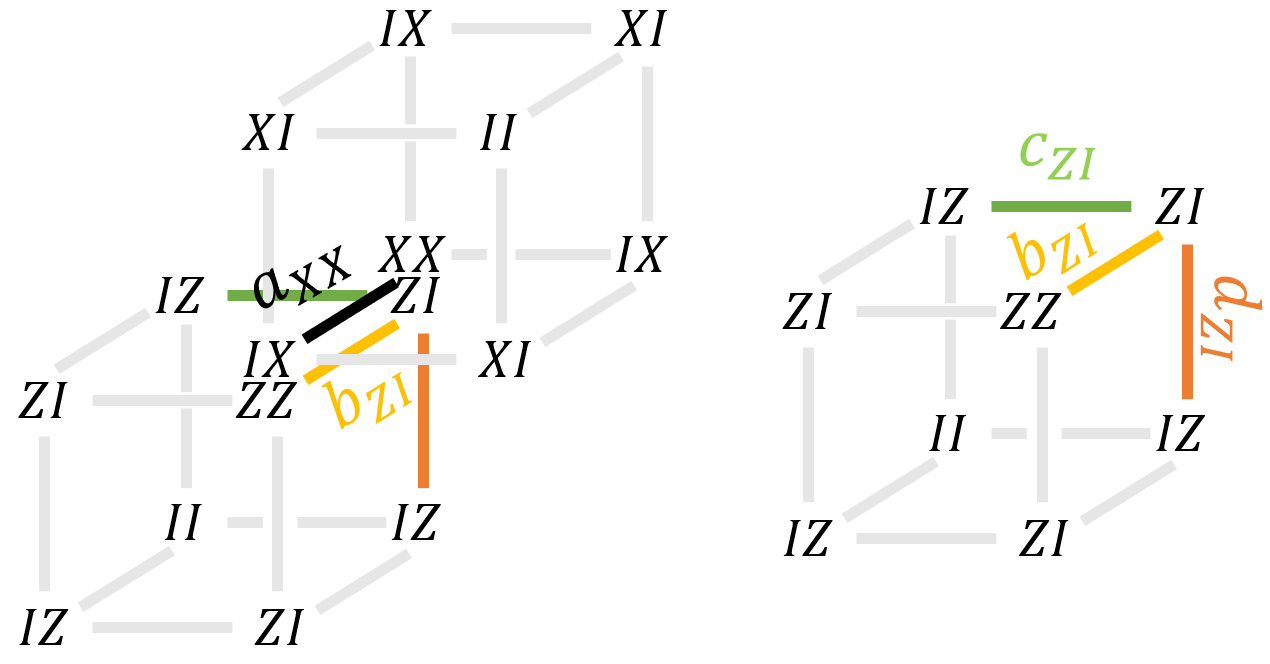}
    \caption{The proof that the Floquet Haah code needs at least $2$ legs per bond. Consider a $Z$ stabilizer on the bottom-left of an $X$ stabilizer. To faithfully encode the Pauli algebra on the incoming and outgoing legs, we must have $\{b_{ZI},\overline{b_{ZI}}\}=0$. This leads to the contradiction that incoming and outgoing $ZI$ operators are encoded by anti-commuting operators $b_{ZI}c_{ZI} d_{ZI}$ and $\overline{b_{ZI}}\overline{c_{ZI}}\overline{d_{ZI}}$.}
    \label{fig: haahproof}
\end{figure}
\begin{figure}
    \centering
    \includegraphics[width=0.5\linewidth]{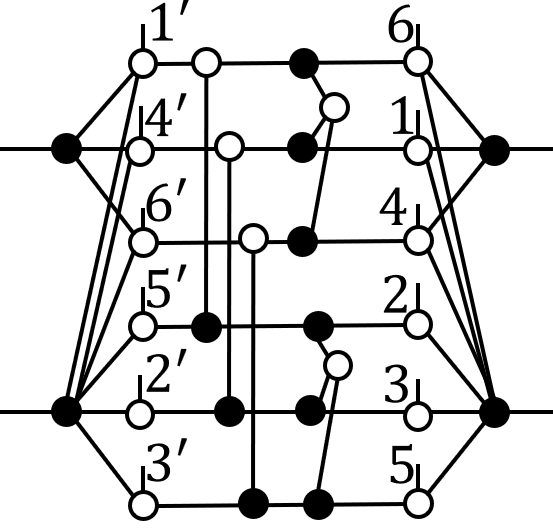}
    \caption{The ZX-diagram of the gadget that produces the encoding map in Eq. \eqref{eq: haahmat} with the solution in Eq. \eqref{eq: haahsol}. The the top(bottom) incoming/outgoing lines corresponds to the first(second) data qubit on each lattice site in the original Haah code. See Fig. \ref{fig: haahcss} for derivation.}
    \label{fig: haahzx}
\end{figure}
\begin{figure*}
    \centering
    \includegraphics[width=0.9\linewidth]{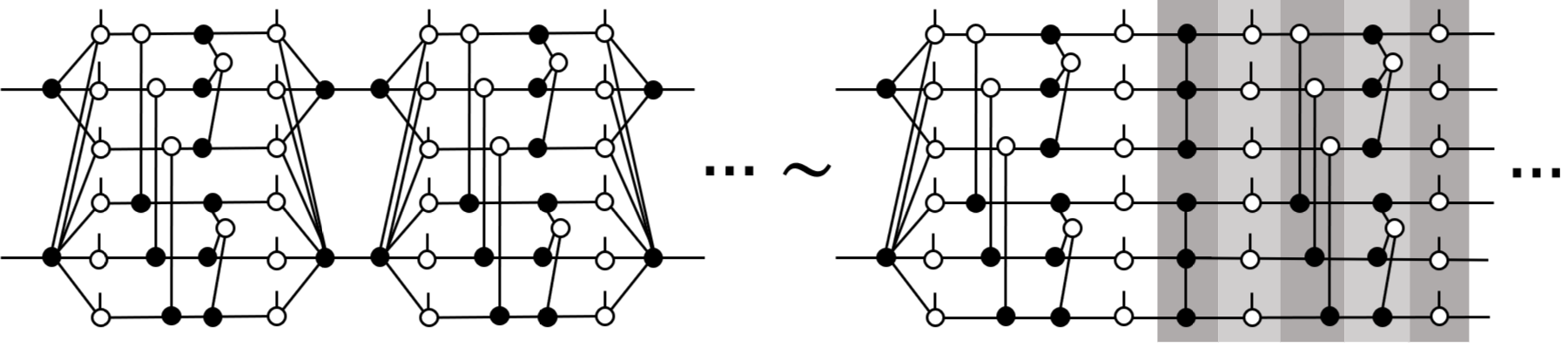}
    \caption{The repeated gadget in Fig. \ref{fig: haahzx} produces a CSS Floquet Haah code with a 5-round measurement schedule. The spatial concatenation for each gadget is two decoupled three-qubit repetition codes, one for each data qubit in the Haah stabilizer code. The five-round schedule of the Floquet Haah code are marked by alternating light/dark gray shades.}
    \label{fig: haahschedule}
\end{figure*}
\begin{figure*}
    \centering
    \includegraphics[width=0.6\linewidth]{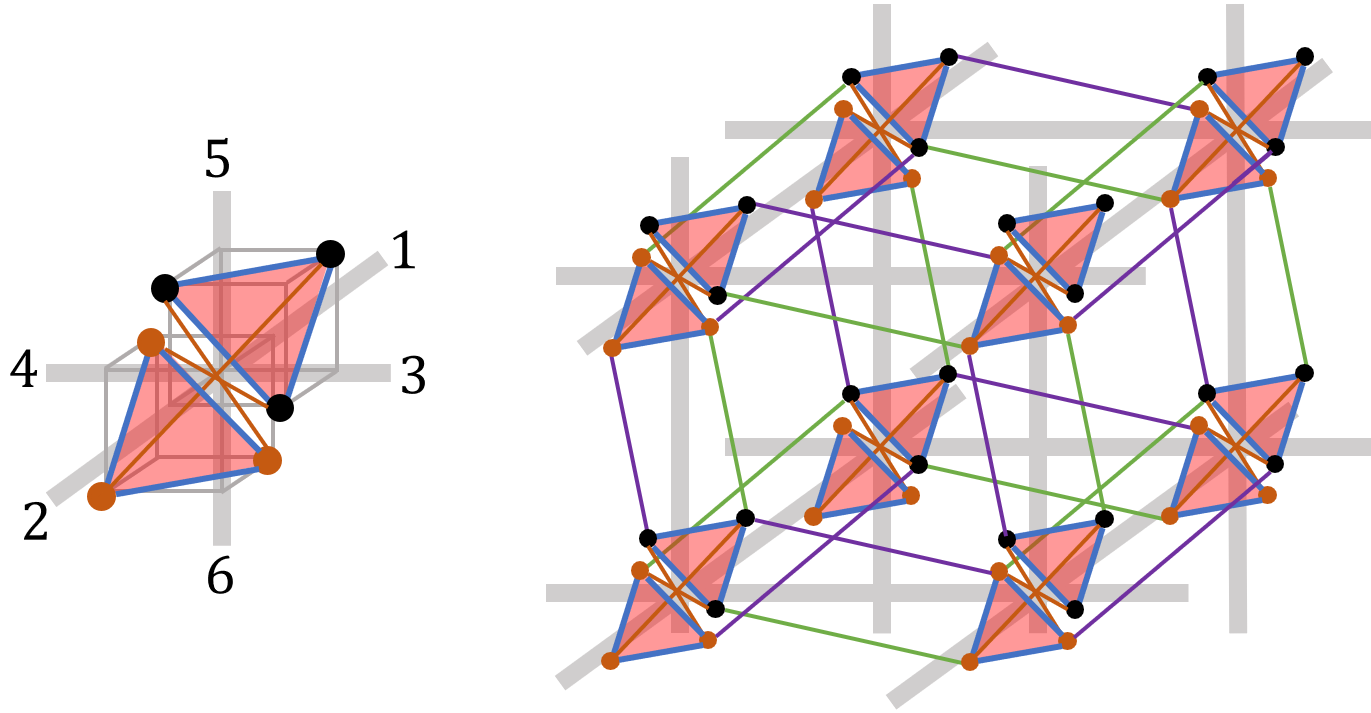}
    \caption{The schedule of the CSS Floquet Haah code from Fig. \ref{fig: haahschedule} and the zoom-in of the 6 physical qubits on each lattice site, which replaces the original 2 data qubits. To avoid visual overlap, the 6 qubits from the top to the bottom of the ZX-diagrams in Fig. \ref{fig: haahschedule} are located on the plane quadrants spanned by the directions 16, 14, 46, 25, 23 and 35, respectively.}
    \label{fig: haahlat}
\end{figure*}

\subsection{Spacetime concatenation for arbitrary LDPC stabilizer codes}

Through the above-mentioned examples of toric code, BB code and Haah code, we have demonstrated the gadget layout as a general recipe for constructing efficient dynamical codes from given stabilizer codes. In Appendices \ref{sec: csscc} and \ref{sec: csscb}, we construct further examples of Floquet codes: CSS Floquet color code, Floquet Steane code and CSS Floquet checkerboard code. We note that Floquet code compilations of the color code and the checkerboard code already exist in the literature. Meanwhile, we find that the CSS Floquet color code we constructed is not equivalent to existing Floquet color codes, including the two examples in Refs. \cite{dua_engineering_2024,fuente_xyz_2024} that require pairwise measurement in $X$, $Y$ and $Z$ Pauli operators, and the one in Ref. \cite{townsend-teague_floquetifying_2023} which is equivalent to the SASEC.

For the $[[7,1,3]]$ Steane code, we demonstrate the flexibility of spacetime concatenation through realizing logical $S$ and Hadamard gates in the dynamical Steane code via pairwise measurement and/or single physical qubit unitaries. 

For the Floquet checkerboard code, we find that the CSS one we construct in Appendix \ref{sec: csscb} is equivalent to the ``fracton Floquet code" constructed in Ref.~\cite{davydova_floquet_2023} between measurement rounds 0 and 5 there, whose ISGs are both checkerboard codes that are defined on the same sublattice of the 3D cubic lattice. Compared to the one in Ref. \cite{davydova_floquet_2023}, the spatial layout of the CSS Floquet checkerboard code that we construct uses half the number of physical qubits and has a slightly longer 8-round schedule. This again highlights a trade-off between the number of physical qubits and the circuit depth, as mentioned in Sec. \ref{sec: fixfloquet}. We will discuss these tradeoffs in more detail in Sec. \ref{sec:resource}.

Finally, in Appendix \ref{sec: fermion}, we construct a pairwise-measurement $\Z_2^{(1)}$ subsystem code whose incoming and outgoing stabilizer groups $\cS=\overline{\cS}$ are not from a stabilizer code, but from the stabilizer group of the $\Z_2^{(1)}$ subsystem code in Ref. \cite{ellison_pauli_2023}. Such a stabilizer group is not CSS, i.e. it is not generated by stabilizers that are products of Pauli $X$ or $Z$.  Crucially, since our construction does not require knowledge of the gauge group $\mathcal{G}$ of the subsystem code, the dynamical code will only stabilize the stabilizer and the logical subspace of the given subsystem code. That is, the dynamical code only measures the syndromes of the spacetime stabilizers that are given by the stabilizer subgroup $\cS$, but not by the gauge operators. Nevertheless, it is possible that, at a given time step, the dynamical code may preserve a larger logical space than just the logical subspace of the given subsystem code.

Beyond these examples, our algorithm for the gadget layout and construction of dynamical codes applies to Pauli stabilizer code and even subsystem codes in general. It is also capable in designing dynamical codes with fixed hardware topology, which can be incorporated through restrictions in inter-gadget connections.

\section{Fault-tolerance and Spacetime Distance}
\label{sec:fault_tolerance}

\begin{figure}
    \centering
    \includegraphics[width=.9\linewidth]{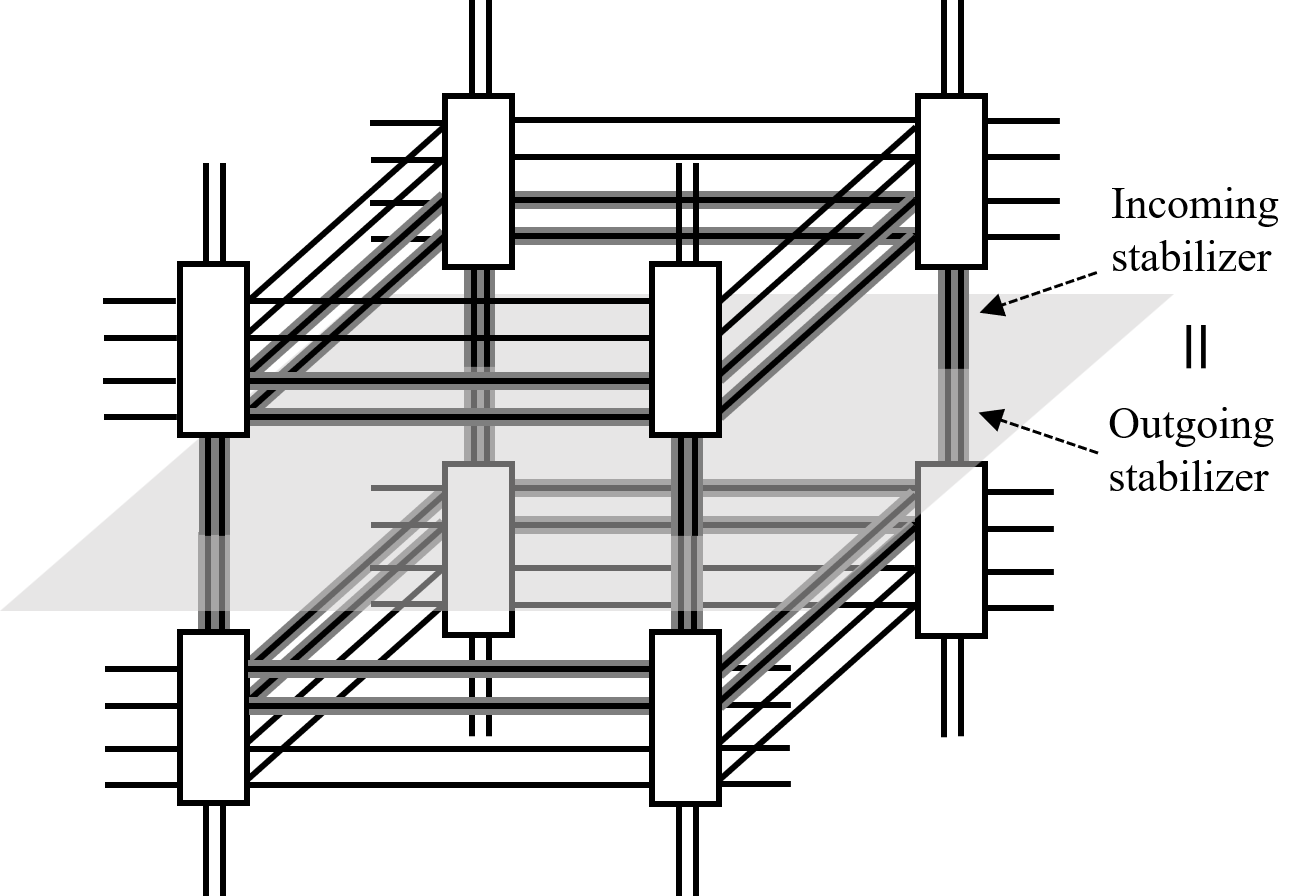}
    \caption{The formation of spacetime stabilizer from repetition of dynamical codes. Here we show that the Pauli web of a code stabilizer form a closed loop in spacetime, whose inter-gadget support are marked in shaded gray lines. }
    \label{fig: spacetimestab}
\end{figure}
\begin{figure}
    \centering
    \includegraphics[width=0.7\linewidth]{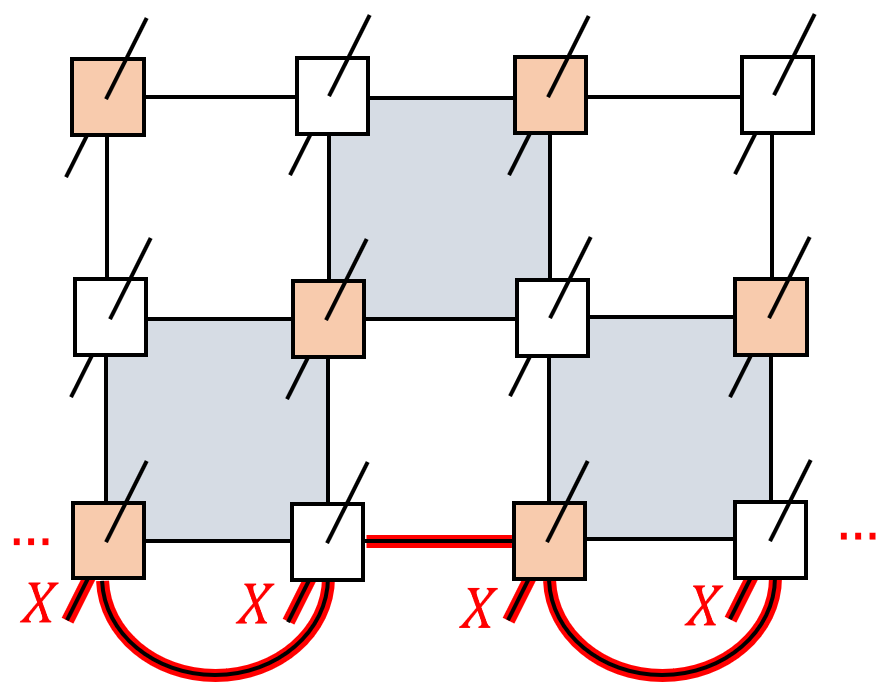}
    \caption{Non-fault-tolerant boundary of the HH Floquet code with cylindrical geometry. Here the incoming and outgoing toric codes are terminated with two-body $Z$ stabilizers on the lower boundary of the lattice which are supported on the two gadgets connected by the arched legs. If one uses the bulk gadget on the boundary, the boundary gadgets will map the incoming $X$ logical operator on the boundary to a set of connected bond operators $P$ on the boundary according to Eq. \eqref{eq: tctableau}, which implies the incoming $X$ logical operator is measured.}
    \label{fig: tcboundary}
\end{figure}

\subsection{Macroscopic Logical Map condition (MLSC)}
\label{subsec:MLSC}
In this section, we discuss fault-tolerance of the dynamical code constructed from the gadget layout. Recall that, in our definition in Sec. \ref{sec: generalcondition}, fault-tolerance of a dynamical code means that it preserves the logical subspace that is shared between the incoming and outgoing stabilizer codes. 

Before delving into fault-tolerance, we briefly discuss the mechanism of error detection in dynamical code. In gadget layout, every incoming/outgoing code stabilizer has a well-defined local mapping onto the bond operators given by the Pauli web of the stabilizer when it enters the dynamical code $\M$ (see Appendix \ref{sec: zxrules} for discussions of Pauli web in ZX-calculus). When $\M$ is repeated (or rewinded with $\overline{\M}$), the Pauli web of every outgoing stabilizer that is generated by the first dynamical code enters and closes in the subsequent one, see  Fig. \ref{fig: spacetimestab}. We refer to these closed Pauli webs as \textit{spacetime stabilizers}. In the literature, especially in the context of circuit-level setting, they are also called spacetime detectors, as they detect circuit-level errors that occur during the dynamical code, including idling error and syndrome extraction error. The mechanism of error-detection is the following. In the ZX-diagram of the circuit, the Pauli web of every spacetime stabilizer involves multiple measurements. Without errors, all these measurements around the spacetime stabilizer will multiply to $1$. Pauli errors that occur in various parts of the circuit can be represented by $\pi$ nodes of different species on the corresponding edges of the ZX-diagram \cite{bombin_unifying_2024}. Examples of circuit-level error on the ZX-diagram are listed in the Appendix \ref{sec: decoding}. If a Pauli error occurs on any edge on which a spacetime stabilizer is also supported, and if it anti-commutes with the action of the Pauli web of the spacetime stabilizer, the error will be detected by the spacetime stabilizer. That is, all the measurement outcomes around the spacetime stabilizer will multiply to $-1$ instead of $1$. 

With the spacetime stabilizer picture, in order to show that the dynamical code is fault-tolerant, we need to verify that the gadgets do not map any incoming logical operator to constant-weight operators on the bonds. This includes the case where a logical operator map to identity on the bond operators, which implies that the logical operator gets measured. We provide an example when this happens, namely an open-boundary HH Floquet code defined on a cylindrical geometry where the stabilizers of the incoming and outgoing toric codes are terminated on the boundary by the two-body $Z$ stabilizers, see Fig. \ref{fig: tcboundary}. If one uses the bulk gadget that is given by the tableau in Eq. \eqref{eq: tctableau} on the boundary in this case, we seemingly have an open-boundary HH Floquet code which satisfies the conditions (i) and (ii) in Eqs. \eqref{eq: ms} and \eqref{eq: m1}. However, in this way, the incoming $X$ logical operator on the boundary will actually be measured, since it is mapped by the boundary gadgets to a set of connected bond operators $P$ on the boundary. Therefore, the dynamical code is not fault-tolerant. We note that this is in fact a manifestation of the anomaly of the measurement quantum cellular automata (MQCA) on the boundary of Floquet toric code \cite{haah_boundaries_2022, aasen_measurement_2023}.

To explicitly calculate the mapping of the logical operators, we consider the combined Hilbert space of the incoming and outgoing legs, and the internal legs of the bonds:
\begin{equation}
    \mathcal{H}_\text{total}=\mathcal{H}_\text{in}\otimes\mathcal{H}_\text{bond}\otimes\mathcal{H}_\text{out}.
\end{equation}
The dimension of $\mathcal{H}_\text{bond}$ is determined by the total number of internal legs $n_\text{bond}$. 
Now, to compute all possible incoming/outgoing operators that will be measured by the gadgets, we collect all the stabilizer tableaux associated with the encoding map of every gadget computed in Sec. \ref{sec: gadgetencoder}. We denote the stabilizer group of the gadget $i$ as $S_i$.  Note that the bond Hilbert space is shared between a pair of connected gadgets. Therefore, for a pair of connected gadgets $i$ and $j$, a stabilizer in $S_i$ may not commute with another stabilizer in $S_j$ of the other gadget $j$. Together, all the gadget stabilizers generate a group
\begin{equation}
    \mathcal{G}_\M\equiv\left\langle\bigcup_i S_i\right\rangle,
\end{equation}
where the angle bracket denote that the group is generated by elements inside the bracket.
In this way, all incoming/outgoing Pauli operators that are measured/generated by $\M$ form a subgroup of $\mathcal{G}_\M$, which we denote as \[\cS_\mathcal{G}\equiv\left\{g\in \mathcal{G}_\M\vert g|_{\mathcal{H}_\text{bond}}=\mathbbm{1}\right\}.\]
From the definition, it is easy to see that $\cS,\overline{\cS}\in \cS_\mathcal{G}$, as the bond operators will cancel out when we multiply the corresponding encoding maps of gadgets in $\mathcal{G}_\M$ for an incoming or outgoing stabilizer.
Meanwhile, $\cS_\mathcal{G}$ must also contain the logical automorphism. That is, for every incoming logical operator $L\in \mathcal{L}$, we have $L\otimes\mathbbm{1}\otimes\mathcal{M}(L)\in \cS_\mathcal{G}$.
To ensure that no logical operator is measured or generated by $\M$, we must have
\begin{equation}
    \mathcal{G}_\M\cap (\mathcal{L}\otimes \mathbbm{1}\otimes \mathbbm{1})=\mathcal{G}_\M\cap(\mathbbm{1}\otimes \mathbbm{1}\otimes\overline{\mathcal{L}})=\varnothing.
\end{equation}
In this way, when restricted to $\mathcal{H}_\text{in}\otimes\mathcal{H}_\text{out}$, $\cS_\G$ can be regarded as the stabilizer group of a stabilizer code with $2n$ physical qubits and $2k$ logical qubits, which describes the automorphism of every incoming logical $X$ and $Z$ operator.

To mathematically verify these conditions, we can compute all elements in $\cS_\mathcal{G}$ from the binary matrix representation of the group $\G_\M$, which we denote by $H_{\G_\M}$. Every row $\vec{r}=(\vec{r}_X|\vec{r}_Z)$ of $H_{\G_\M}$ represents an element in the gadget stabilizer group, where $\vec{r}_{X,Z}\in \F_2^{2n+n_\text{bond}}$ are binary vectors given by embedding every row $(\vec{v}_X|\vec{v}_Z)$ of the gadget's encoding matrices $H_{X,Z}$ (see Sec.~\ref{sec: gadgetencoder}) to the combined Hilbert space $\mathcal{H}_\text{total}$. To find incoming Pauli operators that with trivial support on the bonds, we compute the kernels of $H_{\G_\M}^T$ restricted to the bond subspace:
\begin{equation}
    \ker^\text{in}_\text{bond}(H_{\G_\M}^T)\equiv\left\{w\in \F_2^{n_\text{row}}\left\vert \left.\left(H^T_{\G_\M}w\right)\right\vert_{\mathcal{H}_\text{bond}\cup\mathcal{H}_\text{out}}=0\right.\right\},
    \label{eq:kernel_bond}
\end{equation}
where $n_\text{row}$ is the number of rows in $H_{\G_\M}$. Each $w\in \ker_\text{bond}(H_{\G_\M}^T)$ then gives rise to an element in the group $\cS_\G$ represented by the binary vector $H^T_{\G_\M}w$. The total number of independent stabilizers in $\cS_\G$ that are exclusively supported on the incoming legs is given by the dimension of the kernel in the incoming Hilbert spaces:
\begin{equation}
|\cS_\G|_\text{in}\equiv\dim(\ker^\text{in}_\text{bond}(H^T_{\G_\M})/\ker(H^T_{\G_\M})).
\end{equation}
The number of independent stabilizers on the outgoing legs, $|\cS_\G|_\text{out}$ can be similarly defined. Thus, when no logical operator is measured, we must have the following relation:
\begin{equation}
    |\cS_\G|_\text{in}=|\cS_\G|_\text{out}=n-k.
\end{equation}
We call this the bond-kernel-rank condition (BKRC).

\subsection{Spacetime distance of dynamical codes}
\label{sec: spacetimed}

The fault tolerance of a dynamical code is characterized by its \textit{spacetime code distance} \(d_\text{st}\), defined as the minimal weight of any circuit-level Pauli noise that induces an undetectable logical error. In ZX diagrams, this corresponds the minimal number of \(\pi\)-nodes anywhere in the diagram that anti-commute with the logical Pauli web while commuting with all spacetime stabilizers. 

Although BKRC rules out the case where a logical operator is measured, a logical operator might survive all the measurements but instead shrink to a constant-weight operator in spacetime. In this case, the weight of undetectable logical error will be constant, implying that even if we scale up the dynamical code by increasing $d$ of the incoming/outgoing static codes, the dynamical code itself does not become more fault-tolerant. A well-known example of this situation is the planar Floquet code~\cite{vuillot2021planar}. In our picture, this situation corresponds to an incoming/outgoing logical operator being mapped to a constant weight operator in spacetime. 

To exclude such a case, we need to compute the minimal weight among all possible operators into which an incoming/outgoing logical operator evolves. This can be achieved by using the binary matrix $H_{\G_\M}$ of gadget stabilizers. We first parameterize an incoming logical operator in the incoming stabilizer code by a binary vector $\vec{l}=(\vec{l}_X|\vec{l}_Z)$, where $\vec{l}_{X,Z}\in \F_2^{2n+n_\text{bond}}$ and have nonzero entries only in the first $n$ elements. Then, given the binary matrix $H_{\G_\M}$, all possible dressed logical operators in the combined Hilbert space $\mathcal{H}_\text{total}$ can be represented by $\vec{l}+H_{\G_\M}^Tw$ where $w\in \F_2^{n_\text{row}}$. On the ZX-diagram, every $w$ represents a way to push some of the $\pi$ nodes from the incoming legs to everywhere else in the connected gadgets $\M$. Therefore, the lowest possible weight of any operator that the incoming logical operator $\vec{l}$ can evolve into is given by 
\begin{equation}
    d(\vec{l})\equiv \min_{w\in \F_2^{n_\text{row}}}\left\vert\left\vert \vec{l}+H_{\G_\M}^Tw\right\vert\right\vert,
\end{equation}
where $||\vec{v}||$ denotes the Pauli weight of the operator represented by the binary vector $\vec{v}=(\vec{v}_X|\vec{v}_Z)$, i.e. $||\vec{v}||\equiv\sum_{i}(\vec{v}_X)_i \vee (\vec{v}_Z)_i$. Finally, the spacetime distance in the gadget layout is simply the lowest possible weight of any dressed incoming or outgoing operator:
\begin{equation}\label{eq: dst}
    d_{st}\equiv\min_{\vec{l}\in \mathcal{L},\overline{\mathcal{L}}} d(\vec{l}).
\end{equation}
With this definition, we can exclude the aforementioned case by verifying that 
\begin{equation}
    d_{st}\neq O(1).
\end{equation}
We call this the Macroscopic Logical Support Condition (MLSC).
 
In fact, if the dynamical code is constructed using the Algorithm~\ref{alg: gadgetcon} for gadget connectivity, we can provide a strong statement regarding on the spacetime distance:
\begin{thm}
\label{thm: distance}
Let \(\mathcal{S}\) and \(\overline{\mathcal{S}}\) be the incoming and outgoing stabilizer codes, each with macroscopic code distance \(d\). Then, any dynamical code constructed from Algorithm~\ref{alg: gadgetcon} which satisfies MLSC obeys
\[
d_\text{st} \geq d.
\]
\end{thm}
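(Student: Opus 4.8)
\emph{Proof strategy.} The plan is to lower-bound the minimal weight $d(\vec l)=\min_w\|\vec l+H_{\G_\M}^Tw\|$ of every dressed logical representative and, more importantly, to establish a \textbf{gap dichotomy}: for any code produced by Algorithm~\ref{alg: gadgetcon}, each logical class has $d(\vec l)$ either $O(1)$ or at least $d$, with nothing in between. Since MLSC asserts $d_\text{st}=\min_{\vec l}d(\vec l)\neq O(1)$, and the overall minimum is $O(1)$ precisely when some class sits in the lower branch, the dichotomy upgrades MLSC to the claimed bound $d_\text{st}\ge d$.

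First I would fix a logical class and a minimal-weight representative $\vec v=\vec l+H_{\G_\M}^Tw$, decomposed as $\vec v=(\vec v_\text{in},\vec v_\text{bond},\vec v_\text{out})$ on $\mathcal{H}_\text{in}\otimes\mathcal{H}_\text{bond}\otimes\mathcal{H}_\text{out}$. The key observation is that the static stabilizers, embedded as $s\otimes\mathbbm{1}\otimes\mathbbm{1}$ with $s\in\cS$ and $\mathbbm{1}\otimes\mathbbm{1}\otimes\bar s$ with $\bar s\in\overline{\cS}$, all lie in $\cS_\G\subseteq\G_\M$ and hence are spacetime stabilizers (detectors). Because $\vec v$ represents an undetectable error it commutes with all of them, forcing $\vec v_\text{in}\in N(\cS)$ and $\vec v_\text{out}\in N(\overline{\cS})$. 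If $\vec v_\text{in}$ lies outside $\cS$ it is a nontrivial logical of the distance-$d$ code $\cS$; since \emph{every} representative of a nontrivial logical class has weight at least $d$, we obtain $\|\vec v\|\ge\|\vec v_\text{in}\|\ge d$, and symmetrically if $\vec v_\text{out}\notin\overline{\cS}$. This disposes of all representatives whose incoming or outgoing footprint carries nontrivial logical content.

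The remaining and genuinely hard case is $\vec v_\text{in}\in\cS$ and $\vec v_\text{out}\in\overline{\cS}$, i.e. the logical action is hidden in the bulk bond operators; the planar Floquet code of Ref.~\cite{vuillot2021planar} is exactly of this type and violates $d_\text{st}\ge d$, so this case cannot be excluded without a hypothesis. Here I would first dress $\vec v$ by $\vec v_\text{in}\otimes\mathbbm{1}\otimes\mathbbm{1}$ and $\mathbbm{1}\otimes\mathbbm{1}\otimes\vec v_\text{out}$, both elements of $\G_\M$, to obtain an equivalent purely bond-supported representative. If it is trivial the logical has been measured, which is already excluded by BKRC; otherwise it is a nontrivial bulk logical. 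To control it I would use that each gadget is a \emph{maximal, full-rank} encoder, hence an isometry from its incoming/outgoing legs into its bond legs, so that bond operators commuting with the pure-bond gadget stabilizers decode uniquely to incoming/outgoing Pauli data. Composing these local decodings and invoking the locality built in by Algorithm~\ref{alg: gadgetcon}, namely that every bond connects only gadgets sharing a common $X$ and a common $Z$ check, I would argue that a bulk logical supported on fewer than $d$ bonds is \emph{contractible}: it can be rewritten, via gadget stabilizers, as a cluster supported on $O(1)$ gadgets and hence of constant weight.

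The main obstacle is precisely this last step, turning the qualitative picture ``bulk logicals are either contractible or macroscopic'' into the quantitative gap at exactly $d$. The clean way to close it is to show that the decoded incoming representative of any bulk logical lies in the same logical class of $\cS$: if its bond weight were below $d$, one could push it onto the incoming legs to produce an incoming representative of weight below $d$, contradicting the code distance when the class is nontrivial, and contradicting minimality together with BKRC when the decoded class is trivial (which would force the original operator to be a contractible $O(1)$ cluster). Granting this, every logical class satisfies $d(\vec l)\in\{O(1)\}\cup[d,\infty)$; MLSC removes the lower branch for all classes, and therefore $d_\text{st}=\min_{\vec l}d(\vec l)\ge d$.
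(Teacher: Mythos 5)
Your skeleton is the same as the paper's: first argue that every undetectable spacetime error is, modulo the gadget stabilizer group $\G_\M$, a representative of a logical class of $\cS$ or $\overline{\cS}$, and then invoke MLSC to lower-bound the weight of such representatives. Your first half is in fact a cleaner, more algebraic version of what the paper does with Pauli webs: commutation with the embedded detectors $s\otimes\mathbbm{1}\otimes\mathbbm{1}$ and $\mathbbm{1}\otimes\mathbbm{1}\otimes\overline{s}$ forces the incoming and outgoing footprints into the respective normalizers, and a nontrivial in/out footprint immediately costs weight $d$ by the static code distance. That part is sound.

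The gap is exactly where you flag it. Your ``gap dichotomy'' --- that every logical class has $d(\vec{l})$ either $O(1)$ or at least $d$, with nothing in between --- is the entire content of the theorem once the reduction to logical classes is done, and you establish it only conditionally (``Granting this\dots''). The specific mechanism you propose, decoding a purely bond-supported representative back onto the incoming legs via the gadget isometries, does not obviously preserve weight: multiplying by a gadget stabilizer row $(\vec{v}_X|\vec{v}_Z)$ trades bond support for incoming/outgoing support at a rate set by the row weights, so a bond operator of weight $w$ touching at most $w$ gadgets generically decodes to an incoming operator of weight up to $Cw$, where $C$ depends on $m$ and the gadget fan-out. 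Run through your own contradiction, this yields $d_\text{st}\ge d/C$ rather than $d_\text{st}\ge d$; recovering the constant-free bound requires the extra accounting (each touched gadget contributes at least one unit of error weight and at most its own incoming qubits to the decoded logical, which is only automatic when $m=1$) that you never carry out. To be fair, the paper's own proof glosses over the identical step with the phrase that the Pauli web ``cannot pinch off or shrink locally,'' so you have correctly located where the real work lies --- you simply have not done it either. A complete argument would also need the case where the decoded class is trivial yet the bulk operator is not contractible, which you dismiss by appeal to BKRC; since the theorem assumes only MLSC, you should derive the needed consequence (a measured logical forces $d_\text{st}=O(1)$, contradicting MLSC) rather than import BKRC as a separate hypothesis.
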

\begin{proof}
Let $\mathcal{S}$ and $\overline{\mathcal{S}}$ denote the incoming and outgoing stabilizer groups, each with code distance $d$. The ZX diagram of the dynamical circuit $\mathcal{M}$ can be viewed as a tensor network whose edges represent qubits and whose nodes represent either measurements or unitary Clifford operations. Every stabilizer or logical operator of the static code corresponds to a Pauli web in this diagram: a connected set of edges carrying Pauli labels that multiply to identity when traced along the web. Incoming stabilizers define Pauli webs entering from the input boundary, and outgoing stabilizers define webs exiting at the output boundary. When $\mathcal{M}$ is repeated or rewound with $\overline{\mathcal{M}}$, these webs join to form closed spacetime stabilizers that detect local circuit faults.

Now consider a Pauli error $E$ supported on a set of spacetime edges. Such an error is undetectable precisely when it commutes with all spacetime stabilizers, i.e. with the Pauli webs of all measured operators in $\mathcal{M}$. In the ZX picture, from Algorithm~\ref{alg: gadgetcon}, the Pauli webs of the incoming and outgoing stabilizers cover the entire ZX-diagram: every edge participates in at least one Pauli web that is connected to either the input or output boundary. Hence, any Pauli error that commutes with all spacetime stabilizers must be equivalent to a Pauli web that begins and ends on the boundary. In other words, every undetectable error can be deformed (by multiplication with spacetime stabilizers) to an operator acting entirely on the incoming or outgoing code space. The correspondence between such boundary-supported webs and the logical operators of $\mathcal{S}$ or $\overline{\mathcal{S}}$ follows from the centralizer condition $C_P(\mathcal{S}_\mathcal{G}) = \mathcal{L}\otimes\overline{\mathcal{L}}$ described in Sec.~\ref{subsec:MLSC}. Therefore, all undetectable errors are logical operators of one of the static codes.

The MLSC ensures that every logical operator $L\in\mathcal{L}$ is mapped by the gadget encoder to a Pauli operator whose support is extensive in spacetime, with weight at least $d$. Intuitively, the Pauli web representing $L$ may propagate through the diagram but cannot pinch off or shrink locally, since that would violate the macroscopic support requirement. Thus, any undetectable spacetime error, being equivalent to some boundary logical, must also have weight at least $d$. No lower-weight error can commute with all spacetime stabilizers without anti-commuting with at least one of them, which would make it detectable.

Consequently, the minimal weight of an undetectable error, the spacetime code distance $d_\text{st}$, is bounded below by the static code distance $d$:
\[
   d_\text{st} \ge d.
\]
This completes the proof.
\end{proof}

\section{Equivalence relation for dynamical codes}
\label{sec:classification}
\begin{figure*}
    \centering
    \includegraphics[width=0.6\linewidth]{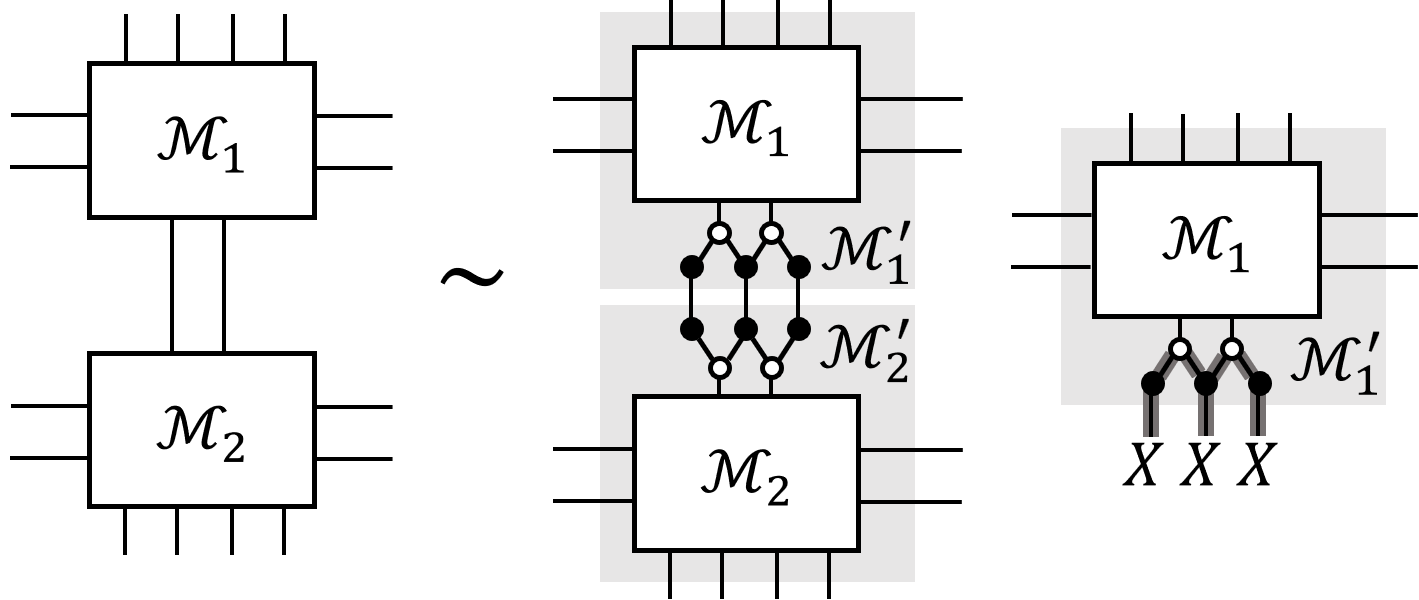}
    \caption{Minimal and non-minimal gadgets. Consider a pair of minimal gadgets $\M_1$ and $\M_2$ which are connected by a bond with two internal legs. A non-minimal gadget can be obtained from a minimal one by encoding the internal legs in the bond direction into more internal legs, which would yield a bond-local internal stabilizer. In the example shown in the figure, the non-minimal gadget $\M_1'$ obtained from $\M_1$ via such a encoding has a bond-local stabilizer $XXX$.}
    \label{fig: minimal}
\end{figure*}

The gadget layout not only provides a useful algorithm for constructing dynamical codes from stabilizer codes, but it also provides a useful setting to discuss the relation between different dynamical codes. In particular, we propose two different notions of equivalence for dynamical codes.
\begin{defn}
    Two Clifford dynamical codes are ZX-equivalent if their ZX-diagrams are equivalent under ZX rewrite rules.
\end{defn}
\begin{thm}
Two Clifford dynamical codes with the same incoming and outgoing stabilizer groups $\cS=\overline{\cS}$ are ZX-equivalent if and only if they have the same incoming and outgoing stabilizer groups and the same logical automorphism.
\label{thm: zxequiv}
\end{thm}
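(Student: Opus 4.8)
The plan is to leverage the completeness of Clifford ZX-calculus \cite{backens_zx-calculus_2014,wetering_zx-calculus_2020,jeandel_completeness_2020}, which guarantees that two Clifford ZX-diagrams are connected by rewrite rules if and only if they represent the same linear map up to a nonzero global scalar. This converts the statement into a purely algebraic one: two Clifford dynamical code maps $\M_1,\M_2$, each with $\cS=\overline{\cS}$, are proportional, $\M_1\propto\M_2$, if and only if they share the same stabilizer group and the same logical automorphism. I would then prove the two directions of the biconditional separately.

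For the forward direction I would argue that $\cS$, $\overline{\cS}$, and $U_\M$ are intrinsic invariants of the linear map up to scalar. From conditions (i) and (ii) one reads off the characterizations $\cS=\{s:\M s\propto\M\}$ and $\overline{\cS}=\{\overline{s}:\overline{s}\M\propto\M\}$, so two proportional maps measure and generate identical stabilizer groups. Likewise $P_\cL\M P_\cL=U_\M P_\cL$ defines the logical automorphism, and replacing $\M$ by $\lambda\M$ only rescales $U_\M$ by $\lambda$; since a logical automorphism acts by conjugation on the logical Paulis, this global phase is immaterial and the automorphism is unchanged. Hence ZX-equivalence forces all three data to coincide, which is the easy direction.

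The reverse direction carries the real content. Using condition (i) I would write $\M=\M\,\Pi_{\cS}$ and using condition (ii) $\M=\Pi_{\overline{\cS}}\,\M$, where $\Pi_{\cS}$ and $\Pi_{\overline{\cS}}$ are the rank-$2^k$ projectors onto the incoming and outgoing codespaces. Because each code obeys $\cS=\overline{\cS}$, these coincide with a single projector $\Pi$, so $\M=\Pi\,\M\,\Pi$ is supported entirely on the $2^k$-dimensional codespace and annihilates its orthogonal complement. Condition (iii) then identifies the restriction of $\M$ to this codespace with the logical unitary $U_\M$ up to scalar. A linear map that vanishes off a fixed $2^k$-dimensional subspace and implements a prescribed unitary on it is unique up to an overall scalar; therefore, if $\M_1$ and $\M_2$ share the same codespace (same $\cS$) and the same logical automorphism, then $\M_1\propto\M_2$. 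Invoking completeness once more converts this proportionality back into ZX-equivalence, closing the biconditional.

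I expect the main obstacle to be the phase and sign bookkeeping. The eigenvalue phases $c_s$ in Eq.~\eqref{eq: ms} select which joint eigenspace of $\cS$ the map projects onto, so ``the same stabilizer group'' must be read as the signed group fixing a common codespace; I would fix the convention that $\M$ projects onto the $+1$ (codespace) eigenspace and track signs through the factorization $\M=\Pi\,\M\,\Pi$. The second delicate point is verifying that this factorization has rank exactly $2^k$ and that condition (iii) pins $U_\M$ down as a genuine unitary rather than a partial isometry with additional kernel; this follows from $\M$ restricting to an isomorphism between the incoming and outgoing codespaces, but it warrants an explicit check.
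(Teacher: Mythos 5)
Your proposal is correct and follows the same route as the paper, which simply invokes the completeness of Clifford ZX-calculus and declares the rest "straightforward." You go further by actually supplying the omitted linear-algebra core — the factorization $\M=\Pi\,\M\,\Pi$ through the codespace projector and the uniqueness of a map determined by its codespace and logical unitary up to scalar — and by correctly flagging the signed-stabilizer/phase convention that the paper's one-line argument glosses over.
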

The proof is straightforward due to the completeness of Clifford ZX-calculus \cite{backens_zx-calculus_2014,wetering_zx-calculus_2020,jeandel_completeness_2020}, where the phase of every node is $\frac{\pi}{2}\Z$. From Thm. \ref{thm: zxequiv}, we immediately have:
\begin{col}
Every dynamical code with $\cS=\overline{\cS}$ and trivial logical automorphism is ZX-equivalent to the SASEC of $\cS$.
\end{col}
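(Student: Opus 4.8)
The plan is to obtain the statement as an immediate specialization of Theorem~\ref{thm: zxequiv}, with the SASEC of $\cS$ playing the role of one of the two dynamical codes. Since ZX-equivalence is only defined for Clifford dynamical codes, I read the corollary as a statement about Clifford dynamical codes, so that the SASEC (which is manifestly Clifford) and the hypothesized code $\M$ both lie in the domain of Theorem~\ref{thm: zxequiv}.

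First I would recall, from the discussion below the definition in Sec.~\ref{sec: generalcondition}, that the SASEC of $\cS$ is itself a Clifford dynamical code whose incoming and outgoing stabilizer groups both equal $\cS$, so that $\cS = \overline{\cS}$, and whose logical action is trivial, i.e. $L = \overline{L}$ for every $L \in \mathcal{L}$, equivalently $U_\M = \mathbbm{1}$. This exhibits the SASEC as a distinguished representative of exactly the class of codes named in the corollary. Next, let $\M$ be any Clifford dynamical code satisfying the hypotheses, namely $\cS = \overline{\cS}$ together with trivial logical automorphism $U_\M = \mathbbm{1}$. Then $\M$ and the SASEC of $\cS$ agree on all three ZX-equivalence invariants supplied by Theorem~\ref{thm: zxequiv}: they carry the same incoming stabilizer group $\cS$, the same outgoing stabilizer group $\overline{\cS} = \cS$, and the same (trivial) logical automorphism. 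Invoking the \emph{if} direction of the biconditional in Theorem~\ref{thm: zxequiv} with these two codes then yields that their ZX-diagrams are related by ZX rewrite rules, which is precisely the asserted ZX-equivalence.

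I do not expect a genuine obstacle here, since all the mathematical content---in particular the completeness of Clifford ZX-calculus---is already consumed in the proof of Theorem~\ref{thm: zxequiv} and need not be reinvoked. The only points requiring care are the verifications that make the specialization legitimate: that the SASEC truly has outgoing stabilizer group equal to $\cS$ (and not a Hadamard-conjugated or otherwise relabeled copy), and that ``trivial logical automorphism'' in the hypothesis on $\M$ denotes the same identity $U_\M = \mathbbm{1}$ used to characterize the SASEC. Both are immediate from the definitions, so the corollary follows in one line once the theorem is in hand.
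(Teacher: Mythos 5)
Your proposal is correct and is essentially the paper's own argument: the paper derives the corollary as an immediate specialization of Theorem~\ref{thm: zxequiv}, using the fact (stated in Sec.~\ref{sec: generalcondition}) that the SASEC is itself a Clifford dynamical code with $\cS=\overline{\cS}$ and trivial logical action, so any code satisfying the hypotheses matches it on all the invariants of the theorem. Your extra care about the two verification points is sensible but does not change the route.
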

Therefore, even for a dynamical code $\M$ with $\cS\neq \overline{\cS}$ and/or nontrivial logical automorphism, it is equivalent to the SASEC if we can repeat $\M$ or $\overline{\M}\circ\M$ multiple times to produce trivial logical automorphism.

Crucially, ZX-equivalence does NOT imply that two dynamical codes share the same spacetime distance, as noted in Ref.~\cite{rodatz_floquetifying_2024}. To ensure that two ZX-equivalent dynamical codes share the same fault-tolerance properties, we propose a much stronger equivalence relation between dynamical codes, motivated by the discussions of spacetime detector models in Sec.~\ref{sec: spacetimed}:
\begin{defn}
    Two ZX-equivalent dynamical codes are spacetime equivalent if there exists a set of rewrites between the two ZX-diagrams of the dynamical codes that do not create or destroy any spacetime stabilizers.
\end{defn}
Clearly, this is a stronger notion of equivalence than ZX-equivalence, since using ZX rewrite rules may create or destroy spacetime stabilizers that are completely inside the measurement circuit $\M$, and therefore change the number of total independent spacetime stabilizers. For example, consider a SASEC that is repeated multiple times. No matter how many times the SASEC is repeated, it is always ZX-equivalent to a single SASEC
if no error occurs when the circuit is repeated. However, the repeated circuit clearly has more spacetime stabilizers, and is therefore not spacetime-equivalent to a single SASEC.

To justify the usage of gadget layout in this context, we recall the SLPC from Sec. \ref{sec: compile}:
\begin{defn}\
    A dynamical code $\M$ is SLP if it satisfies the SLPC, that is, if its ZX-diagram can be brought to a gadget layout without using any ZX rewrite rules that generates new spacetime stabilizers, such that every stabilizer $s$ in the incoming/outgoing stabilizer code can be mapped to a set of bond operators that only connect the gadgets associated with the original data qubits that are involved in $s$.
\end{defn}
In the above definition, it is important to note that an arbitrary application of the ZX-rules that generates or eliminates new spacetime stabilizers can potentially affect the SLPC. 
For example, the pairwise measurement surface code proposed in Ref. \cite{gidney_pair_2023} is not SLP according to the original connectivity constraints. This is because, although the ZX-diagram of the pairwise measurement circuit can be brought to a gadget layout by moving the ZX nodes towards the data qubits without merging or adding any node, each incoming stabilizer will be mapped to internal bonds that are connected to gadgets that were not associated with the qubits of that original stabilizer. In fact, there is a decrease in spacetime code distance of the pairwise measurement toric code of Ref. \cite{gidney_pair_2023} from $d$ to $d/2$, in which the SLPC is not maintained. For an explicit demonstration of this on the ZX-diagram, see Appendix \ref{sec: gidney}.

We are now in place to discuss the equivalence relations of dynamical codes from the perspective of gadget layout. In particular, we relate the equivalence of the entire dynamical code to microscopic conditions at the level of the gadgets. To move forward, we make some more definitions as follows. 
\begin{defn}\label{def: gadgetmin}
    A gadget is minimal if it supports no bond-local internal stabilizer.
\end{defn}
Here, the bond-local internal stabilizer refers to any internal stabilizer of the gadget that has support only on internal legs that are part of the same bond.
A diagrammatic understanding of this minimal-ness is shown in Fig. \ref{fig: minimal}.
\begin{defn}\label{def: mmin}
    A gadget layout of an SLP dynamical code $\M$ is minimal if every gadget in the layout is minimal.
\end{defn}

We now state and prove the central result of the relation between gadget layout and the equivalence relation of dynamical codes.
\begin{thm}\label{thm1}
    Two SLP dynamical codes between the same pair of incoming and outgoing stabilizer codes are spacetime equivalent if and only if there exists a pair of minimal gadget layouts of the two codes with the same internal leg layout, and each pair of gadgets at the same spatial location in these two layouts is bond-local unitary equivalent.
\end{thm}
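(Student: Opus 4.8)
The plan is to prove the two directions separately, with the main technical content being a single structural fact: conjugation by a \emph{bond-local} Clifford neither creates nor destroys spacetime stabilizers, whereas any rewrite that mixes legs across distinct bonds would. The sufficiency direction then reduces to applying these gauge transformations, while the necessity direction requires a rigidity argument showing that minimality plus SLPC leaves no freedom beyond the bond-local one.

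For sufficiency ($\Leftarrow$), suppose we are given minimal gadget layouts of $\M$ and $\M'$ with the same internal leg layout, together with bond-local Cliffords $\{W_b\}$, one per bond $b$, relating corresponding gadgets. I would realize each $W_b$ as an insertion of $W_b W_b^{-1}=\mathbbm{1}$ on the internal legs of bond $b$, absorbing the two factors into the two gadgets meeting at $b$; this is a sequence of standard ZX rewrites (identity insertion and spider fusion) that leaves the contracted map invariant. The key step is to verify that each such rewrite is spacetime-stabilizer preserving: a spacetime stabilizer is a closed Pauli web, conjugation by $W_b$ maps Pauli webs to Pauli webs and acts as a symplectic isomorphism on the operators supported on bond $b$, so it fixes the dimension of the space of closed webs. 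Composing over all bonds carries one layout to the other by spacetime-preserving rewrites, giving spacetime equivalence. The point to check carefully is the gauge consistency on each shared bond—that the factor absorbed into one gadget is the inverse of the factor absorbed into its neighbor—which is guaranteed by the definition of bond-local unitary equivalence.

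For necessity ($\Rightarrow$), I would first bring each code to a minimal gadget layout: since both are SLP, each admits a gadget layout, and any bond-local internal stabilizer can be removed by inverting the encoding in Fig.~\ref{fig: minimal}, lowering the internal leg count on that bond until every gadget is minimal. Because $\cS=\overline{\cS}$ and the logical automorphism coincide (Thm.~\ref{thm: zxequiv} applied to the ZX-equivalence implied by spacetime equivalence), the bond operators of every stabilizer and every logical operator threading a given bond are identical up to a change of symplectic basis on that bond's legs. In a minimal layout there are no spare legs, so these bond operators span the full per-bond operator space and force the same internal leg count on each bond, yielding the same internal leg layout for the two minimal layouts. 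Having matched layouts, two gadgets at the same site then implement the same incoming/outgoing encoding with the same per-bond operator images, so their encoding matrices can differ only by a symplectic transformation acting independently on each bond, i.e.\ a bond-local Clifford—exactly bond-local unitary equivalence.

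The hard part is the necessity direction, and within it two points. First is ruling out \emph{non-bond-local} relations between the two minimal gadgets: a priori a Clifford relating the encoding matrices could shuffle support between legs of different bonds while preserving the linear map, and I would close this using SLPC (which confines each stabilizer web to a prescribed set of bonds, making the operator content on distinct bonds individually fixed and disjointly supported) together with minimality (which removes any redundant leg onto which support could be displaced). Second, and most delicate, is making ``does not create or destroy spacetime stabilizers'' fully rigorous across the reduction-to-minimal step: each bond-local internal stabilizer contributes a spacetime stabilizer localized to one bond, so I must show these local contributions form a bond-by-bond invariant and that spacetime equivalence forces the number removed from $\M$ and from $\M'$ to agree bond-by-bond, so that the equivalence of the originals is faithfully transported to the minimal representatives. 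This bookkeeping of internal, bond-localized detectors is where I expect to spend the most care.
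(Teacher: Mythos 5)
Your proposal follows essentially the same route as the paper's proof: sufficiency by observing that bond-local unitaries merely conjugate the Pauli webs of spacetime stabilizers (realized as identity insertions absorbed into adjacent gadgets), and necessity by arguing that, after reduction to minimal form, the only rewrites compatible with preserving the spacetime-stabilizer count and the per-bond leg structure are bond-local unitaries. If anything, your treatment is more careful than the paper's — the two "hard parts" you flag in the necessity direction (ruling out non-bond-local relations via SLPC plus minimality, and tracking bond-localized internal detectors through the minimalization step) are points the paper's proof asserts without elaboration, so your added detail strengthens rather than diverges from the published argument.
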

\begin{proof}
    
Here, the internal leg layout refers to the number of internal legs that connects each pair of gadgets, and bond-local unitary equivalence means that the two gadgets can at most differ by a collection of unitary gates that act only on internal legs that are part of the same bond.
The ``if" part of the theorem is straightforward: using the definition of bond-local unitaries above, the Pauli web of every spacetime stabilizer will simply be conjugated by all the bond-local unitaries when going from one dynamical code to the other. Hence, they are spacetime-equivalent. 
The ``only if" part of the theorem can be proven as follows. Consider a pair of spacetime equivalent dynamical codes $\M_A$ and $\M_B$. Suppose that we start from the gadget layout of $\M_A$. Since $\M_A$ and $\M_B$ are spacetime equivalent, we must be able to deform the ZX-diagram of $\M_A$ towards $\M_B$ without affecting the SLPC or introducing new spacetime stabilizers. Therefore, the allowed manipulations are restricted to bond-local unitary gates because they preserve the number of internal legs for each bond. In the case that one or both gadget layouts are non-minimal, we allow operations to bring them to a minimal form, but beyond that, only bond-local unitary gates can be allowed for spacetime equivalence. This completes the proof.
\end{proof}

The above discussion of the equivalence between dynamical codes also has practical importance. In constructing dynamical codes, given enough internal legs $n_L$, the consistency equation of the encoding matrices usually admits multiple different solutions, each leading to a different encoding map of the gadget. To determine whether two encoding maps give rise to the same dynamical code,  we can use Thm. \ref{thm1}: two gadget layouts that are bond-local unitary equivalent yield a pair of spacetime-equivalent dynamical codes. On the flip side, if we find two solutions to the consistency equations of the encoding maps, either Eq. \eqref{eq: csscon} or Eq. \eqref{eq: cliffordcon}, that have the same internal leg layout, but the corresponding two gadgets cannot be transformed into each other under bond-local unitaries, they must lead to neither ZX-equivalent nor spacetime-equivalent dynamical codes.  From Thm. \ref{thm: zxequiv}, this would mean that they have different logical automorphisms for $\cS=\bar{\cS}$. For an example of such a case, see Appendix \ref{sec: csscc}.

As an example, we discuss the relation between the 012 and the CSS Floquet codes. Denoting the entire circuits of the two Floquet codes as $\M_{012}$ and $M_\text{CSS}$, we prove that $\M_\text{CSS}$ is spacetime-equivalent to $\M_{012}\circ \M_{012}$.
In fact, repeating the gadget of the 012 Floquet code in Eq. \ref{eq: tcpaulimat} yields a new gadget with 2 internal legs per bond direction, whose bond operators are  
\begin{align}\label{eq: tc012x2}\nonumber
    A_X=B_X=C_X=D_X=PI,\\\nonumber
    A_Z=B_Z=C_Z=D_Z=QI,\\\nonumber
    \overline{A_X}=\overline{B_X}=\overline{C_X}=\overline{D_X}=IQ,\\
    \overline{A_Z}=\overline{B_Z}=\overline{C_Z}=\overline{D_Z}=IP.
\end{align}
Since one can always find a bond-local unitary on each bond that rotates $PI/QI/IQ/IP$ to $XI/ZI/IX/IZ$, two rounds of the 012 dynamical code lead to the same dynamical code as the CSS Floquet toric code once the internal legs are connected. For example, for the gadget of HH Floquet code where $P=X$ and $Q=Y$, the bond local unitary is $SHS^{-1}\otimes SHS^{-1}$ on the two internal legs of every bond, as is shown in Fig. \ref{fig: tcclifcss}.

\begin{figure}
    \centering
    \includegraphics[width=0.9\linewidth]{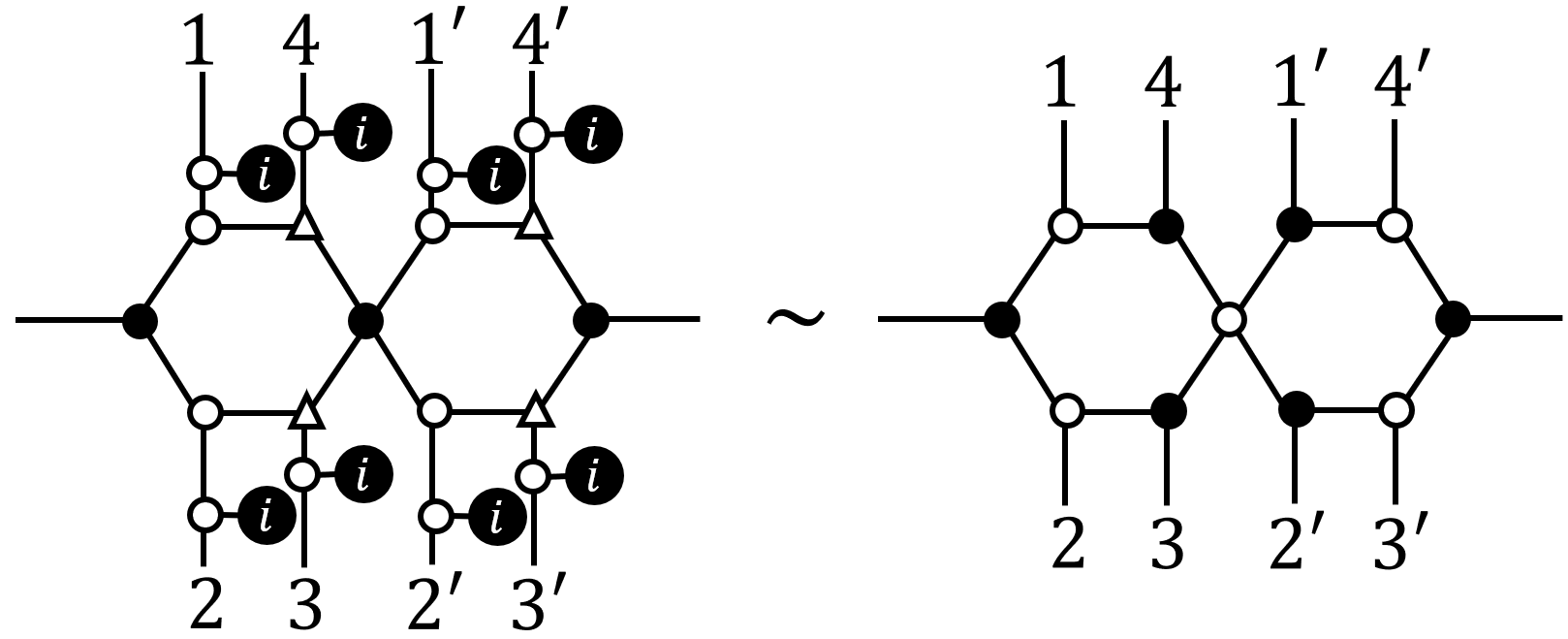}
    \caption{The diagrammatic proof that two rounds of the HH Floquet code is spacetime-equivalent to the CSS Floquet toric code. The unitary gate $SHS^{-1}$ is represented by a phase-less $X$-node connected to a single-leg $\frac{\pi}{2}$ $Z$-node}
    \label{fig: tcclifcss}
\end{figure}

We further demonstrate that CSS Floquet toric code is spacetime-equivalent to the SASEC. To see this, we apply a bond-local unitary $\prod_{i=1}^4C(X_i,Z_{i'})$ to the L gadget in Fig. \ref{fig: tczx}, which results in a tree-shaped ZX-diagram in Fig. \ref{fig: tcequi}\footnote{This new gadget correspond to the solution \begin{align}\nonumber
    A_X=B_X=C_X=D_X=XI,\\\nonumber
    A_Z=B_Z=C_Z=D_Z=ZZ,\\\nonumber
    \overline{A_X}=\overline{B_X}=\overline{C_X}=\overline{D_X}=XX,\\
    \overline{A_Z}=\overline{B_Z}=\overline{C_Z}=\overline{D_Z}=IZ
\end{align}
of Eq. \eqref{eq: tceqs}.}. Connecting the internal legs of the gadgets and merging the $X$ and $Z$ nodes (see Fig. \ref{fig: tccnot}), we see that this is indeed the SASEC. In fact, the gadget in Fig. \ref{fig: tcequi} is exactly the minimal gadget layout of the SASEC, where each weight-4 stabilizer is broken down to a ring of four nodes. We note that a similar ZX-diagram was proposed in Ref. \cite{rodatz_floquetifying_2024} as a code distance-preserving rewrite of the SASEC. 


In Appendix \ref{sec: csscc}, we provide another example in which gadgets of two Floquet color codes, a Clifford one and a CSS one, have the same number of internal legs but are not bond-local equivalent, and the resulting Floquet codes are neither ZX- nor spacetime-equivalent. Nevertheless, they are spacetime-equivalent if both gadgets are repeated 3 times, in which case both dynamical codes have trivial logical automorphism.

\begin{figure}
    \centering
    \includegraphics[width=0.75\linewidth]{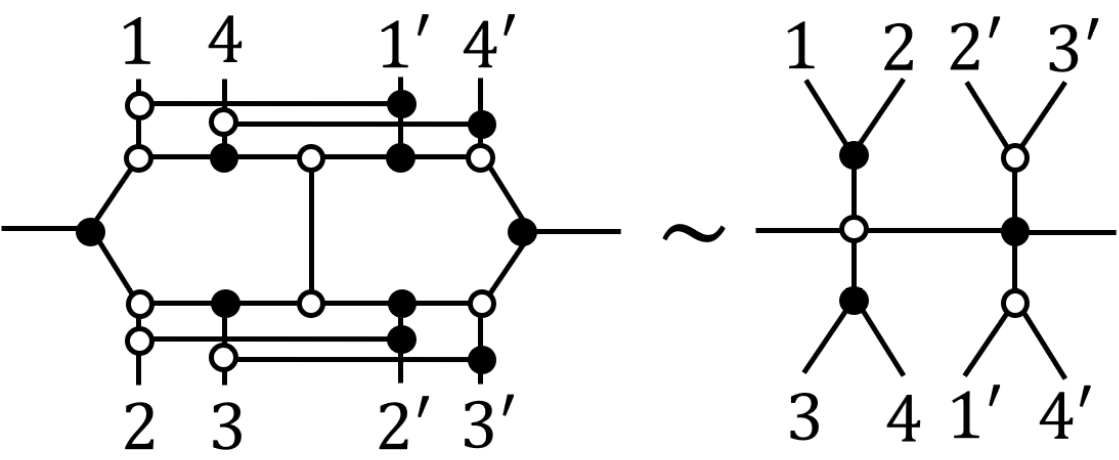}
    \caption{Diagrammatical demonstration that the SASEC of toric code is equivalent to the CSS Floquet toric code.}
    \label{fig: tcequi}
\end{figure}
\begin{figure*}
    \centering
    \includegraphics[width=0.7\linewidth]{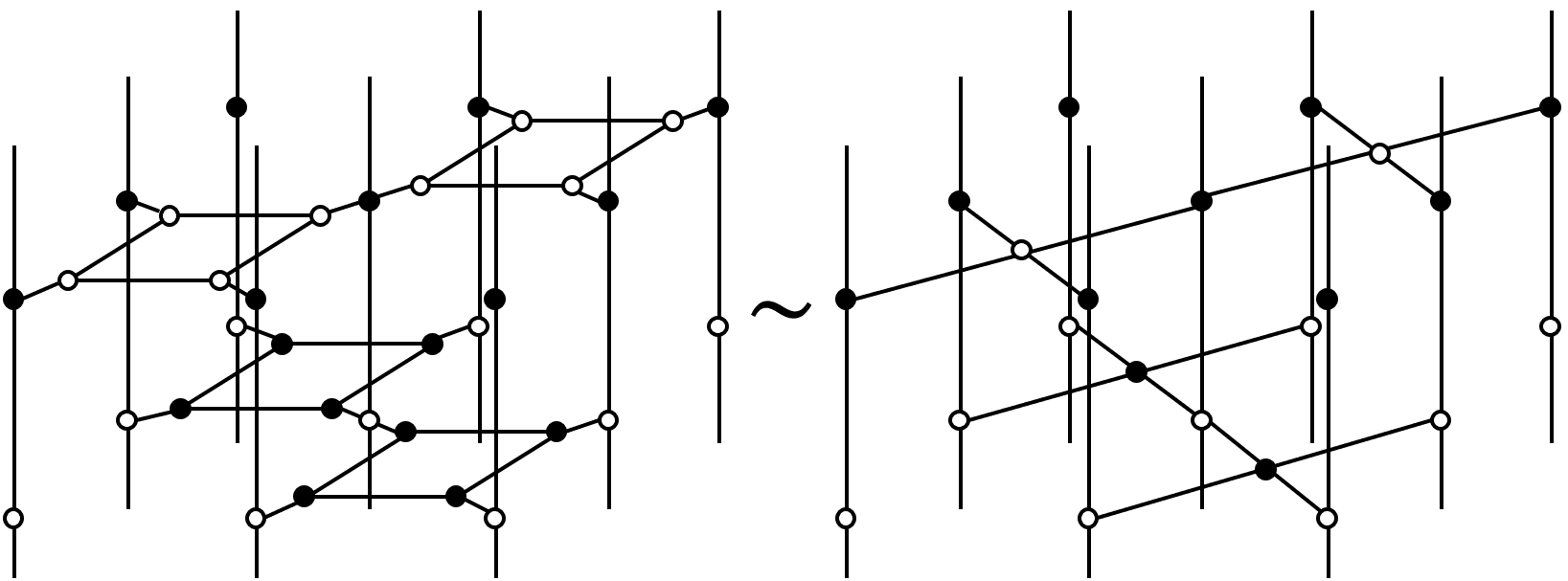}
    \caption{The connected gadgets in Fig. \ref{fig: tcequi} is ZX-equivalent (but not spacetime equivalent) to the SASEC. Here the incoming/outgoing directions are pointing downwards/upwards in the ZX-diagram.}
    \label{fig: tccnot}
\end{figure*}

\section{Resource Trade-offs}
\label{sec:resource}

In addition to being a tool in discussing equivalence relations between dynamical codes, gadget layout provides a natural way to quantify the amount of computational resource on the quantum hardware that is needed to construct a dynamical code given the incoming and outgoing stabilizer codes. 
From the examples that we have constructed, there are clearly three types of resources which are necessary to implement a dynamical code in an actual circuit:
\begin{enumerate}
    \item The circuit depth of the dynamical code
    \item The number of local physical qubits used for spatial concatenation inside each gadget; we denote it as $m_F$. We call the difference $m_F-m$ between the number of physical qubits and the original number of data qubits per gadget, $m$ in the incoming code the \textit{dynamical qubit overhead};
    \item Non-SLP: In terms of gadget layout, this comes from the connectivity of the internal legs between gadgets.
\end{enumerate}
We now show that the gadget composition, together with the diagrammatic description of quantum circuits using ZX-calculus, provides a unified framework for all three types of computational resources of dynamical codes. More precisely, we demonstrate that \textit{given a pair of incoming and outgoing stabilizer codes and the connectivity of the gadget layout, the computational resource needed to implement such a dynamical code can be quantified by the number of internal legs $n_L$ of every gadget in the layout}. Therefore, given the connectivity of the gadget layouts, the most efficient dynamical code is the one with the least number of internal legs $n_L$ per gadget \footnote{Note that we did not constrain the logical automorphism in the statement. In fact, in many examples, having a nontrivial logical automorphism yields a more efficient dynamical code with fewer internal legs.}. 

We first show that the circuit depth and dynamical qubit overhead are inter-convertible resources. To see this, recall from Sec. \ref{sec: fixfloquet} that, given a complete encoding map of a gadget, there are multiple ZX-equivalent ways to bring it to a three-stage layout. In the first stage, the data qubits of the incoming code are spatially concatenated with a local code with $m_F$ physical qubits. In the second stage, each physical qubit undergoes multiple inter-gadget operations, and the total number of inter-gadget operations, represented by the number of internal legs $n_L$, is equal to the number of inter-gadget operators for all the $m_F$ physical qubits on a gadget. Assuming that different inter-gadget operations involving the same physical qubit cannot be parallelized, the circuit depth $d_2$ of the second stage is determined by the most number of inter-gadget operations that a physical qubit is involved in. Therefore, we can roughly estimate that $d_2\sim \frac{n_L}{m_F}$ on average under the assumption of translation invariance. In general, we need a maximum value of $\frac{n_L}{m_F}$ among all gadgets.

Meanwhile, after merging the spatial encoder with the dual encoder, the ``internal measurements" block in Fig. \ref{fig: floqueteff} measures stabilizers of the local spatially concatenated code for the gadget.
Therefore, the additional circuit depth due to internal measurements can be regarded as a constant that does not scale with $n_L$ or $m_F$, hence the total circuit depth for the dynamical code $d_\text{circ}\sim d_2\sim \frac{n_L}{m_F}$. This relation establishes the circuit depth $d_\text{circ}$ and the dynamical qubit overhead $m_F-m$ for a fixed $m$, as tradable resources for the implementation of the dynamical code, as their $d_2 m_F$ is proportional to the number of internal legs $n_L$.

We further demonstrate that the number of internal legs of each gadget, $n_L$, is a tradable resource with connectivity between the gadgets. To illustrate this point, we compare two gadget layout constructions of dynamical code with different inter-gadget connectivity. For example, in constructing the Floquet BB code, instead of directly connecting every $L$ gadget with 9 $R$ gadgets so that every weight-6 incoming stabilizer is mapped to bond operators that directly connect the 6 gadgets, as we did in Sec. \ref{sec: bicycle}, we can instead consider a square lattice connection in the toric layout of the BB code, where every $L$ gadget is only connected to 4 nearest-neighbor $R$ gadgets, same as the connectivity in the Floquet toric code. This is akin to connectivity in the proposals in Ref. \cite{berthusen_toward_2025} in constructing dynamical LDPC codes on 2D planar hardware. Apparently, such a connection is not enough for SLPC, according to Algorithm~\ref{alg: gadgetcon}. In fact, the Pauli web of every incoming/outgoing stabilizer will go across gadgets that are originally not involved in the BB code stabilizer. 
Nevertheless, the spacetime code distance $d_\text{st}$ can, in principle, still match the incoming/outgoing distance $d$, provided that the resulting mapping of logical operators retains macroscopic support. To achieve this, every gadget, as a local encoder that encodes incoming/outgoing Pauli operators to the internal legs, needs to support a set of \textit{independent} stabilizers over the internal legs according to the way the bond operators go through the gadget in the Pauli webs of these extra stabilizers that the gadget was not involved in the incoming/outgoing stabilizer code.
In turn, the total number of internal legs of the gadget, $n_L$, must increase to incorporate these extra stabilizers if we have less inter-gadget connectivity to begin with and to satisfy the SLPC.

Another way to visualize the trade-off between non-SLP and the number of internal legs of each gadget, $n_L$, is to consider a fixed connectivity of the gadget layout, in which case a dynamical code with relaxed connectivity can potentially be constructed using gadgets that have fewer internal legs compared to the one whose gadgets fully respect our notion of SLPC.
To see this, we need to consider spacetime stabilizers formed by connected gadgets that are completely inside the blackbox of $\M$.
These internal spacetime stabilizers are common in our examples of dynamical codes, especially in the CSS ones. For example, the CSS Floquet toric code and the local SASEC in Fig. \ref{fig: tccnot} both feature internal spacetime stabilizers on every square plaquette of the square lattice that do not involve incoming or outgoing legs. In principle, these internal spacetime stabilizers can be removed using rewrite rules of ZX-calculus, since they do not contribute to the mapping of the incoming code stabilizers and logical operators. Doing so will yield a ZX-diagram with fewer spacetime stabilizers, which can potentially be implemented by a circuit that requires a smaller spacetime resource given by $n_L$.
However, such rewrites typically destroy the strict locality pattern in the gadget layouts. For example, in Fig.~\ref{fig: tccnot}, one can contract the loop of four $X$ or $Z$ nodes inside each plaquette, removing the internal spacetime stabilizer formed by the four edges around these loops. This leads to the pairwise-measurement toric code, where the number of internal legs per gadget is $n_L=6$ (see Fig.~\ref{fig: pairwise} in Appendix~\ref{sec: gidney}), compared to $n_L=8$ in the local SASEC. This illustrates how relaxing SLPC can reduce $n_L$, at the cost of a more nonlocal Pauli-web structure and potential shortened spacetime distance $d_{st}$.

\section{Fabrication defects}
\label{sec:fabrication_defects}

In a real hardware device, fabrication defects are inevitably present in the form of defected qubits or connectors. 
Thus, given a fixed hardware layout of qubits with possible fabrication defects, a fundamental question arises: Can we construct a fault-tolerant protocol that best fits the  hardware constraints? Existing approaches to incorporate fabrication defects are mostly discussed in the context of surface code \cite{barrett_fault_2010,stace_thresholds_2009,tang_robust_2016,auger_fault-tolerance_2017,strikis_quantum_2023,debroy_luci_2024}. For Floquet codes, Refs. \cite{aasen_fault-tolerant_2023,mclauchlan_accommodating_2024} demonstrated ways to modify the measurement dynamics in the presence of qubit defects. However, a general approach for a broader class of dynamical codes is lacking.

Our gadget-based approach naturally allows for local modifications, making it well suited for designing error-correcting codes that accommodate hardware imperfections such as missing qubits, broken connectors, and nonuniform qubit connectivity. By systematically studying the effects of qubit dropout and non-local stabilizer measurements, our approach could inform hardware-aware code design, reducing the need for rigid pre-defined layouts. Below, we discuss how to modify a gadget solution for a given incoming stabilizer code $\cS$ after fabrication defects are introduced. More generally, we can look for gadget solutions given a hardware qubit layout and incoming stabilizer code $\cS$, by writing a corresponding gadget layout and finding dynamical code solutions.

\subsection{Gadget layout with broken connectors}
\label{sec: broken}

\begin{figure}
    \centering
    \includegraphics[width=\linewidth]{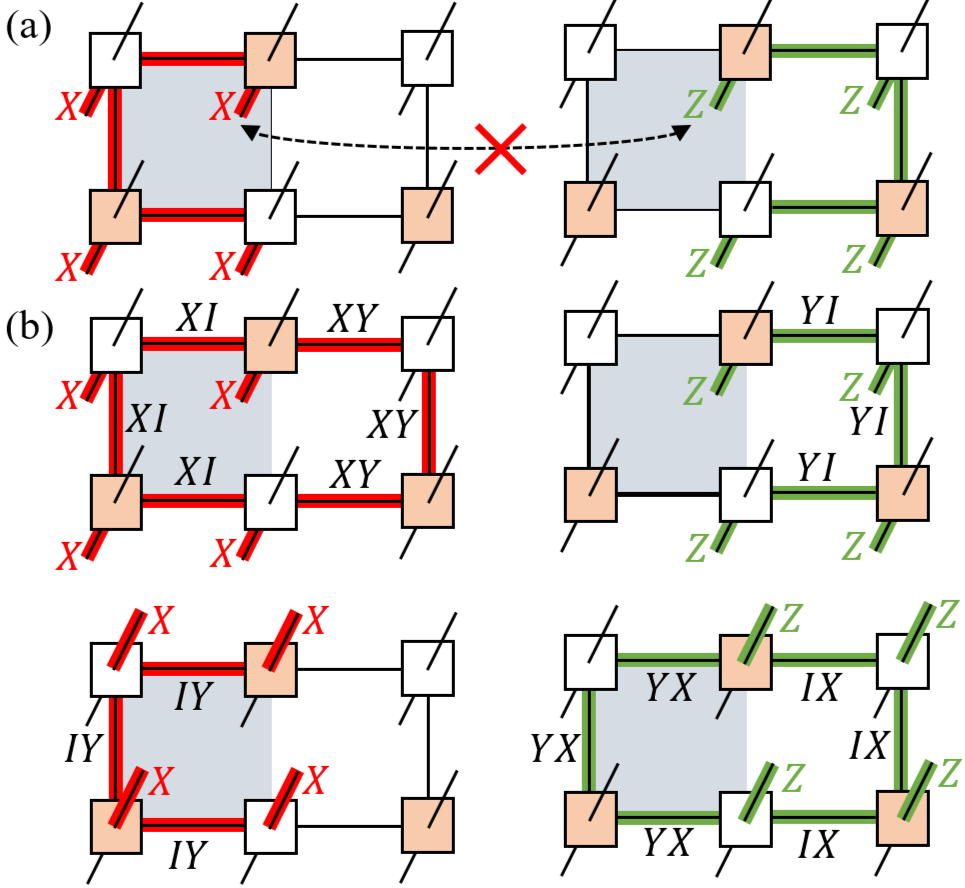}
    \caption{(a)With the presence of a missing bond, the local gadget layout does not meet the condition in Algorithm~\ref{alg: gadgetcon}, since according to the original mapping, the incoming $X$ and $Z$, as marked by the arrows, will be mapped to non-overlapping internal legs and does not preserve the anticommutation algebra. (b) The modified mapping of the incoming/outgoing $X$/$Z$ stabilizer on to the bond operators adjacent to the missing bond. The Pauli web of the incoming $X$ stabilizer enters the nearby gadgets that support the adjacent $Z$ stabilizer, and similarly for the Pauli web of the outgoing $Z$ stabilizer next to the missing bond. }
    \label{fig: brokenconn}
\end{figure}
\begin{figure}
    \centering
    \includegraphics[width=0.9\linewidth]{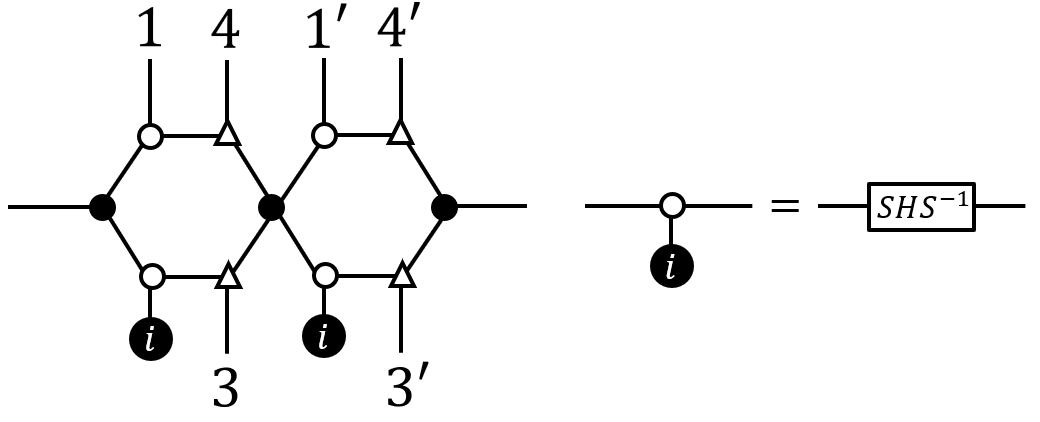}
    \caption{The ZX-diagram of the L gadget on the missing bond from the bond operators in Fig. \ref{fig: brokenconn}(b). We use the convention in Fig. \ref{fig: tcparam} for the four bond directions. The missing leg in direction 2 is replaced by an $X$ node connected to a $\frac{\pi}{2}$ $Z$ node. In circuit implementation it corresponds to the single qubit unitary gate $SHS^{-1}$, which exchanges $Z$ and $Y$.}
    \label{fig: brokenzx}
\end{figure}
\begin{figure}
    \centering
    \includegraphics[width=0.65\linewidth]{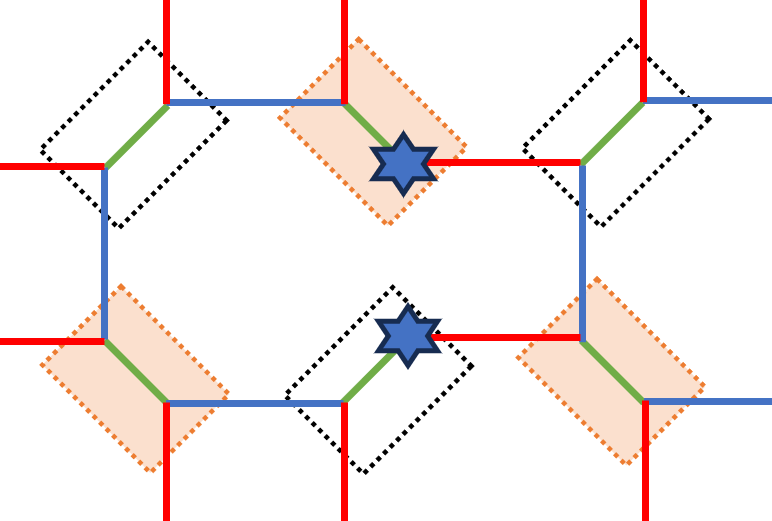}
    \caption{The local circuit implementation of the Floquet toric code with a missing bond. For the HH Floquet code, in the rounds $1$ and $\overline{1}$, we apply $SHS^{-1}$ to the two qubits at the two ends of the missing bond marked by the blue stars. Alternatively, one can measure the single qubit Pauli $X$ operator at the starred qubits at round $1$ and $\overline{1}$.}
    \label{fig: brokenschedule}
\end{figure}

From the viewpoint of gadget layout, to deal with broken connectors between physical qubits, we can consider a more restricted connection between gadgets, where all the internal legs on the bonds that are in the directions of the broken connectors are removed from the gadget layout, since in the circuit implementation, different internal legs on the same bond may use the same connector in the hardware.
In this case, the Pauli web of the incoming stabilizers can no longer satisfy the SLPC. To see this, we consider the gadget layout of the toric code with a missing bond, as shown in Fig. \ref{fig: brokenconn}. We could attempt to satisfy the SLPC by considering the bond operators of the incoming $X$ and $Z$ stabilizers that overlap on the missing bond, as shown in Fig. \ref{fig: brokenconn}(a). However, such a set of bond operators cannot faithfully encode the incoming Pauli operators, since the two gadgets on both ends of the missing bond maps incoming $X$ and $Z$ to non-overlapping internal legs and thus do not preserve the commutation algebra. As a remedy, we relax the SLPC so that the Pauli web of the incoming $X$ stabilizer next to the missing bond now enters the bonds of the adjacent plaquette that hosts the $Z$ stabilizer, and similarly for the outgoing $Z$ stabilizer, as shown in Fig. \ref{fig: brokenconn}(b). This yields a solution that is consistent with the commutation constraints. We note that since the locality is not fully preserved near the missing bond, the spacetime code distance can potentially be shortened.

For simplicity, we aim to preserve the rest of the dynamical code by doing a ``minimal rewrite" of the original dynamical code, i.e. we keep the rest of the gadgets unchanged and solve for encoding maps of the pair of gadgets connected by the missing bond accordingly. Of course, the gadgets of the rest of the system can be repeated in order to search for a nontrivial solution to the encoding map of the two disconnected gadgets if the solution for the disconnected gadgets needs more than one round of the original gadgets.

Let us further investigate the HH Floquet code with a missing bond and focus on the pair of L and R gadgets connected by the missing bond, while keeping the rest of the gadgets the same as the original 012 Floquet code. We consider letting the Pauli operator of the incoming $X$ stabilizer enter the adjacent plaquette that hosts the $Z$ stabilizer, see Fig. \ref{fig: brokenconn}(b). Clearly, the two gadgets on the top right and bottom right of the incoming stabilizer need to be stabilized by only two bond operators. However, we can check from Eq. \ref{eq: tctableau} that such a stabilizer does not exist in the Clifford encoding map of the gadget. Besides these gadgets connected to the broken bond, we can use all other gadgets as in the original dynamical code without broken bond and which produce the desired stabilizers $(XY)_1(XY)_4$ and $(YX)_3(YX)_4$ on the L gadget (we use the convention in Fig. \ref{fig: tcparam} for the four bond directions). These stabilizers come from the outgoing $X$/$Z$ operator originating from the previous gadget and entering the subsequent one. Using these stabilizers, we fix the bond operators as in Fig. \ref{fig: brokenconn} (b) including the ones for the pair of disconnected L and R gadgets. 
We then need to reconsider the encoding map of this pair. Using the bond operators, we find that the ZX-diagram of the L gadget on the missing bond has the form of Fig. \ref{fig: brokenzx}. Therefore, we obtain a minimal modification of the 012 Floquet code, which now has a 6-round schedule $012\overline{0}\overline{1}\overline{2}$, whose lattice layout is shown in Fig. \ref{fig: brokenschedule}. In rounds $1$ and $\overline{1}$, due to the broken connection between the two physical qubits, we instead perform single qubit unitary gates $SHS^{-1}$ on the two qubits at both ends of the broken connector. We warn the reader that, although the ZX-diagram in Fig. \ref{fig: brokenzx} looks like a 3-round schedule, these three rounds do NOT actually form a dynamical code, since the incoming $X$ stabilizer adjacent to the broken connector is not yet measured after these three rounds.

In addition to the choices in Fig. \ref{fig: brokenconn}, we find another solution to the set of bond operators around the missing connector, in which the bond operators of the incoming $Z$ and outgoing $X$ stabilizer will enter the nearby plaquette. In this case, the ZX-diagram of the L gadget on the missing bond is similar to the one in Fig. \ref{fig: brokenzx}, but the $\frac{\pi}{2}$ $Z$-node in the ZX-diagram is replaced by a phaseless $Z$ node. This yields another circuit implementation: at rounds $1$ and $1'$, the starred qubits in Fig. \ref{fig: brokenschedule} are measured in the $X$-basis, rather than being acted on by the unitary gates $SHS^{-1}$.

It is straightforward to verify that the modified HH Floquet code with broken connector is fault-tolerant with a preserved logical space using the method in Sec. \ref{sec:fault_tolerance}. In fact, we can directly visualize the logical automorphism for a logical operator whose Pauli web originally passes through the broken connector, see Fig. \ref{fig: brokenlogical} in Appendix \ref{sec: defectdistance}. There, we further demonstrate that the new HH Floquet code with a missing connector preserves the logical space, and its spacetime code distance is reduced from $d$ to $d-1$ under periodic boundary conditions.

Moving beyond the example of the HH Floquet code, the strategy of using multiple repetitions of the dynamical code $\M$ (either directly repeating or through $\overline{\M}\circ\M$) is a general approach to incorporate broken connectors into dynamical codes. This can be intuitively understood as follows. Suppose that we try to minimally rewrite the dynamical code with a broken connector by simply terminating the Pauli web of every incoming/outgoing stabilizer on single-leg nodes that are stabilized by the bond operator of the broken bond. This is usually impossible since the Pauli webs of a pair of incoming $X$ and $Z$ stabilizers may overlap on one bond with non-commuting bond operators, hence they cannot be stabilized simultaneously. To avoid this problem,
we repeat $\M$ (or $\overline{\M}\circ\M$) multiple times. When $\M$ is repeated $N$ times, the entire dynamical code now contains $N-1$ sets of internal spacetime stabilizers formed through the mechanism in Fig. \ref{fig: spacetimestab}. We can multiply the Pauli web of an incoming/outgoing stabilizer with any of these internal spacetime stabilizers, which yields another Pauli web of the same incoming code stabilizer (albeit potentially being non-local). We increase $N$ from 2 until we find a way to extend the Pauli webs of the incoming code stabilizers so that the overlapping bond operators on the broken bond all commute with each other and can hence be terminated on single-leg nodes. For example, in Fig. \ref{fig: brokenconn}(b), the Pauli webs of the incoming $X$ and outgoing $Z$ stabilizer are extended via internal spacetime stabilizers, so that the bond operators of the four Pauli webs are supported on the missing bond either as $YI$ or $IY$. Therefore, all these Pauli webs can be terminated simultaneously on the broken bond via single leg nodes that are stabilized by $YI$ and $IY$, which are exactly the two $\frac{\pi}{2}$ $Z$ nodes in Fig. \ref{fig: brokenzx}.

\subsection{Gadget layout with qubit dropout}
\label{sec: dropout}
\begin{figure}
    \centering
    \includegraphics[width=0.4\linewidth]{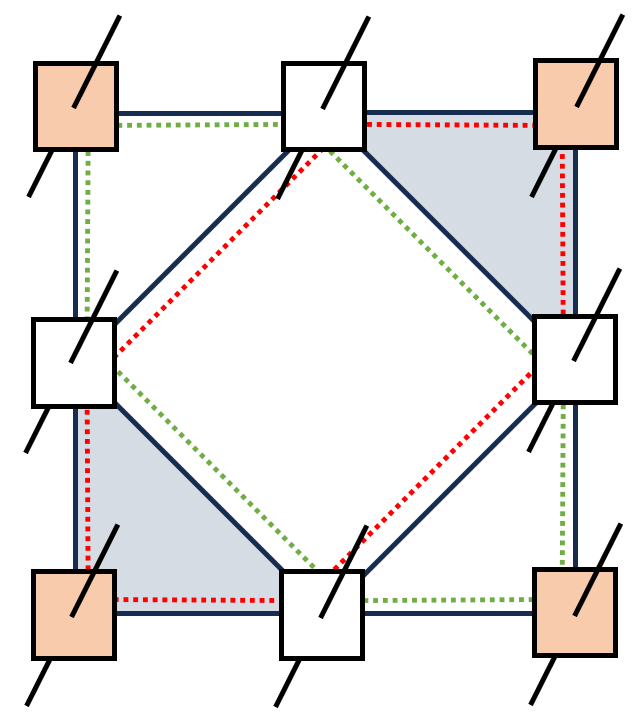}
    \caption{The local subsystem code around a dropout qubit and the corresponding gadget layout with the gadget that contains the dropout qubit removed. Here the gauge operators are weight-3 $X$ operators on the shaded triangles and weight-3 $Z$ operators on the unshaded triangles. The local stabilizers are the weight-6 $X$ operator around the red hexagon and the weight-6 $Z$ operator around the green hexagon marked by the dotted red and green hexagons, respectively. }
    \label{fig: dropout}
\end{figure}
\begin{figure}
    \centering
    \includegraphics[width=0.8\linewidth]{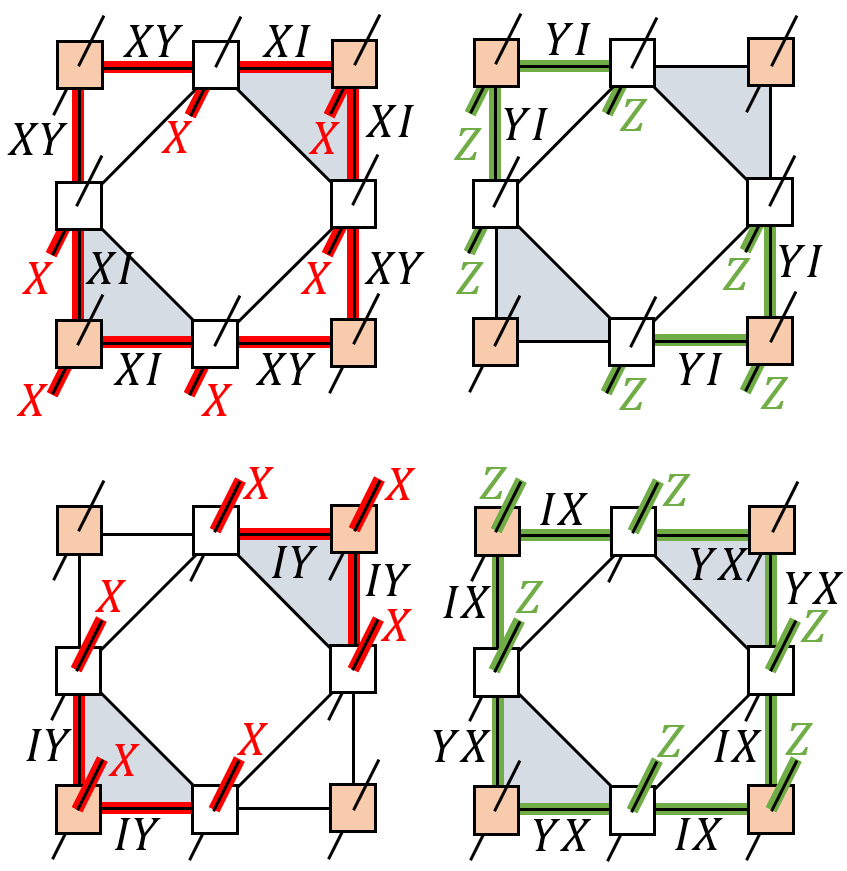}
    \caption{The bond operators of the incoming and outgoing weight-6 stabilizers.}
    \label{fig: dropoutbond}
\end{figure}
\begin{figure}
    \centering
    \includegraphics[width=0.8\linewidth]{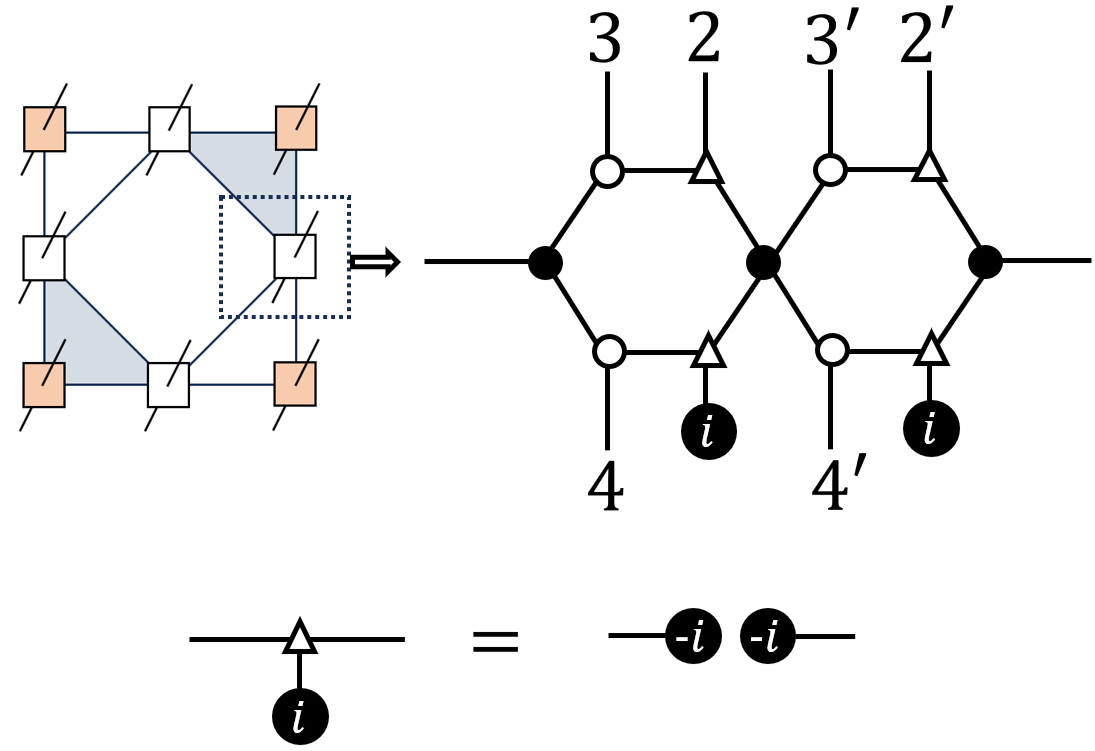}
    \caption{The ZX-diagram of the R gadget on the right of the missing R gadget. The $Y$ node in direction 1 is now replaced with a single-qubit projective measurement in $Y$ basis.}
    \label{fig: dropoutzx}
\end{figure}
\begin{figure}
    \centering
    \includegraphics[width=0.6\linewidth]{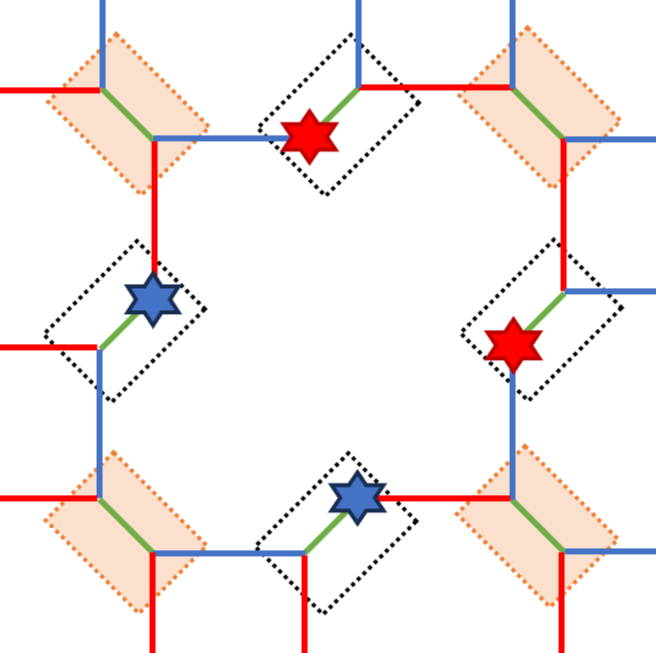}
    \caption{The circuit implementation of the 012 Floquet code with dropout qubits. There are two possible circuit implementations. (A): at rounds $1$ and $\overline{1}$, we apply $SHS^{-1}$ to the blue starred qubits, and at rounds $2$ and $\overline{2}$, we measure the red starred qubits in the $Y$ basis. (B): At rounds $1$ and $\overline{1}$, we measure the blue starred qubits in the $X$ basis, and at rounds $2$ and $\overline{2}$ we apply $H$ gate to the red starred qubits.}
    \label{fig: dropoutlat}
\end{figure}

We now consider another type of fabrication defect, the qubit dropout, i.e. the physical qubit on the spatial location is defective and not present in the circuit. 
Before constructing the new dynamical code with qubit dropout, we have to redesign the incoming and outgoing stabilizer codes according to the spatial concatenation inside the gadget.
To minimally modify the original dynamical code, we can associate a gadget with each qubit of the incoming stabilizer code, and hence remove the gadget associated with the defected qubit and the legs supported on it.
With the defected qubits removed, the stabilizers that overlap on those data qubits become incomplete, and potentially not commuting with each other. A standard approach here is to treat these incomplete stabilizers as gauge operators of a local subsystem code~\cite{auger_fault-tolerance_2017}.  For example, for toric code with a data qubit dropout, the four incomplete stabilizers now form a set of local gauge operators, while the stabilizer subgroup is the weight-6 $X$ and $Z$ operators, shown in Fig. \ref{fig: dropout}. We note that with any redesigned stabilizer codes, the stabilizer code distance will already be potentially reduced. This is a major difference compared to the case with a broken connector, where the incoming stabilizer code remains unchanged.
Similarly to Appendix \ref{sec: fermion}, when constructing dynamical codes from subsystem codes, we only need to consider bond operators of the incoming and outgoing stabilizers, rather than the entire gauge group. Again, we will look for ``minimal rewrites" from already constructed dynamical codes. Furthermore, if possible, we will try to avoid adding more connections between gadgets around the dropout qubit. 

We use the HH Floquet code as an example. Consider a single physical qubit dropout. Since both physical qubits inside the gadget are connected to the incoming and outgoing legs via the (dual) spatial encoder, a physical qubit dropout causes the removal of the data qubit that is represented by the incoming/outgoing legs from the toric code. Therefore, in gadget layout, we need to completely remove the gadget that contains the defective physical qubit. That is, both the defected qubit, the other physical qubit inside the gadget, and all the connections between the defected gadget and nearby gadgets are removed, see Fig. \ref{fig: dropout}. To aim for a minimal rewrite, we assume that all gadgets, except the four L gadgets that originally connect to the defective R gadget, remain the same (up to repetition). 
We find that, with the rest of the gadgets being repeated twice, there exists a set of bond operators for the two weight-6 local stabilizers around the $2\times 2$ square that consistently encode the incoming and outgoing Pauli operators of the four L gadgets, see Fig. \ref{fig: dropoutbond}\footnote{Note that the dynamical code actually performs a switch of gauge in the local subsystem code, where in the incoming stabilizer code the gauge are fixed to be two weight-3 $Z$ stabilizers on the two unshaded triangles on the top left and bottom right in Fig.\ref{fig: dropout}, and the weight-6 $X$ stabilizer on the dotted red hexagon there. The outgoing stabilizer code is stabilized by two weight-3 $X$ stabilizers and the weight-6 $Z$ stabilizer. }. In fact, the L gadget at the bottom produces exactly the same encoding maps as the L gadget in Fig. \ref{fig: brokenconn}(b), therefore its ZX-diagram is still the one in Fig. \ref{fig: brokenschedule}. Meanwhile, the ZX-diagram of the gadget on the right is slightly different, which we show in Fig. \ref{fig: dropoutzx}. We then arrive at the circuit implementation with qubit dropout, as shown in Fig. \ref{fig: dropoutlat}. The schedule is now 6-round, which is denoted by $012\overline{012}$. In rounds $1$ and $\overline{1}$, the qubits marked by the blue stars in Fig. \ref{fig: dropoutlat} undergo unitary gates $SHS^{-1}$, while in rounds $2$ and $\overline{2}$, the two qubits marked by the red stars are measured in the $Y$ basis. 

Similar to the case of missing bond, we find an alternative solution to the bond operators around the missing gadget, where the bond operators of the incoming $Z$ and outgoing $X$ stabilizers will encircle the $2\times2$ region, instead of the incoming $X$ and outgoing $Z$ stabilizers in Fig. \ref{fig: dropoutbond}. In this case, we obtain an alternative 6-round schedule $012\overline{012}$, where in rounds $1$ and $\overline{1}$, we measure the two blue starred qubits in Fig. \ref{fig: dropoutlat} in their $X$ basis, and at rounds $2$ and $\overline{2}$, we apply Hadamard gates to the two red-starred qubits.

Similarly to the broken connector case, we can verify fault-tolerance of the modified HH FLoquet code with dropout qubit using the method in Sec. \ref{sec:fault_tolerance}, see Appendix \ref{sec: defectdistance}. There, we further show that the spacetime distance of the new 012 Floquet code with one qubit dropout is reduced from $d$ to $d-1$. We will also compare our approach to the one proposed in Ref. \cite{mclauchlan_accommodating_2024}.

Our approach can be extended to other dynamical codes following a similar approach for the broken connector discussed at the end of Sec. \ref{sec: broken}. That is, if we repeat $\M$ multiple times, we can find extended Pauli webs for the spacetime stabilizers of the incoming/outgoing stabilizers of the local subsystem code that overlap on the bonds that connect to the dropout qubit, and the bond operators all commute. When the dropout qubit is removed, we simply terminate these bonds on single-leg nodes that are stabilized by the mutually commuting bond operators.

\section{Outlook}
\label{sec:outlook}

Our work introduces the concept of spacetime concatenation, which constructs quantum circuits that effectively measures the stabilizers of a stabilizer code through code concatenation (spatial concatenation) and sequence of small-weight measurements between codes used for concatenation (temporal concatenation). This is achieved by decomposing the overall measurement circuit into a collection of localized gadgets that are interconnected in spacetime through internal legs. We imposed the SLPC on the mapping of the Pauli operators to the internal legs under the action of these gadgets. Together with the MLSC, our construction leads to spacetime distance-preserving low-weight measurement solutions of the dynamical code.
This approach not only recovered well-known examples, such as the Floquet toric codes, but also led to new constructions, including the Floquet BB code and a Floquet Haah code. We discussed a notion of spacetime equivalence, which provides a restricted setting to discuss the classification of dynamical codes. 
Furthermore, we established a resource theory for dynamical codes that quantified overhead in terms of circuit depth, the number of physical qubits, and non-SLP connectivity. Finally, we presented a systematic approach for incorporating fabrication defects in dynamical codes, addressing challenges such as missing connectors and qubit dropouts, ensuring implementation in realistic hardware.

In the future, several key directions emerge for further exploration and optimization.

\paragraph{Hardware-first design}:
Our spacetime-concatenation approach would be instrumental in the construction of hardware-aware fault-tolerant LDPC codes. Specifically, one could start by writing a gadget layout tailored to a given hardware layout and then systematically search for fault-tolerant LDPC code protocols that can be supported on it. If one wants to implement a protocol for a particular incoming/outgoing stabilizer code $S$ on a given hardware layout, then one can check if the associated gadget layout allows for a compatible dynamical code solution or what minimal tweaks could be made to the hardware layout to allow for a solution. Such an approach could lead to practical implementations of high-distance, low-overhead Floquet error correction schemes for various hardware platforms. 
\paragraph{Decoding Strategies for Dynamical Codes}:
Traditional decoding strategies for stabilizer codes rely on spatial syndromes, but dynamical codes introduce an additional temporal structure that could be leveraged for sophisticated error detection and correction. We expect our general approach of constructing Floquet codes through spacetime concatenation to yield insights for developing novel and possibly more efficient spacetime-decoding algorithms, leading to scalable error correction tailored to hardware implementations.

\paragraph{Optimizing Resource Overheads}:
A crucial advantage of Floquet codes is the ability to reduce the weight of stabilizer measurements, at the cost of increased number of measurements and ancillary qubit overhead. There also exists an inherent trade-off between the ancillary qubit overhead and circuit depth. While our constructions show that shorter schedules are achievable through increased dynamical qubit redundancy, a systematic investigation into these trade-offs—both analytically and numerically—would provide deeper insights into the most efficient implementations for practical quantum hardware. Exploring techniques such as tensor network optimization of encoding circuits and compressed ZX-diagram representations may lead to more resource-efficient constructions.

\paragraph{Logical Gates and Universality in Floquet Codes}: 
The dynamical code examples in our work can induce various Clifford unitary actions on the logical subspace while preserving the stabilizers. For example, in the 012 Floquet toric code, one period 012 leads to an action of a logical Hadamard and a swap gate on the two logical qubits. 
A key question is whether dynamical codes can yield non-Clifford logical gates by generalizing our spacetime concatenation framework to codes beyond Pauli stabilizer codes. Our framework is well-suited for such a generalization and we are pursuing this in an ongoing work. 

\paragraph{Generalization to non-LDPC stabilizer codes}:
Our spacetime concatenation framework and the proof of fault-tolerance apply to LDPC stabilizer codes. We conjecture that we can simply generalize the strict-$k$-locality preservation condition to a connectivity condition where we allow maps of the gadget to a possibly large number of legs, again according to the connectivity of the qubit to other qubits via stabilizers. Because we preserve both the connectivity and algebra of operators under the gadget map, we expect that obeying this generalized connectivity condition will lead to fault-tolerant dynamical condition compilations of arbitrary stabilizer codes, including non-LDPC codes.

\section*{Acknowledgements}
 YX thanks Chao-Ming Jian and Zhehao Zhang for helpful comments and Shoham Jacoby for pointing out an inconsistency in the Floquet BB code construction. YX and AD thank Tyler Ellison and Victor Albert for comments. AD also thanks Jeongwan Haah for pointing out the assumption of propagation of errors in our fault-tolerance proof, which we have now proven rigorously. 
 AD thanks Victor Albert also for help with the upcoming error correction zoo entry on space-time concatenation codes. YX acknowledges support from the NSF through OAC-2118310. AD acknowledges support from the start-up fund at Virginia Tech. This research was supported in part by the NSF grant PHY-2309135 to the Kavli Institute for Theoretical Physics (KITP).
This work was performed in part at the Aspen Center for Physics, which is supported by the National Science Foundation grant PHY-2210452.

\bibliography{main}

\appendix

\section{A short introduction to ZX-calculus}
\label{sec: zxrules}
\subsection{ZX rewrite rules}
In this appendix, we give a brief introduction to ZX-calculus and summarize some useful ZX rewrite rules  that are relevant to our work. A more comprehensive introduction to ZX-calculus can be found in Refs. \cite{backens_zx-calculus_2014,wetering_zx-calculus_2020}.

ZX-calculus is a diagrammatic way of representing quantum circuits. Every ZX-diagram is generated by two types of notes, called $X$ nodes and $Z$ nodes. Nodes with three or more legs are also referred to as spiders. The $X$/$Z$ spider represents a phased GHZ state in $X$/$Z$ basis over all the legs around the node\footnote{In the literature, it is more common to put the phase $\alpha$ rather than $e^{i\alpha}$ inside the node. The convention we use here makes measurement outcomes clearer, such as the ones in Eqs. \eqref{eq: xpair2} and \eqref{eq: xstabmeas}.}:
\begin{align}\nonumber
    &\tikzfig{znode}= \ket{+\cdots+} \bra{+\cdots+}+e^{i\alpha}\ket{-\cdots-}\bra{-\cdots-},\\
    &\tikzfig{xnode}= \ket{0\cdots0} \bra{0\cdots0}+e^{i\alpha}\ket{1\cdots1}\bra{1\cdots1}.
\end{align}
Physically, in our work, every node only contains one incoming leg for one outgoing leg, and each pair of incoming and outgoing legs corresponds to one qubit in the projector in the explicit expression above. More generally, a ZX node can have different incoming and outgoing legs such that the map is an isometry.

When the node is phaseless, that is, $\alpha\in2\pi\Z$, we omit the phase inside the node. If a node only has one leg, it represents initializing or projecting a single qubit on the following states:
\begin{align}
    &\tikzfig{xinit}=\ket{+}+e^{i\alpha}\ket{-},\ \tikzfig{zinit}=\ket{0}+e^{i\alpha}\ket{1}.
\end{align}

A node with two legs represents a unitary single-qubit gate. In particular, an $X$ gate is simply a two-leg $X$ node with phase $\alpha=\pi$. Any two-leg node with phase $2\pi$ is identity and can be removed. 
The Hadamard gate is represented in ZX-calculus by the H-node, which can be decomposed using three $\pi/2$ nodes of $X$ and $Z$:
\begin{equation}
    \tikzfig{Hadamard}=\frac{1}{\sqrt{2}}\left(\begin{array}{cc}
        1 & 1 \\
        1 & -1
    \end{array}\right)=e^{-\frac{i\pi }{4}}\tikzfig{Hadamard2}.
\end{equation}
Note that the $\frac{\pi}{2}$ phase $Z$ node represents an $S$ gate.

In ZX-calculus, the two-qubit entangling gates CNOT and CZ can be represented by phase-free diagrams
\begin{equation}
    \tikzfig{cnotcz},
\end{equation}
respectively. In the ZX-diagram of the CNOT gate, the physical qubit connecting to the $Z$ ($X$) node is the control (target) qubit.

Since we only consider Clifford dynamical codes in our work, it is worth mentioning that every Clifford circuit (which potentially involves Pauli basis measurements) can be represented by a ZX-diagram where the phase of every node is an integer multiple of $\frac{\pi}{2}$.

We then list a few important rewrite rules of the ZX-calculus. First, a pair of connected $X$ spiders can be fused into a single $X$ node:
\begin{equation}
    \tikzfig{xfusion},
\end{equation}
and similarly for the $Z$ spider. The two spiders are switched via conjugation of the Hadamard node:
\begin{equation}
    \tikzfig{xhard}.
\end{equation}
Two important rewrite rules of phase-free spiders are the Hopf rule
\begin{equation}
    \tikzfig{hopf}
\end{equation}
and the bialgebra rule
\begin{equation}
    \tikzfig{bialgebra}.
    \label{eq: bialgebra}
\end{equation}
In the RHS of the bialgebra rule, every leg on the left (right) is connected to an $X$ (a $Z$) node, and the $X$ and $Z$ nodes form a bipartite graph where every $X$ node is connected to all the $Z$ nodes and vice versa. 

In understanding fault-tolerance using ZX-calculus, every single-qubit Pauli error is represented by a single $\pi$-node of the corresponding species \cite{bombin_unifying_2024}. In this case, one useful ZX-rule when dealing with Pauli errors is the $\pi$-node commutation rule (here we ignore the global phases of the wave functions):
\begin{equation}\label{eq: pinode}
    \tikzfig{pinode}.
\end{equation}
Alternatively, a single $\pi$ node of $Z$ on one of the legs of the $X$ node will spread to all the other legs of the $X$ node while flipping the phase of the $X$ node. 

\subsection{Pauli web in Clifford ZX-calculus}
\label{sec: pauliweb}
One particular advantage of using ZX-calculus in QEC is its convenience to keep track of the flow of Pauli operators through the circuit. In general, a Pauli operator supported on any leg of the ZX-diagram can travel through the nodes, leaving a web of Pauli operators throughout the ZX-diagram. For example, since the phase-less $X$ node is stabilized by any pair of $X$ operators on two of its legs or the product of  $Z$ operators on all of its legs, the Pauli web around an $X$ node has the following possible forms
\begin{equation}
    \tikzfig{xweb}.
\end{equation}
Unless otherwise stated, the red/green shaded edges in the ZX-diagrams represent an $X$/$Z$ operator supported on the corresponding edge.
Similarly, the Pauli web near a phase-less $Z$ node can have the following forms:
\begin{equation}
    \tikzfig{zweb}.
\end{equation}
The Hadamard gate changes the coloring of the edge:
\begin{equation}
    \tikzfig{hweb}.
\end{equation}
In more general Clifford ZX-diagrams where the nodes can carry phases $\frac{\pi}{2} \Z$, it is still possible to construct Pauli webs. For convenience, we can even introduce the $Y$ operator on the edges. However, it is now important to keep track of the overall phase from the Pauli Web, which will be important for measurement outcomes in circuit implementation of ZX-diagrams. See Ref. \cite{fuente_xyz_2024}.

\subsection{Adaptation of the ZX-rules in actual circuits with measurements}
Now that we have introduced the general rules of ZX-calculus, we now present some useful ZX-rules when dealing with ZX-diagrams in the circuit layout.

The most frequently used ZX-rule in our paper is the conversion from a four-leg spider to pairwise measurement, namely
\begin{equation}
    \tikzfig{xpair}.
\end{equation}
Here, the final circuit layout can be understood as three steps: (1) initialize an ancilla in $\ket{+}$; (2) apply two CNOT gates between the two physical qubits with the ancilla; (3) project the ancilla onto $\ket{+}$. This means that, in an actual circuit, the four-leg $X$ spider corresponds to a pairwise measurement in the $X$ basis and post-selecting the measurement outcome $XX=1$. If one does not post-select, but rather record the measurement outcome of the ancilla, we denote this via an extra outgoing leg:
\begin{equation}\label{eq: xpair2}
    \tikzfig{xpair2}.
\end{equation}
The value of $XX$ after measurement is equal to the measurement outcome of the ancilla in $X$ basis. Note that we switch the equality to ``$\sim$" since the two ZX-diagrams are not completely the same, as the one on the RHS has a potential $\pi$ node.

The conventional way of measuring higher-weight stabilizer, dubbed the ``SASEC" in our work, is represented by the ZX-diagram of the kind shown below: here we show an example of a measurement circuit of the product of four Pauli $Z$ operators of four physical qubits:
\begin{equation}\label{eq: zpair4}
    \tikzfig{weight4}.
\end{equation}
Similar to Eq. \eqref{eq: xpair2}, the ZX-diagram on the LHS represents measuring the product of all the $Z$ operators and post-selecting the measurement outcome $+1$, while the RHS of the equivalence relation represents a typical circuit implementation of the SASEC where one entangles every physical qubit with an ancilla through CNOT gates and records the measurement outcome of the ancilla in $Z$ basis (without post-selection), which yields the eigenvalue of $ZZZZ$ acting on the four physical qubits.

In most of our constructions of dynamical codes, we always assume a post-selected measurement outcome $+1$ for every multi-qubit measurement for convenience in the ZX-diagram. In this way, every incoming stabilizer will literally be mapped by the ZX-diagram of the circuit to $+1$.
In actual circuit implementations, one can always keep track of the measurement outcomes by spelling out the actual measurement circuit, like in Eqs. \eqref{eq: xpair2} and \eqref{eq: zpair4}. In this way, the measurement outcome of every incoming stabilizer can be inferred from a series of measurement outcomes along the Pauli web of the incoming stabilizer. For example, using Eq. \eqref{eq: xpair2}, around a ring of four connected gadgets in the 012 Floquet code which becomes an octagon with green and blue bonds after spatial concatenation, the incoming $X$ stabilizer around  in Fig. \ref{fig: tcsync} is measured by the following ZX-diagram:
\begin{equation}\label{eq: xstabmeas}
    \tikzfig{xstab},
\end{equation}
i.e. the value of the stabilizer $XXXX$ is equal to the product of measurement outcomes of the four pairwise measurements of $XX$ in the ZX-diagram in Eq. \eqref{eq: xstabmeas}.

In applications to Floquet codes, in order to represent the pairwise $Y$ measurement, we introduce the $Y$ node as an $X$ node whose every leg is conjugated by $\pi/2$ $Z$ nodes (i.e. $S$ gates):
\begin{equation}\label{eq: ynode}
    \tikzfig{ynode}.
\end{equation}
A pair of connected three-leg $Y$ nodes therefore represents a pairwise measurement of $YY$ of two physical qubits:
\begin{equation}
    \tikzfig{ypair}.
\end{equation}

\subsection{Circuit-level errors and decoding in ZX-calculus}
\label{sec: decoding}
For the sake of decoding the dynamical code with 
circuit-level errors, we discuss the way to incorporate different types of these errors in ZX-calculus. 

In the ZX-diagram of a dynamical code in the circuit layout, edges of the ZX-diagram are either pointing in spatial or temporal directions. We will call these directions ``space-like" and ``time-like", respectively. Single-qubit Pauli errors during the idling stage of the qubits are represented by $\pi$-nodes of the corresponding species on time-like edges. For example, a single qubit bit-flip and phase flip errors are represented by
\begin{equation}
    \tikzfig{xerror}\text{ and }\tikzfig{zerror},
\end{equation}
respectively. A $Y$ error can be represented by the composition of the above two nodes. 

Meanwhile, readout errors of multi-qubit measurements can be represented by $\pi$ nodes on the space-like edges. For example, a faulty pairwise $XX$ measurement can be represented by a single $\pi$ node of $Z$ between the two $X$ nodes:
\begin{equation}
    \tikzfig{pairerror},
\end{equation}
where the $\mp$ indicates the measured eigenvalue is $\mp1$ when the measurement was supposed to be $\pm1$.
More general multi-qubit measurement errors can be similarly defined by flipping the $\pm 1$ node, which records the measurement result, from $\pm 1$ to $\mp 1$ in the ZX-diagram.

To decode the dynamical code, we gather all the measurement outcomes of the lower-weight Pauli measurements, whose spacetime locations are directly indicated by the ZX-diagram in the way similar to Eq. \eqref{eq: xstabmeas}. These measurement outcomes are then converted to syndromes for the spacetime stabilizers via multiplying all the measurement outcomes that are part of the spacetime stabilizers. The ZX-diagram can also be turned into a decoding (hyper-)graph, where every $-1$ syndrome is represented by a vertex of the graph, and every circuit-level Pauli error is represented by a (hyper-)edge that connects all the syndrome vertices that it flips. The syndromes and the decoding graph can be fed into an appropriate decoder.
\begin{figure}
    \centering
    \includegraphics[width=0.7\linewidth]{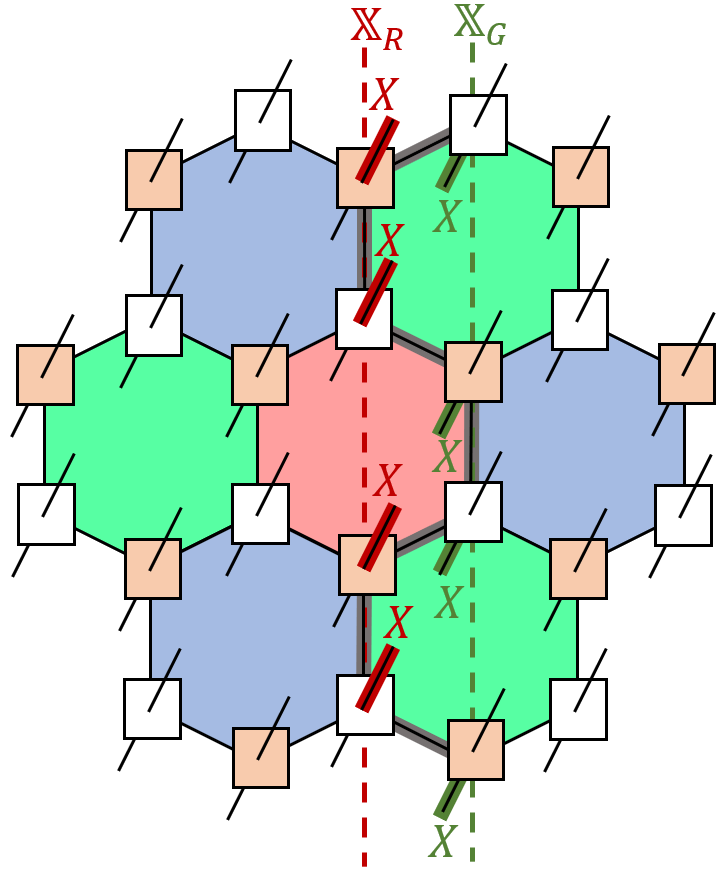}
    \caption{The gadgets for Floquet color code. The shaded lines is a demonstration of the $\Z_3$ logical automorphism given by the bond operators in Eq. \eqref{eq: ccsol}, where the incoming green logical operator $\mathbb{X}_G$ is mapped to the outgoing red logical operator $\mathbb{X}_R$.}
    \label{fig: cclogical}
\end{figure}

\begin{figure}
    \centering
    \includegraphics[width=0.8\linewidth]{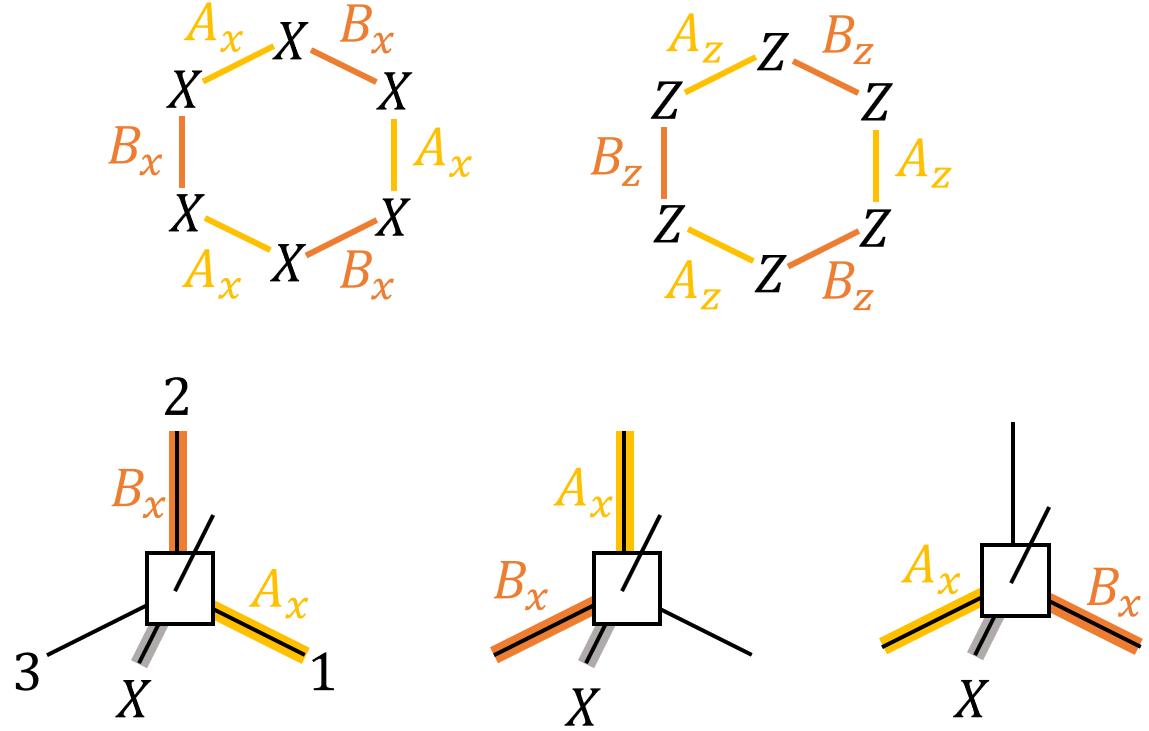}
    \caption{Parametrization of the bond operators for the 2D color code. The parametrization of the incoming $Z$ and outgoing $X/Z$ operators are similar to the one for the incoming $X$ operator. For simplicity only the encoding of the incoming $X$ operator is shown, while other incoming and outgoing operators are encoded similarly.}
    \label{fig: ccparam}
\end{figure}

\section{Floquet Color Code and Floquet Steane Code }
\label{sec: csscc}

\begin{figure*}
    \centering
    \includegraphics[width=0.6\linewidth]{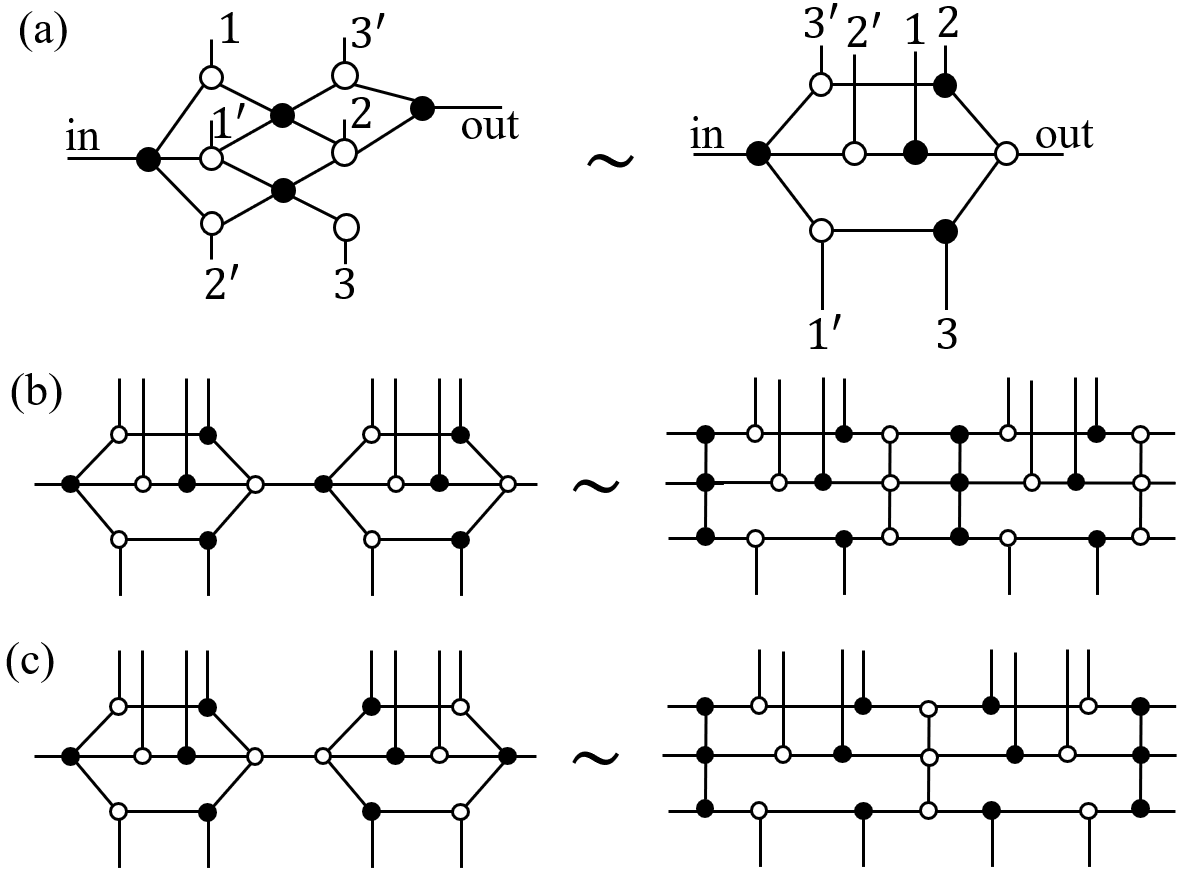}
    \caption{(a) The ZX-diagram of the gadget that produces the encoding map in Eq. \eqref{eq: ccmat} with the solution in Eq. \eqref{eq: ccsol}. (b) Using the Hopf rule of ZX-calculus, repeated gadgets give a CSS Floquet color code where each data qubit is replaced by 3 physical qubits. (c) A more efficient CSS Floquet color code schedule where each stabilizer is measured twice after 7 rounds of measurements.}
    \label{fig: cczx}
\end{figure*}

In this appendix, we first construct a 2D CSS Floquet color code with periodic boundary condition using gadget layout, whose incoming and outgoing stabilizer codes $\cS$ and $\overline{\cS}$ are both 2D color codes on hexagonal lattice, where the incoming/outgoing stabilizers are weight-six $X$ and $Z$ operators around each hexagon. We then construct dynamical code for the smallest open-boundary Color Code, known as the $[[7,1,3]]$ Steane code, and realize a variety of single logical qubit operation via manipulation of the Floquet schedule.

\subsection{Floquet Color Code}

Similar to the constructions in the main text, each data qubit in the 2D color code is replaced by a gadget, see Fig. \ref{fig: cclogical}. The gadgets are connected along the direction of the hexagonal lattice. We consider a homogeneous internal leg layout so that each bond hosts $l$ internal legs. Similar to Fig. \ref{fig: haahproof}, it is easy to see that $l$ should be at least 2 so that the overlapping stabilizers on a gadget can be consistently encoded. To construct the encoding matrix, we parametrize the bond operators in a 3-fold rotational symmetric way, as shown in Fig. \ref{fig: ccparam}. The bond operators $A_{x,z}$, $B_{x,z}$, $\overline{A_{x,z}}$ and $\overline{B_{x,z}}$ are Pauli strings of weight-$l$. Again, we assume translation invariance, so that each type of stabilizer is parametrized by the same set of bond operators.
For simplicity, we consider CSS encoding, so that each bond operator only consists of Pauli $X$ or $Z$, depending on the subscript. Furthermore, since the hexagonal lattice is bipartite, we only need to consider gadgets of one of the two sublattices, for which the $X$ and $Z$ encoding matrices read (again, ``0"s in between the two vertical lines denote a zero vector of length $l$).
\begin{widetext}
\begin{align}\label{eq: ccmat}
    H_X=\left(\begin{array}{c|ccc|c}
        1& [A_x] & [B_x] & 0 &0 \\
        1 &0&[A_x]& [B_x]&0\\
        1&B_x&0&[A_x]&0\\
        0&[\overline{A_x}] &[\overline{B_x}]&0&1\\
        0& 0&[\overline{A_x}]&[\overline{B_x}]&1\\
        0&[\overline{B_x}]&0&[\overline{A_x}]&1
    \end{array}\right),\ H_Z=\left(\begin{array}{c|ccc|c}
        1& [A_z] & [B_z] & 0&0  \\
        1 &0&[A_z]& [B_z]&0\\
        1&[B_z]&0&[A_z]&0\\
        0&[\overline{A_z}] &[\overline{B_z}]&0&1\\
        0& 0&[\overline{A_z}]&[\overline{B_z}]&1\\
        0&[\overline{B_z}]&0&[\overline{A_z}]&1
    \end{array}\right).
\end{align}
\end{widetext}
Here the three columns on the right of the vertical lines are bond operators in the directions from 1 to 3 in Fig. \ref{fig: ccparam}. We find that the consistency condition $H_XH_Z^T=0\mod 2$ can indeed be solved with $n_{L,i}=2$. One representative solution in this case is
\begin{align}\label{eq: ccsol}\nonumber
    [A_x]=11,\ [B_x]=10,\ [A_z]=10,\ [B_z]=01,\\
    [\overline{A_x}]=10,\ [\overline{B_x}]=01,\ [\overline{A_z}]=01,\ [\overline{B_z}]=11.
\end{align}
All the other solutions are related by permutation of the three directions or exchanging $X$ and $Z$, which all lead to the same dynamical code once the gadget are connected. The total rank of the encoding matrices with these solutions is
\begin{equation}
    \text{rank}(H_X)+\text{rank}(H_Z)=8=2m+n_L=2+2\times 3.
\end{equation}
Therefore, the gadget is completely fixed by the encoding map. 

The ZX-diagram that produces the encoding maps in Eq. \eqref{eq: ccmat} with the solutions in Eq. \eqref{eq: ccsol} is shown in Fig. \ref{fig: cczx}, where the unprimed and primed legs denote the first and the second element in the Pauli string of the bond operators. Since the primed legs always come before the unprimed ones in the ZX-diagram of the gadget, the schedule is already synchronized between gadget of the two sublattices. This leads to a 4-round pairwise measurement schedule: gZZ, bXX, rZZ, gXX, where different colored bonds are marked in Fig. \ref{fig: ccschedule}.
\begin{figure}[H]
    \centering
    \includegraphics[width=0.5\linewidth]{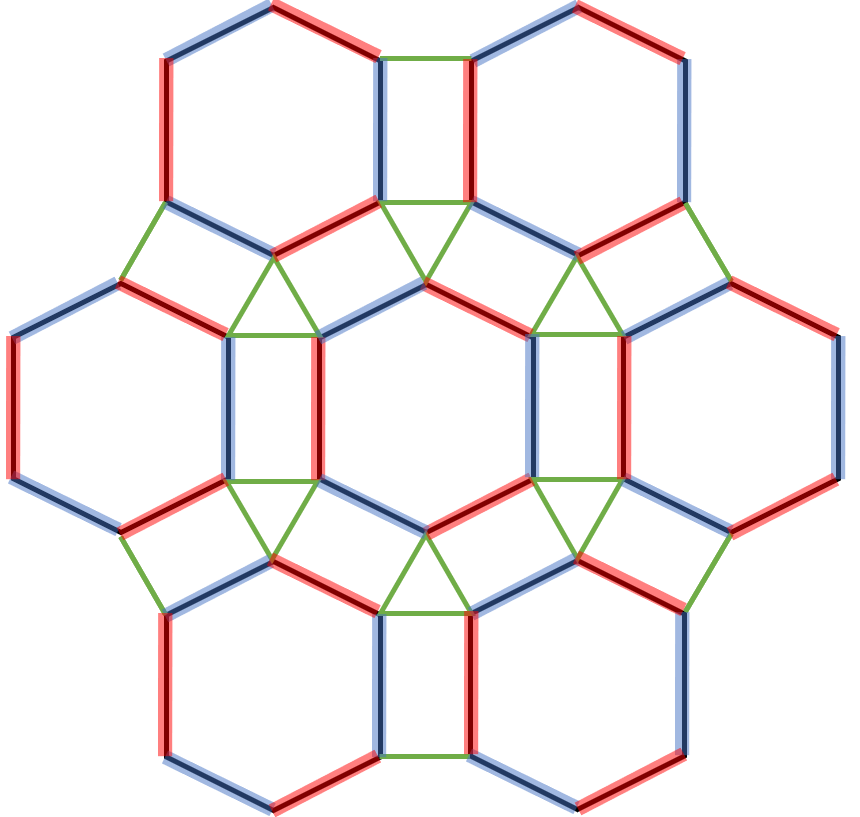}
    \caption{The measurement schedules of the CSS Floquet color code. Spatially each data qubit in the color code is replaced by 3 physical qubits around the gree triangle. The 4-round measurement schedule is gZZ, bXX, rZZ and gXX. An alternative 6-round schedule which measures each stabilizer twice is gZZ, bXX, rZZ, gXX, bZZ and rXX. }
    \label{fig: ccschedule}
\end{figure}

\begin{figure}[H]
    \centering
    \includegraphics[width=0.4\linewidth]{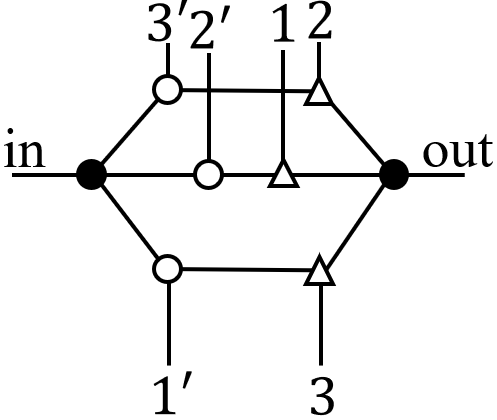}
    \caption{The ZX-diagram of the gadget layout of the XYZ ruby color code in Ref. \cite{fuente_xyz_2024}.}
    \label{fig: ccruby}
\end{figure}

We now demonstrate that the logical subspace is preserved by the gadgets of the CSS Floquet color code. To see this, we notice that, in the solution in Eq. \eqref{eq: ccsol}, $B_x=\overline{A_x}$ and $B_z=\overline{A_z}$. Therefore, given the encoding map of the gadgets, the incoming logical $X$ operator of the green sublattice, $\mathbb{X}_G$, will be mapped to the outgoing logical $X$ operator of the red sublattice $\mathbb{X}_R$, and similarly for the logical $Z$ operators. See Fig. \ref{fig: cclogical}. To summarize, the $\Z_3$ logical automorphism of the CSS Floquet color code can be written as
\begin{equation}
    \M(\mathbb{X}_G)=\mathbb{X}_R,\ \M(\mathbb{X}_R)=\mathbb{X}_B,\ 
    \M(\mathbb{X}_B)=\mathbb{X}_G,
\end{equation}
and similarly for the logical $Z$ operators. Therefore, the logical subspace is preserved by the gadgets.

We note that there is a more efficient measurement schedule where each stabilizer is measured twice in the span of 7 rounds. This can be constructed by following this gadget with another gadget whose incoming leg is connected to an $X$-node, see Fig. \ref{fig: cczx}(c)\footnote{This gadget is represented by the solution $[A_X]=10,\ [B_X]=01,\ [\overline{A_X}]=01,\ [\overline{B_X}]=11,\ [A_Z]=11,\ [B_Z]=10,\ [\overline{A_Z}]=10$ and $[\overline{B_Z}]=01$ to the consistency equation of the encoding matrices in Eq. \eqref{eq: ccmat}.}. The corresponding 6-round measurement schedule is gZZ, bXX, rZZ, gXX, bZZ, rXX. 

The CSS Floquet color code from our construction is notably inequivalent from earlier proposals of Floquet color codes that require pairwise measurements in $X$, $Y$ and $Z$ basis\cite{dua_engineering_2024,fuente_xyz_2024} and CSS proposals that directly decomposes the SASEC\cite{townsend-teague_floquetifying_2023}.
Take the ``XYZ ruby code" in Ref. \cite{fuente_xyz_2024} as an example, which is defined on the same lattice and bond coloring as in Fig. \ref{fig: ccschedule} with a 3-round schedule: gZZ, bXX and rYY. From the ZX-diagram of the local gadget that consists of three physical qubits (see Fig. \ref{fig: ccruby}), we see that the bond operator in this case is
\begin{align}\label{eq: ccrubysol}\nonumber
    A_x=XX,\ B_x=XI,\ A_z=YI,\ B_z=IY,\\
    \overline{A_x}=XY,\ \overline{B_x}=YZ,\ \overline{A_z}=XI,\ \overline{B_z}=IX.
\end{align}
No bond-local unitary can bring this set of bond operators to the CSS ones parametrized in Eq. \eqref{eq: ccsol}. Therefore, from Thm. \ref{thm1}, the CSS Floquet color code is inequivalent to the XYZ ruby color code. In fact, this can also be seen from the logical automorphism. While the CSS Floquet color code performs a $\Z_3$ logical automorphism by switching the color of the $X$ or $Z$ logical operators, the XYZ ruby color code performs a $\Z_3$ logical automorphism by permuting the $X/Y/Z$ logical operators to $Z/X/Y$ logical operators while also permuting the three colos of the logical operators. This can be verified on the level of gadgets from the bond operators in Eq. \eqref{eq: ccrubysol}, where we have $B_x=\overline{A_z}$ and $B_z=\overline{A_x}\times\overline{A_z}$. 

\begin{figure}
    \centering
    \includegraphics[width=\linewidth]{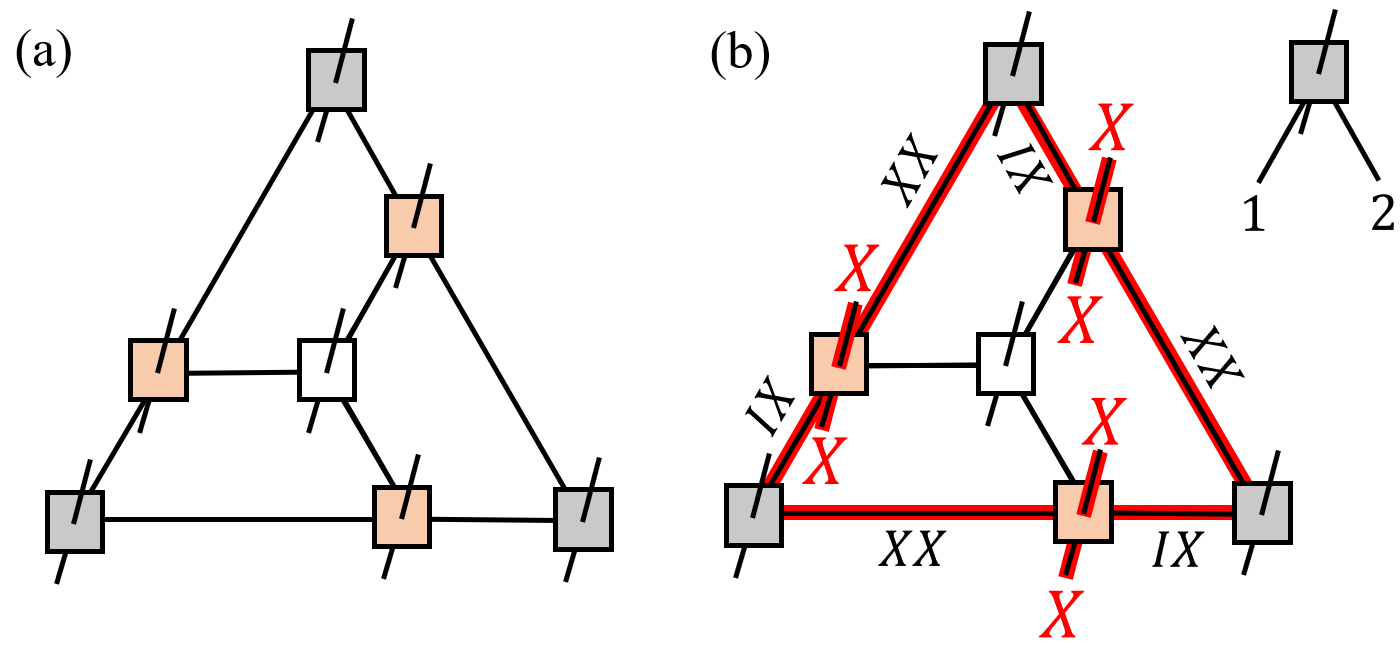}
    \caption{(a) Gadget layout of the $[[7,1,3]]$ Steane code, where each data qubit is replaced by a gadget. The three gadgets on the edge are marked in orange shades, and the three corner gadgets are marked in gray shades. All the gadget encoding maps (apart from the two additional stabilizers on the corner gadgets) can be directly inferred from the gadget in the center, which is the same as the gadget in Fig. \ref{fig: ccparam} for the Floquet color code. (b) The Pauli web of a trivial logical automorphism of incoming $X$ logical operator that is supported on the three edge gadgets. The way such a Pauli web goes through the corner gadgets implies that the corner gadget should be stabilized by $I|XX\ IX|I$, where the two bond directions are marked on the right.  }
    \label{fig: steane}
\end{figure}
\begin{figure*}
    \centering
    \includegraphics[width=0.65\linewidth]{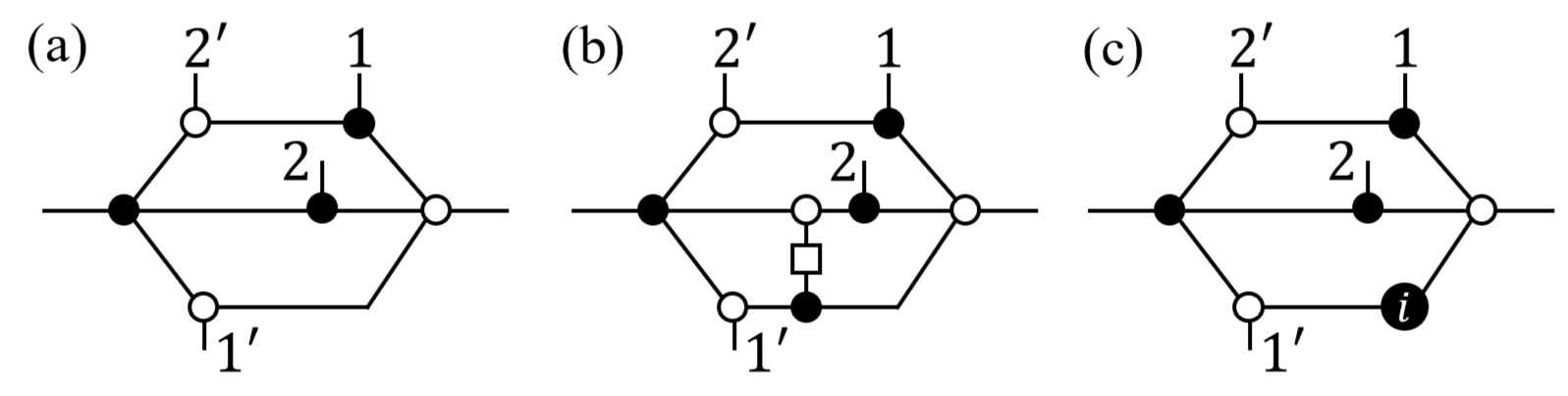}
    \caption{The ZX-diagrams of the corner gadget for different logical automorphisms. (a) The corner gadget that realizes trivial logical automorphism from the tableau in Eq. \eqref{eq: trivialtab}. (b) The corner gadget for logical Hadamard gate from Eq. \eqref{eq: hadamardtab}. Note that an $X$ node and a $Z$ node connected by a Hadamard edge represents a projective measurement of $XZ$ and post-selecting the outcome $XZ=1$. (c) The corner gadget for logical $S$ gate from Eq. \eqref{eq: sgatetab}. }
    \label{fig: steanezx}
\end{figure*}
\begin{figure}
    \centering
    \includegraphics[width=0.65\linewidth]{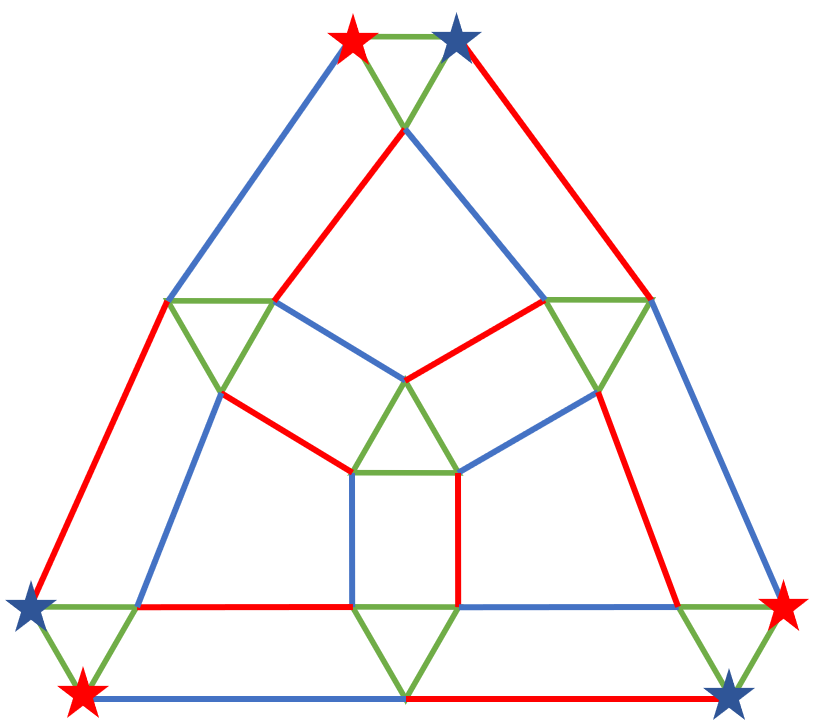}
    \caption{The layout of the physical qubits in the Floquet Steane code. The measurement schedule which produces trivial logical automorphism is the same as the CSS Floquet color code: gZZ, bXX, rZZ, gXX (the starred qubits are simply idling if not involved in a pairwise measurement). For logical Hadamard gate, between the rounds bXX and rZZ, we measure $Z_rX_b$ between the three pairs of red and blue starred qubits. For logical $S$ gate, at round $rZZ$, we apply $S$ gates to the three red starred qubits. }
    \label{fig: steanelat}
\end{figure}

\subsection{Floquet Steane Code and dynamical Clifford logical gates}
\label{subsec:steane}
The $[[7,1,3]]$ Steane code \cite{Steane1996} can be regarded as the smallest open-boundary color code in 2D, where every trapezoidal plaquette hosts a weight-4 $X$ stabilizer and a weight-4 $Z$ stabilizer, see Fig. \ref{fig: steane}. We construct a dynamical Steane code by replacing each data qubit in the original Steane code with a gadget, and the gadgets are connected according to the connectivity of the underlying lattice as in Fig. \ref{fig: steane}(a). Compared to the case of color code, we see that, apart from the three gadgets at the corners of the lattice, the rest of the gadgets should produce exactly the same encoding map as the gadgets in the color code. Therefore, we can directly use the solutions in Eq. \eqref{eq: ccsol} or Eq. \eqref{eq: ccrubysol} for the gadget. Even for the three gadgets on the edge of the triangle, since each of them hosts the Pauli webs of 4 incoming and 4 outgoing stabilizers, the gadget should already be maximally stabilized, given it has 8 total legs: an incoming, an outgoing and 6 internal legs.

The only remaining step towards the dynamical Steane code is to determine the gadget on the three corners, since they are not yet maximally stabilized: each of them only hosts 2 incoming and 2 outgoing stabilizers, but it has 6 legs in total. Therefore, we have to fix two extra stabilizers of the gadget, which must be chosen in a way that is consistent with the BKRC. In fact, different choices of the extra stabilizers correspond to different logical automorphisms performed by the dynamical Steane code. To see this, we choose the incoming/outgoing $X$($Z$) logical operators to be the product of three $X$($Z$) Pauli operators on the three data qubits on the edge of the triangle (see Fig. \ref{fig: steane}(b)). Now consider the Pauli web of the logical operator. Since the encoding map of the gadget in Eq. \ref{eq: ccmat} ensures that both incoming and outgoing Pauli operators can be mapped to the two bonds on the edge of the triangle, the Pauli web of any logical operator can be supported on the six edge bonds around the triangle. Such a Pauli web goes through the corner gadget without any incoming/outgoing Pauli operator, i.e. the corner gadget must stabilize such a Pauli web on the internal legs. This is exactly an extra stabilizer that we need to fix for the corner gadget.

We use the gadget in the CSS Floquet color code as an example, whose encoding map is given by Eqs. \eqref{eq: ccmat} and \eqref{eq: ccsol} and the ZX-diagram in Fig. \ref{fig: cczx}. We first demand a trivial logical automorphism. From the Pauli web of the incoming and outgoing logical $X$ ($Z$) operator around the triangle, the corner gadget must additionally stabilize $I|XXIX|I$ and $I|ZIZZ|I$. In total, the corner gadget is now maximally stabilized by the following stabilizer tableau:
\begin{equation}\label{eq: trivialtab}
    \left[\begin{array}{c|cc|c}
        X & XX & XI & I \\
        Z & ZI & IZ &I \\
        I & XI & IX & X \\
        I & IZ & ZZ & Z\\
        I & XX & IX & I\\
        I & ZI & ZZ & I\\
        \end{array}\right],
\end{equation}
where the two columns between the vertical lines are the two bond directions of the corner gadget. A quick check shows that the dynamical code formed by the connected gadgets satisfies the BKRC and is therefore fault-tolerant. The ZX-diagram of the gadget is simply the one in Fig. \ref{fig: cczx} with the two nodes that connect to the directions $3$ and $3'$ being removed. Therefore, the physical qubits of the dynamical Steane code can be laid down on the open-boundary ruby lattice, and the measurement schedule for the dynamical Steane code with trivial logical automorphism is the same as the CSS Floquet color code, see Fig. \ref{fig: steanelat}.

The dynamical Steane code can also realize a variety of nontrivial logical Clifford operations. For example, to realize a logical Hadamard gate, we write down the Pauli web of an incoming $X$ (resp. $Z$) and an outgoing $Z$ (resp. $X$) logical operator, which implies that the corner gadget has to stabilize $I|YZXY|I$ (resp. $I|IYYI|I$). The corner gadget is now maximally stabilized by the following stabilizer tableau:
\begin{equation}\label{eq: hadamardtab}
    \left[\begin{array}{c|cc|c}
        X & XX & XI & I \\
        Z & ZI & IZ &I \\
        I & XI & IX & X \\
        I & IZ & ZZ & Z\\
        I & YZ & XY & I\\
        I & IY & YI & I\\
        \end{array}\right].
\end{equation}
Again, the entire dynamical code satisfies the BKRC. The ZX-diagram of the gadget is shown in Fig. \ref{fig: steanezx}. Comparing this to the ZX-diagram of the rest of the gadgets given in Fig. \ref{fig: cczx}, we see that the measurement schedule is modified as follows: between the rounds bXX and rZZ, we measure $Z_rX_b$, i.e. the product of $Z$ of the red starred qubit and $X$ of the blue starred qubit at the three tips of the triangle; see Fig. \ref{fig: steanelat}.

Similarly, to realize a logical $S$ gate, we write down the Pauli web of an incoming $X$ (resp. $Z$) and outgoing $Y$ (resp. $Z$) logical operator. In this case, the corner gadget is stabilized by the following tableau:
\begin{equation}\label{eq: sgatetab}
    \left[\begin{array}{c|cc|c}
        X & XX & XI & I \\
        Z & ZI & IZ &I \\
        I & XI & IX & X \\
        I & IZ & ZZ & Z\\
        I & YY & IY & I\\
        I & ZI & ZZ & I\\
        \end{array}\right].
\end{equation}
The ZX-diagram of the gadget is shown in Fig. \ref{fig: steanezx}, which means that the schedule is modified as follows: in the round rZZ, we also perform an $S$ gate on the three physical qubits marked by the red stars in Fig. \ref{fig: steanelat}. Note that we only need 3 physical $S$ gates in total, compared to the transversal logical $S$ gate of the Steane code, which is implemented via 7 $S$ gates on all 7 physical qubits.

We note that, as mentioned in Sec.~\ref{sec: spacetimed}, a naive scale-up of the Floquet Steane code into a Floquet color code on an open-boundary triangular patch will only have constant spacetime distance, no matter what the distance $d$ of the static color code is. This is because the Pauli web of the incoming logical operator will fully enter the three corner gadgets no matter the size of the lattice, which implies the logical operator will evolve into local operators at the three corners in the middle of the measurement schedule.

\section{Floquet Surface Code with Constant Spacetime Distance}
\label{sec: floquetsurface}

In this appendix, we discuss a case of constant space-time distance for a Floquet toric code with planar boundaries, i.e. a Floquet surface code. We show that a naive Floquet code construction in which one uses the bulk gadget on the boundary leads to a catastrophic result in which the spacetime code distance $d_{st}$ remains constant despite the increasing stabilizer code distance $d$. We note that such a phenomenon of constant spacetime distance is already observed in Refs. \cite{haah_boundaries_2022,vuillot2021planar,ellison_pauli_2023}.

For simplicity, we consider the smallest surface code with $d=3$, since longer code distances can be constructed similarly. We choose a gadget layout that resembles the usual surface code lattice. That is, the arches on the boundary now represent extra legs that connect pairs of boundary gadgets. Therefore, the four L gadgets at the centers of the four open boundaries still have four internal legs each. Hence, it seems handy to use the bulk L gadget for these four boundary gadgets. The four R gadgets at the four corners each have three internal legs, or five legs in total. The four Pauli webs of the incoming/outgoing $X$ and $Z$ stabilizers leave one extra stabilizer to be fixed for each corner gadget. We fix the remaining stabilizer on the corner gadget so that the entire dynamical code performs a logical Hadamard gate (see Fig.~\ref{fig:floquetsurface}), following a procedure similar to the Floquet Steane code in Appendix~\ref{subsec:steane}. The resulting Floquet code can be realized on an open-boundary square-octagon lattice (see Fig.~\ref{fig:surfacelat}), where the measurement schedule is modified so that the four physical qubits on the four corners of the lattice undergo single-qubit measurement in the $X$ or $Y$ basis at time steps $1$ and $2$. This closely resembles the construction of planar Floquet code in Ref.~\cite{vuillot2021planar}. 

From the Pauli web of the incoming $X$ logical operator shown in Fig.~\ref{fig:floquetsurface} or using the definition in Eq.~\eqref{eq: dst} in Sec.~\ref{sec: spacetimed}, we find that the spacetime distance of the naively constructed Floquet surface code is always $d_{st}=2$, regardless of the distance $d$ of the surface code. In fact, the incoming $X$ logical operator can be completely shoveled into the corner gadget at the bottom right, which implies that the logical operator evolves into a local operator somewhere in the middle of the measurement schedule.

To avoid this, one can adopt strategies such as rewinding \cite{haah_boundaries_2022, dua_engineering_2024}, which, in the language of gadget decomposition, corresponds to increasing the number of internal legs for each gadget. In this way, we can find a solution in the gadget decompositions which leads to a Floquet surface code whose spacetime distance scales with the static surface code distance.

\begin{figure}
    \centering
    \includegraphics[width=0.5\linewidth]{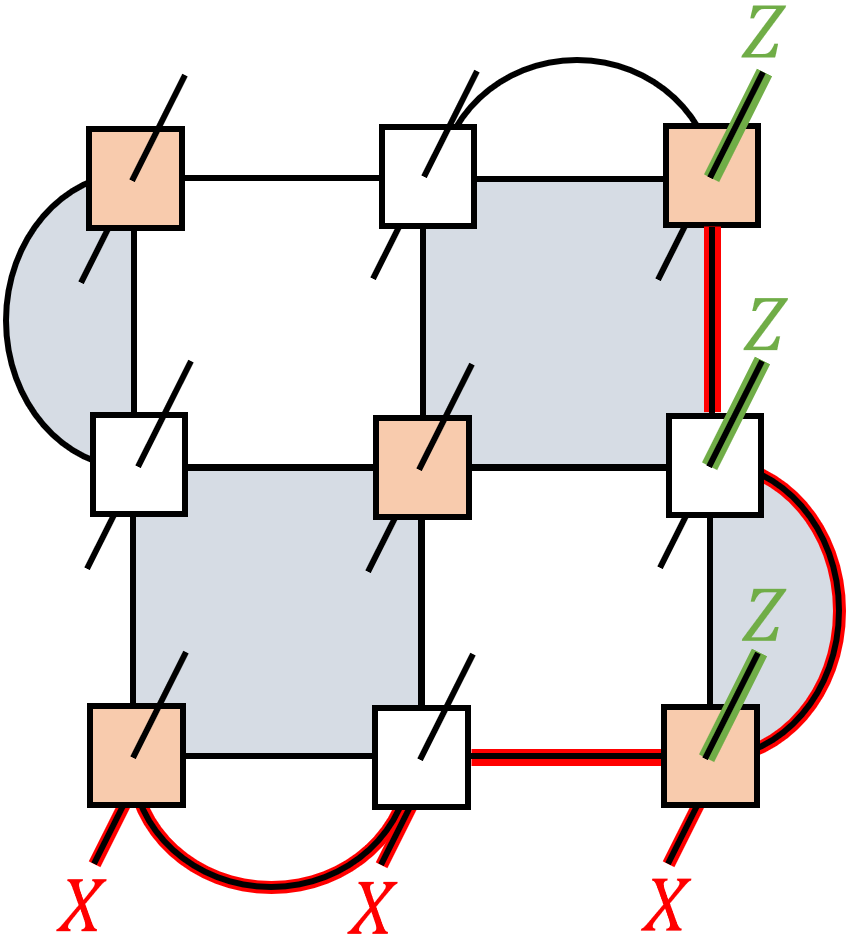}
    \caption{Gadget layout in constructing a dynamical code for a $d=3$ surface code. The corner R gadgets can be fixed once we demand the dynamical code to perform logical Hadamard gate, whose Pauli web is shown explicitly in the figure.}
    \label{fig:floquetsurface}
\end{figure}
\begin{figure}
    \centering
    \includegraphics[width=0.6\linewidth]{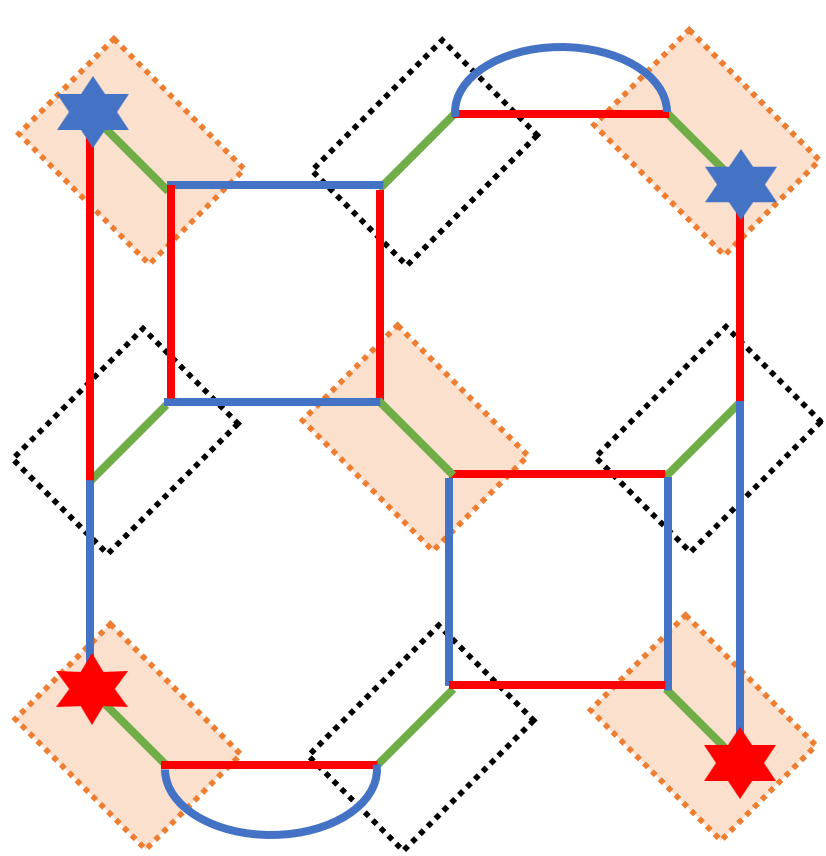}
    \caption{Lattice realization of the Floquet surface code. Here qubits live on vertices of the lattice. Every colored bond still represents a pairwise measurement in the respective basis, similar to the HH Floquet code. The blue/red stars represent single qubit $X$/$Y$ measurements at the time step 1/2, respectively.}
    \label{fig:surfacelat}
\end{figure}

\section{Floquet Checkerboard Code}
\label{sec: csscb}

\begin{figure}
    \centering
    \includegraphics[width=0.9\linewidth]{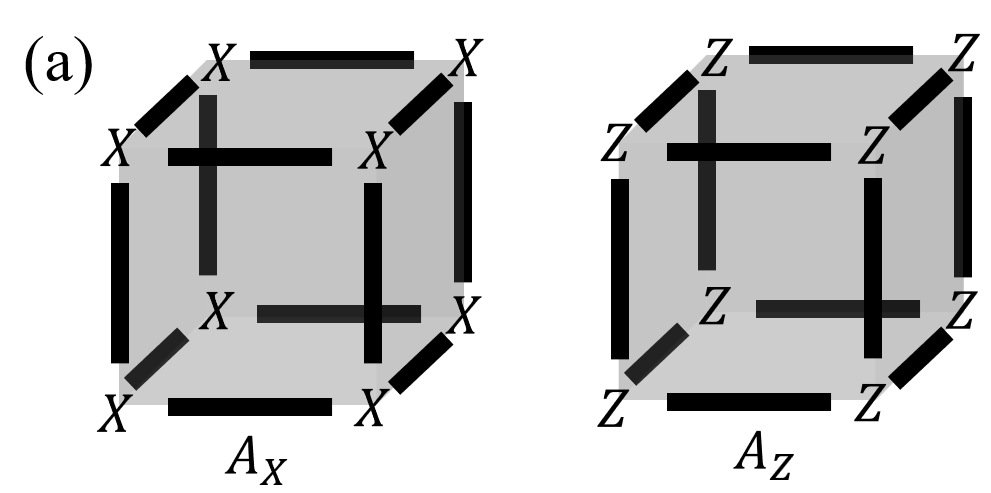}
    \includegraphics[width=0.8\linewidth]{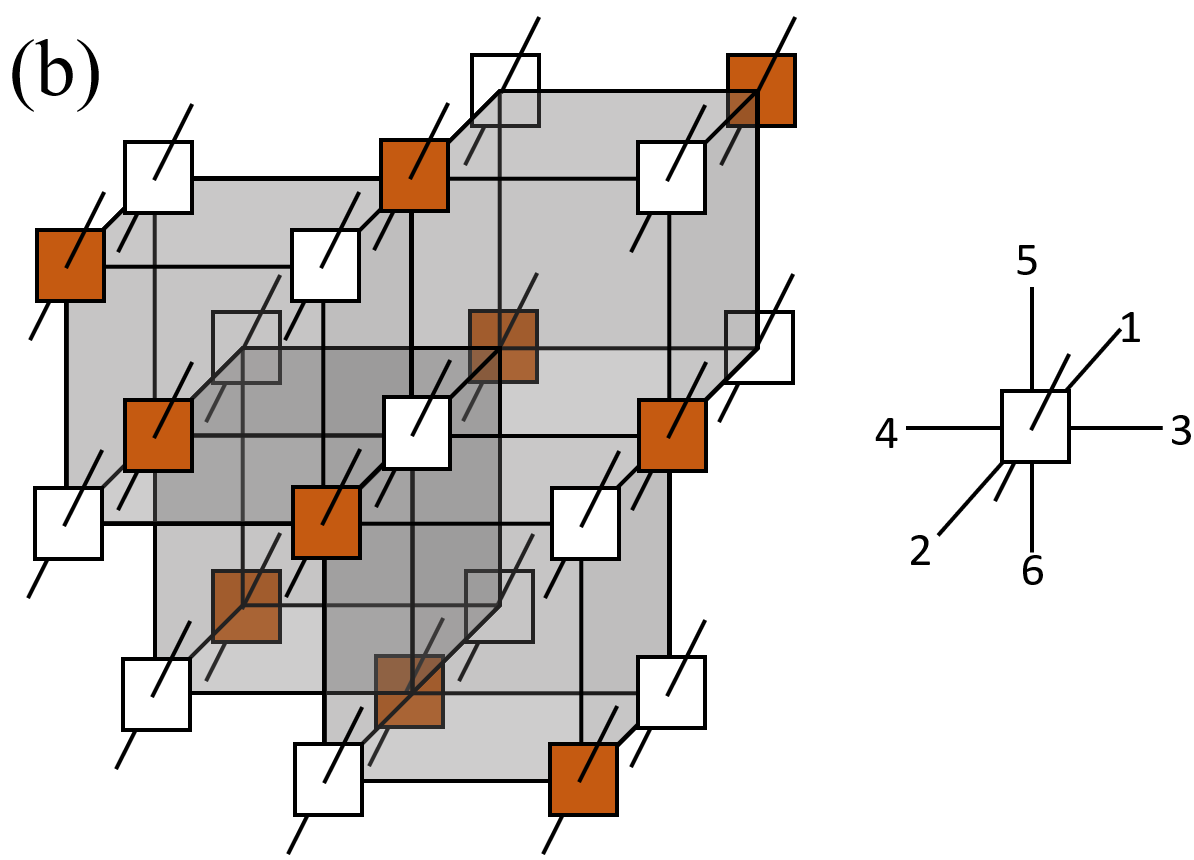}
    \caption{(a) The parametrization of the bond operators for the incoming $X$ and $Z$ cubic stabilizers. The outgoing ones are similarly parametrized. (b) The L and R gadgets of the Floquet checkerboard code. The cubes that host the cubic stabilizers are shaded in grey. The L gadgets are marked in white squares and the R gadgets are marked in orange. The leg directions when constructing the encoding maps in Eq. \eqref{eq: cbmat} are shown on the right. Here the shaded cubes are the ones that host the 8-body $X$ and $Z$ stabilizers.}
    \label{fig: cblattice}
\end{figure}
\begin{figure}
    \centering
    \includegraphics[width=0.5\linewidth]{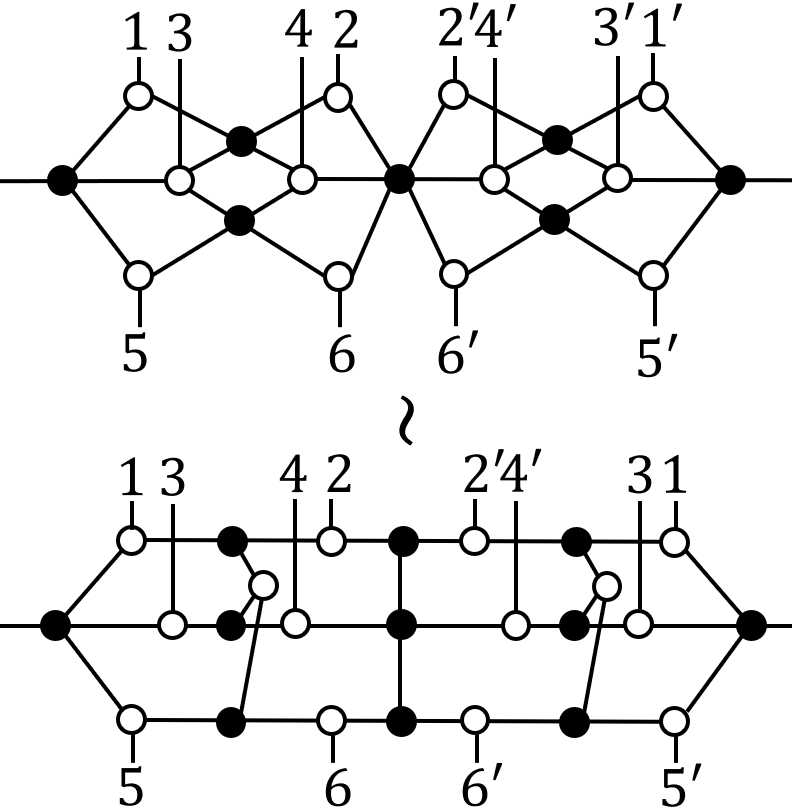}
    \includegraphics[width=\linewidth]{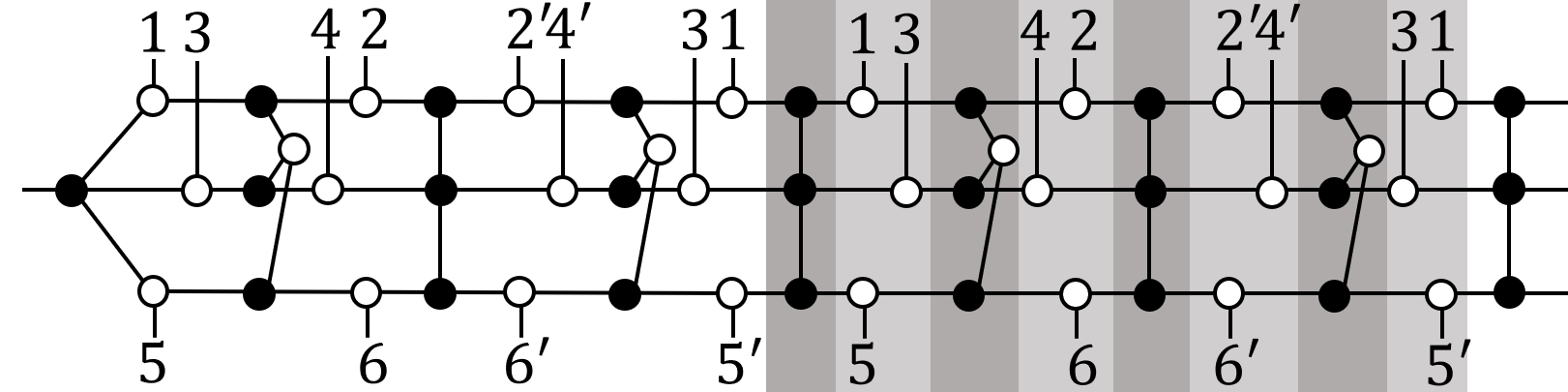}
    \caption{The ZX-diagram of the gadget from the encoding matrix $H_X$ in Eq. \eqref{eq: cbmat} with the solution in Eq. \eqref{eq: cbsol} and extra stabilizers in Eq. \eqref{eq: cbextra}. Here the unprimed and primed legs correspond to the first and second Pauli operator of the bond operators. Repeated gadget yields a 8-round schedule on the ZX-diagram marked by the alternating bright and dark gray shades.}
    \label{fig: cbzx}
\end{figure}

\begin{figure}
    \centering
    \includegraphics[width=0.6\linewidth]{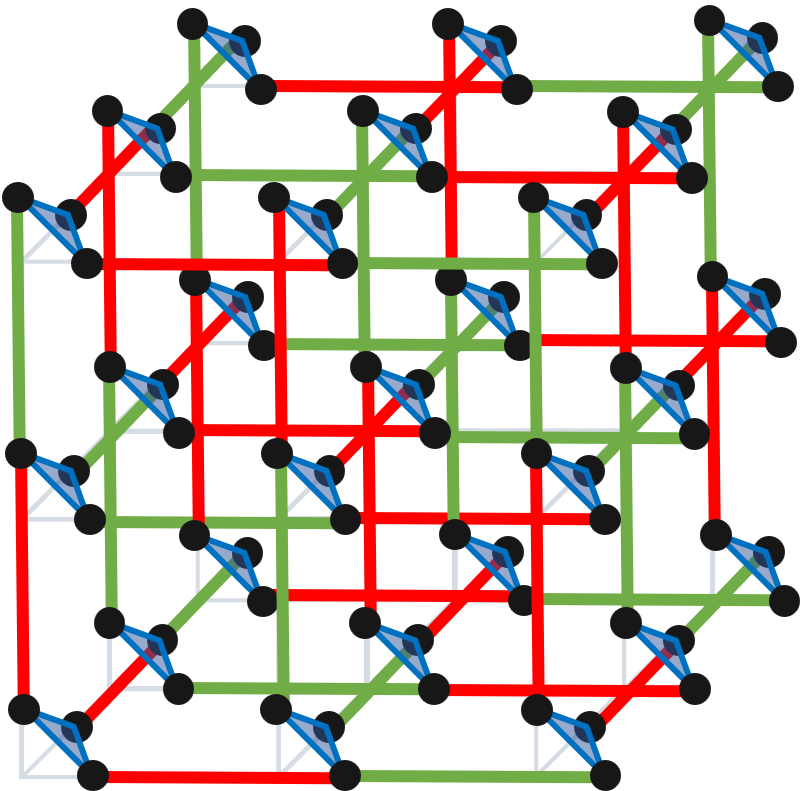}
    \caption{The schedule of the CSS Floquet checkerboard code built from the ZX-diagram in Fig. \ref{fig: cbzx}.}
    \label{fig: cbschedule}
\end{figure}

3D type-I fractonic codes are known to possess an extended number of logical qubits that scales with code distance $d$. One such example is the 3D checkerboard code on a 3D cubic lattice, where each lattice site hosts one data qubit. The stabilizers are eight-body $X$ and $Z$ operators around every other cube of the cubic lattice that forms a checkerboard pattern.

We now construct a dynamical code whose incoming and outgoing stabilizers are both the checkerboard code on the same sublattice of the cubic lattice. The gadget layout converts each data qubit into a gadget that connects to 6 nearby gadgets, following the connectivity of the cubic lattice. We consider a homogeneous internal leg layout so that each pair of connected gadgets share $n_{L,i}$ internal legs, where $i=1,2,\dots,6$ indicates the bond direction in Fig. \ref{fig: cblattice}. Following similar steps as in Fig. \ref{fig: haahproof}, it is straightforward to prove that $n_{L,i}=1$ is impossible to consistently encode the incoming and outgoing Pauli operators. 

We now demonstrate that the most efficient Floquet checkerboard code can achieved at $n_{L,i}=2$ with CSS encoding. We consider the simplest parametrization of the bond operator, where each bond in the incoming/outgoing $X$ and $Z$ cubic stabilizers hosts bond operators $A_{X,Z}$ and $\overline{A_{X,Z}}$, respectively (see Fig. \ref{fig: cblattice}(a)). We also assume the parametrization is translation invariant, so that each type of cubic stabilizer is parametrized in the same way. With the specification of the cubic stabilizers, the lattice sites on the cubic lattice split into two different sublattices, which we again name as L and R (see Fig. \ref{fig: cblattice}(b)). Below we only consider gadgets in the L sites, while the R gadgets can be synchronized accordingly.
The encoding matrices for the L gadget are (the six columns in between the two vertical lines corresponds to the six directions labeling in Fig. \ref{fig: cblattice})
\begin{align}\label{eq: cbmat}\nonumber
    H_X=\left(\begin{array}{c|cccccc|c}
        1&[A_X]&0&[A_X]&0&[A_X]&0 &0 \\
        1&[A_X]&0&0&[A_X]&0&[A_X]&0\\
        1&0&[A_X]&[A_X]&0&0&[A_X]&0\\
        1&0&[A_X]&0&[A_X]&[A_X]&0&0\\
        0&[\overline{A_X}]&0&[\overline{A_X}]&0&[\overline{A_X}]&0&1\\
        0&[\overline{A_X}]&0&0&[\overline{A_X}]&0&[\overline{A_X}]&1\\
        0&0&[\overline{A_X}]&[\overline{A_X}]&0&0&[\overline{A_X}]&1\\
        0&0&[\overline{A_X}]&0&[\overline{A_X}]&[\overline{A_X}]&0&1\\
    \end{array}\right), \\
    H_Z=\left(\begin{array}{c|cccccc|c}
        1&[A_Z]&0&[A_Z]&0&[A_Z]&0 &0 \\
        1&[A_Z]&0&0&[A_Z]&0&[A_Z]&0\\
        1&0&[A_Z]&[A_Z]&0&0&[A_Z]&0\\
        1&0&[A_Z]&0&[A_Z]&[A_Z]&0&0\\
        0&[\overline{A_Z}]&0&[\overline{A_Z}]&0&[\overline{A_Z}]&0&1\\
        0&[\overline{A_Z}]&0&0&[\overline{A_Z}]&0&[\overline{A_Z}]&1\\
        0&0&[\overline{A_Z}]&[\overline{A_Z}]&0&0&[\overline{A_Z}]&1\\
        0&0&[\overline{A_Z}]&0&[\overline{A_Z}]&[\overline{A_Z}]&0&1\\
    \end{array}\right)
\end{align}
The consistency equation $H_XH_Z^T=0\mod 2$ can be simplified to
\begin{align}\nonumber
    [A_X]\cdot [A_Z]=[\overline{A_X}]\cdot[\overline{A_Z}]=1\mod 2,\\
    [A_X]\cdot [\overline{A_Z}]=[A_Z]\cdot[\overline{A_X}]=0\mod 2,
\end{align}
which admits the following solution
\begin{align}\label{eq: cbsol}
    [A_X]=10,\ [A_Z]=10,\ [\overline{A_X}]=01,\ [\overline{A_Z}]=01.
\end{align}
\begin{figure*}
    \includegraphics[width=0.7\textwidth]{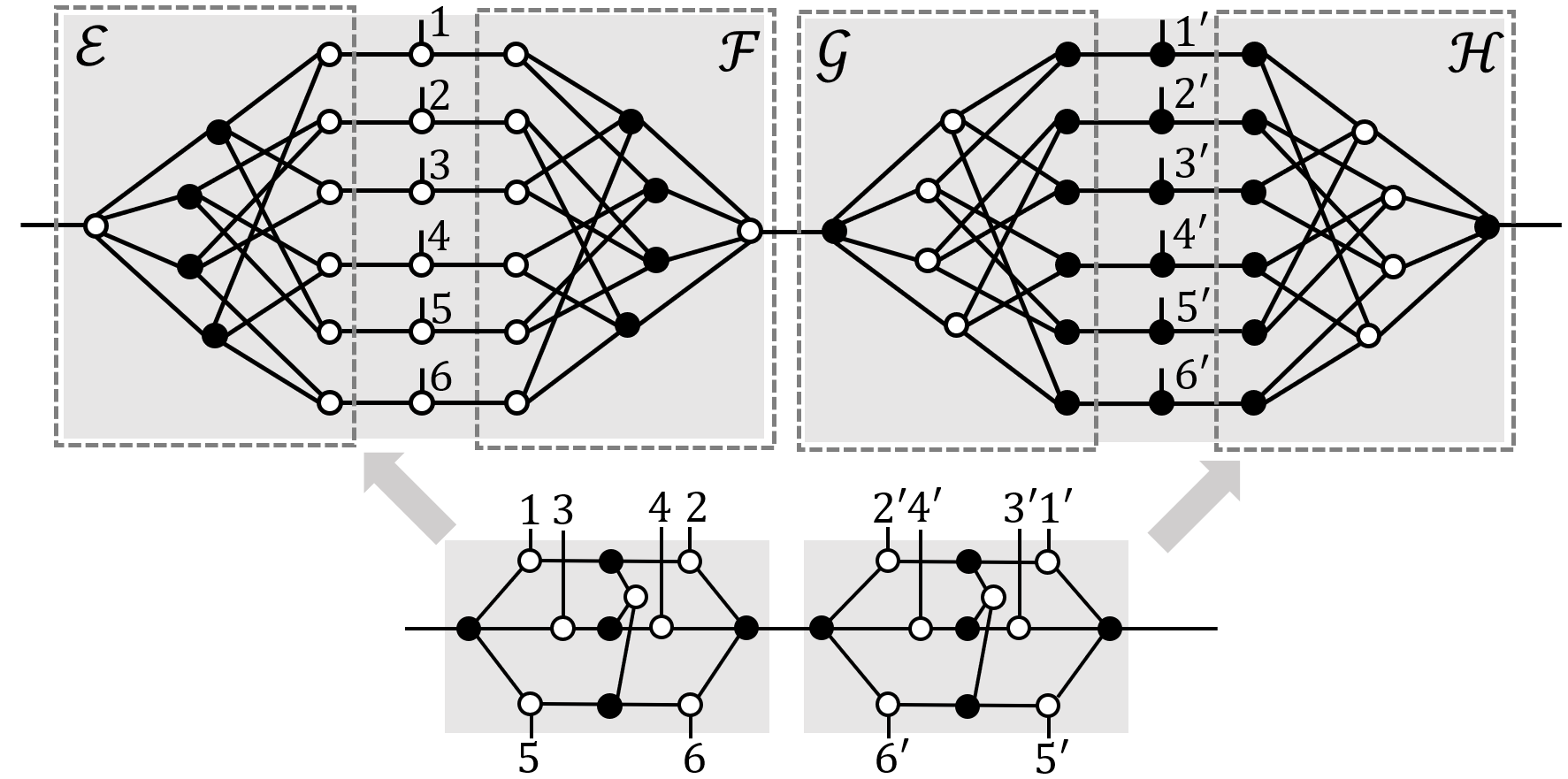}
    \caption{Proof that the fracton Floquet code in Ref. \cite{davydova_floquet_2023} is spacetime-equivalent to CSS Floquet checkerboard code on the level of gadget. The ZX-diagram in Fig. \ref{fig: cbzx} can be recompiled in two stages. The dual encoder $\mathcal{F}$ and encoder $\mathcal{G}$ can be combined into a measurement-only circuit that measures all the $Z$ stabilizers of the encoder $\mathcal{F}$ and all the $X$ stabilizer of the encoder $\mathcal{G}$, which corresponds to the rounds rZZZ and rXXX in Ref. \cite{davydova_floquet_2023}. Similarly, the dual encoder $\mathcal{H}$ and the encoder $\mathcal{E}$ combined gives the rounds bXXX and bZZZ. }
\end{figure*}

The rest of the solutions of the bond operators can be obtained by acting bond-local unitaries on the ones in Eq. \eqref{eq: cbsol}. Therefore they yield equivalent CSS Floquet codes. The total rank of the encoding matrix for both cases is 
\begin{equation}
    \text{rank}(H_X)+\text{rank}(H_Z)=12<2+\sum_{i=1}^6n_{L,i}=14,
\end{equation}
which means we have to specify two extra internal stabilizers. The choice that preserves logical information is to further stabilize
\begin{equation}\label{eq: cbextra}
    (XX)_2(XX)_4(XX)_6\text{ and }(ZZ)_2(ZZ)_4(ZZ)_6.
\end{equation}
The ZX-diagram of the gadget as an CSS encoder can be constructed in Fig. \ref{fig: cbzx}, which shows that the spatial concatenation for the CSS Floquet checkerboard code is a 3-qubit repetition code for each original data qubit. Laying them on the cubic lattice (see Fig. \ref{fig: cbschedule}), the measurement indicated by the ZX-diagram is bZZ, rXX, bZZZ, gXX, bZZ, gXX, bZZZ, rXX, where r and g rounds correspond to pairwise measurement on the red and green bonds in Fig. \ref{fig: cbschedule}. bZZ denotes pairwise measurement of $ZZ$ on the three bonds around the blue triangles , while BZZZ means measuring the product of the 3 $Z$ operators around the blue triangles. We note that, based on the ZX-diagram in Fig. \ref{fig: cbzx}, the incoming stabilizers will be completely measured after the following 5 rounds: bZZ, rXX,  bZZZ, gXX, bZZ, after which we arrive at another checkerboard code but with the cubic stabilizers supported on the other sublattice of the cubic lattice.

Similar to Fig. \ref{fig: tclogical}, it is straightforward to verify that the logical space with the extensive number of logical qubits is preserved by the gadgets. In fact, the entire measurement linear map acts trivially in the logical space, which can be seen from the operator maps
\begin{equation}
    X\to (XX)_1(XX)_2\leftarrow\overline{X}, \ Z\to (ZZ)_1(ZZ)_2\leftarrow \overline{Z}.
\end{equation}
These maps indicate that the $X$ and $Z$ logical operators in the direction of the directions of $1$ and $2$ will be preserved. Logical operator maps in the other two directions can be obtained similarly. Overall the CSS Floquet checkerboard code acts trivially in the logical space.

We now demonstrate that CSS Floquet checkerboard code we have just constructed is in fact spacetime equivalent to the fracton Floquet code between rounds 0 and 5 constructed in Ref. \cite{davydova_floquet_2023}. To see this, we recompile the ZX-diagram of the gadget in Fig. \ref{fig: cbzx}. The recompilation is a literal implementation of the algorithm for CSS encoder in Sec. \ref{sec: fixzx} with the incoming $X$ encoding map in Eq. \eqref{eq: cbmat} and outgoing $Z$ encoding map there. In the middle, the dual encoder $\mathcal{F}$ and encoder $\mathcal{G}$ combined leads to a measurement-only circuit that measures all the $Z$ stabilizers in $\mathcal{F}$ and all the $X$ stabilizers in $\mathcal{G}$, which corresponds to the rZZZ and rXXX rounds in the Ref. \cite{davydova_floquet_2023}. Similarly, the dual encoding map $\mathcal{H}$ and the following encoding map $\mathcal{E}$ form a measurement circuit of all the $X$ stabilizers in $\mathcal{H}$ and all the $Z$ stabilizers in $\mathcal{E}$, which corresponds to the bXXX and bZZZ rounds. Therefore, the two codes on the level of the gadget are completely equivalent. 

From the resource-theoretical perspective, such an equivalence is yet another demonstration that the dynamical qubit overhead and dynamical circuit depth are interconvertible resources in dynamical codes: we can trade more physical qubits with a shorter measurement schedule, or less physical qubits with a longer schedule.

\section{Pairwise-measurement $\Z_2^{(1)}$ subsystem code}
\label{sec: fermion}

In this appendix, we construct a dynamical code whose incoming and outgoing stabilizer group do not come from a stabilizer code, but rather from the stabilizer group of a subsystem code. One particular motivation of this example is the fact that many novel topological states, like chiral topological orders, do not admit commuting-projector parent Hamiltonian on qubits/qudits. Therefore, they cannot be the code state of any Pauli stabilizer code defined on qubits/qudits. One way to circumvent this constraint, as demonstrated in Ref. \cite{ellison_pauli_2023}, is to stabilize these via Pauli subsystem codes, where the stabilizer group $\cS$ is equal to the center of the gauge group $\mathcal{G}$ up to roots of identity. 
Although the relation between Floquet code and subsystem code has been explored in-depth in the literature\cite{hastings_dynamically_2021}, our construction is fundamentally different in the sense that we do not require the knowledge of the gauge group of the subsystem code. The only input information is the lattice layout of the stabilizer group of the subsystem code. In turn, the dynamical code will dynamically stabilize this restricted Hilbert space, rather than a single code state.
\begin{figure}
    \centering
    \includegraphics[width=0.55\linewidth]{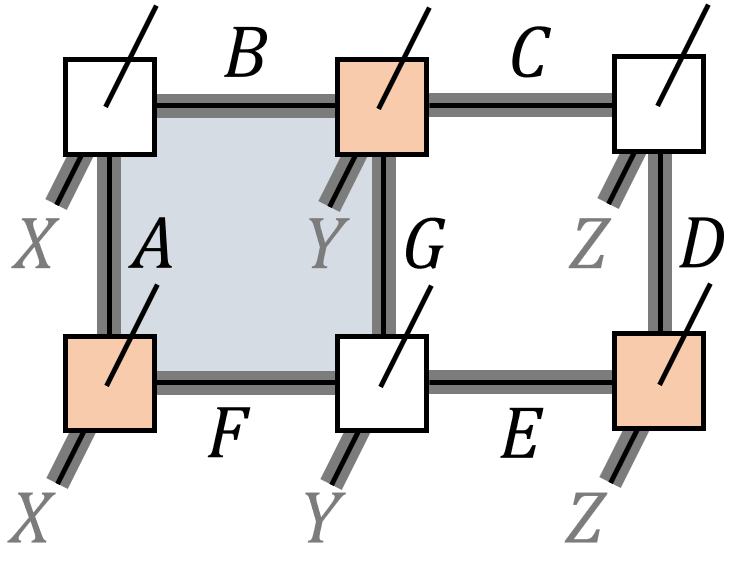}
    \caption{The six-body stabilizer in the $\Z_2^{(1)}$ fermionic subsystem code and the parametrization of the incoming stablizer. }
    \label{fig: fermionparam}
\end{figure}
\begin{figure}
    \centering
    \includegraphics[width=0.8\linewidth]{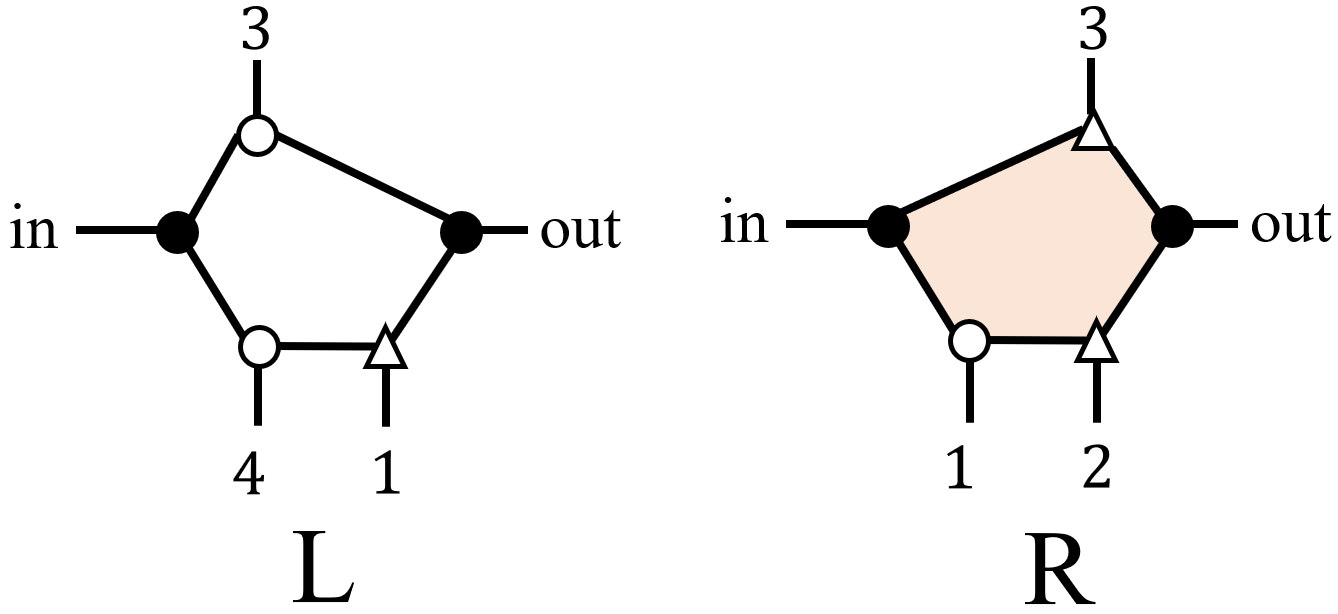}
    \caption{The ZX-diagram of the L gadget that produces the encoding map in Eq. \eqref{eq: fermionencode} with the solution in Eq. \eqref{eq: fermionsol}. The ZX-diagram of the R gadget can be similarly obtained. We use the convention in Fig. \ref{fig: tcparam} for the four direction of the bonds from 1 to 4.}
    \label{fig: fermionzx}
\end{figure}
\begin{figure}
    \centering
    \includegraphics[width=0.75\linewidth]{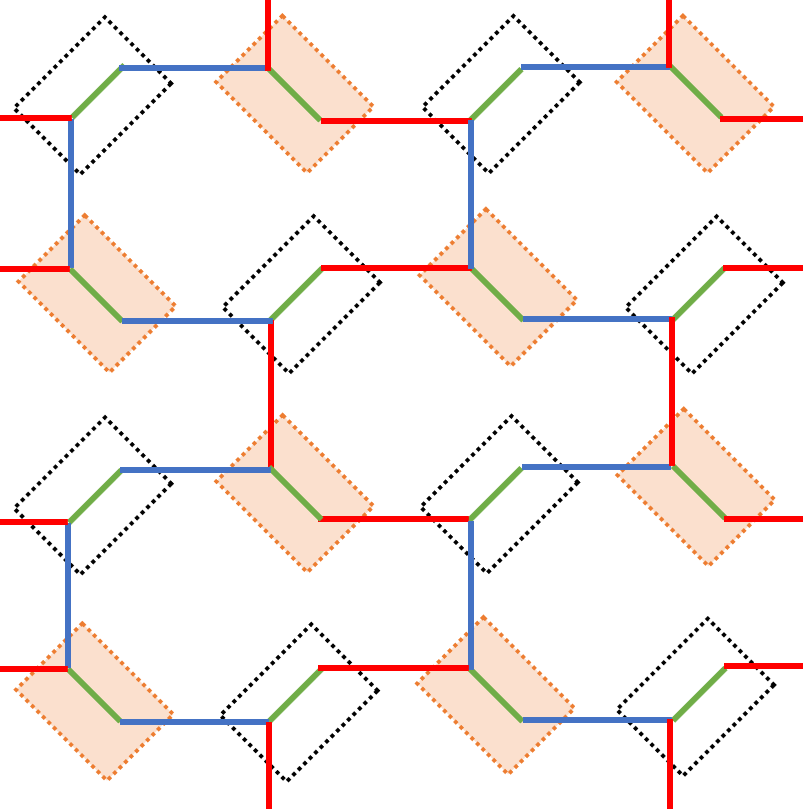}
    \caption{The spatial layout of the Floquet fermion code on a square-hexagon lattice with missing bonds. The measurement schedule is gZZ, rXX and bYY.}
    \label{fig: fermionlat}
\end{figure}

The example we are constructing is called the $\Z_2^{(1)}$ subsystem code on the 2D square lattice in Ref. \cite{ellison_pauli_2023}, whose stabilizer group $\cS$ is generated by the product of a square $X$ stabilizer and the square $Z$ stabilizer on its right(see Fig. \ref{fig: fermionparam}). Hence each stabilizer has Pauli weight six. We note that the restricted Hilbert space stabilized by $\cS$ is in fact fermionic, as the stabilizers are required to bosonize 2D lattice fermions \cite{Chen2018bosonization,obrien_Local_2024}. Therefore, we name the dynamical code that we are constructing the \textit{pairwise-measurement $\Z_2^{(1)}$ subsystem code}.

We consider the same gadget layout as in Fig. \ref{fig: tcgadget} for toric code. The incoming weight-six stabilizer is parametrized translation-invariantly by a set of 7 bond operators from $A$ to $G$, and similarly for the outgoing stabilizers (see Fig. \ref{fig: fermionparam}). Each gadget participates in six stabilizers in total, from which we obtain the encoding maps of the incoming/outgoing $X/Y/Z$ operators of the L gadget (L/R gadget designation and the four bond directions are the same as Figs. \ref{fig: tcgadget} and \ref{fig: tcparam}):
\begin{align}\label{eq: fermionencode}\nonumber
    X\to B_3 A_4,\ Y\to F_1 E_3,\ Z\to  C_1 D_4,\\
    \overline{X}\to \overline{B_3} \overline{A_4}, \ \overline{Y}\to \overline{F_1} \overline{E_3},\ \overline{Z}\to  \overline{C_1}\overline{D_4}.
\end{align}
Since the bond operators $G$ and $\overline{G}$ do not enter any encoding map, they are set to identity. The rest of the bond operators can be solved by the following set of single Pauli operators
\begin{align}\label{eq: fermionsol}\nonumber
    A=B=F=X,\ C=D=E=Y\\
    \overline{A}=\overline{B}=\overline{F}=Y,\ \overline{C}=\overline{D}=\overline{E}=X.\\
\end{align}
Other consistent solutions are equivalent to the one above by bond-local unitaries. The ZX-diagram based on the encoding map and the solution is constructed in Fig. \ref{fig: fermionzx}. The ZX-diagram of the R gadget can be constructe accordingly. Therefore, the Floquet fermion code can be laid down on the same physical lattice as the Floquet toric code, but with some missing bonds (see Fig. \ref{fig: fermionlat}). The measurement schedule is still three-round: gZZ, rXX and bYY, where the coloring of the bonds are shown in Fig. \ref{fig: fermionlat}.  

It is interesting to point out that the physical nature of the pairwise-measurement $\Z_2^{(1)}$ subsystem is actually a mixed-state toric code that is maximally decohered under the ``fermionic noise" \cite{Wang_intrinsic_2025,Sohal_noisy_2025,Ellison_toward_2025,zhang_strong_2025}, which excites simultaneously excites a pair of $e$ and $m$ anyon but not the fermion $\psi$ in the toric code, concatenated with a local 2-qubit repetitions code.

\section{The logical automorphism of the Floquet BB code}
\label{sec: bblogical}
In this appendix, we prove that the Floquet BB code constructed from the gadgets with the solution in Eq. \eqref{eq: bbsol} preserves the logical information of the BB code and acts as Hadamard+SWAP gates between pairs of logical operators. 

As a first step, we need to find the logical operators of the BB stabilizer code. A general way to represent logical operators of the BB code is the parametrized form $X(P_X,Q_X)$ with $P_X,Q_X\in \F_2^{lm}$, which corresponds to the $X$ operator $\prod_{j=1}^{lm}X_{L,j}^{P_{X,j}}X_{R,j}^{Q_{X,j}}$ over the L and R data qubits. In fact, due to translation symmetry of the permutation matrices $x$ and $y$, the matrix $(P_X|Q_X)$ represents a series of logical $X$ operators, where inside the matrix $P_X$ polynomial should be understood as a matrix $P_X=\sum_{p=0}^{l-1}\sum_{q=0}^{m-1}u_{X,ql+p+1}x^py^q$. In order for $X(P_X,Q_X)$ to be a logical $X$ operator, it has to commute with all the $Z$ checks, i.e. every row of $(P_X|Q_X)$ belongs to the kernel of the row space of $H_Z$: $(P_X|Q_X)\in \text{ker}(H_Z)$. Writing explicitly, we have
\begin{equation}\label{eq: logicalcon}
    (P_X|Q_X)H_Z^T=P_X\mathcal{B}+Q_X\mathcal{A}=0\mod 2.
\end{equation}
by which we mean that every monomial in $P_X\mathcal{B}+Q_X\mathcal{A}$ when expanded in powers of $x$ and $y$ has even coefficients. The total number of independent logical operators in $(P_X|Q_X)$ is equal to
\begin{align}\nonumber
    &\text{rank}\left[\left(P_X\vert Q_X\right)/\text{rs}(H_X)\right]\equiv\\
    &\quad\text{rank}\left[\left(\begin{array}{c|c}
        P_X & Q_X \\
        \mathcal{A} & \mathcal{B}
    \end{array}\right)\right]-\text{rank}\left[(\mathcal{A}|\mathcal{B})\right],
\end{align}
which is the rank of $X(P_X|Q_X)$ where two rows of $(P_X|Q_X)$ are equivalent if they differ by any linear combinations of multiple rows in $H_X$, which spans the linear space $\text{rs}(H_X)$. The logical $Z$ operators $Z(P_Z,Q_Z)$ can be similarly defined, where the two polynomials $P_Z$ and $Q_Z$ satisfies
\begin{equation}\label{eq: logicalconz}
    \mathcal{A}P_X^{-1}+\mathcal{B}Q_X^{-1}=0\mod 2.
\end{equation}

We now prove that $\left(P^{(123)}_X|Q^{(123)}_X\right)$ with the following set of polynomials are $X$ logical operators:
\begin{align}\label{eq: pqpoly}\nonumber
    &P^{(123)}_{X}=\sum_{r,s=0}\left(\mathcal{A}_1\mathcal{B}_1^{-1}\mathcal{A}_2^{-1}\mathcal{B}_2\right)^r\left(\mathcal{A}_2\mathcal{B}_2^{-1}\mathcal{A}_3^{-1}\mathcal{B}_3\right)^s,\\
    &Q^{(123)}_X=\mathcal{A}_2^{-1}\mathcal{B}_2P^{(123)}_X,
\end{align}
where the summation of $r$ and $s$ stops when all different monomials generated by $\mathcal{A}_1\mathcal{B}_1^{-1}\mathcal{A}_2^{-1}\mathcal{B}_2$ and $\mathcal{A}_2\mathcal{B}_2^{-1}\mathcal{A}_3^{-1}\mathcal{B}_3$ are summed over.
In fact, this can be checked straightforwardly from the condition in Eq. \eqref{eq: logicalcon}:
\begin{widetext}

    \begin{align}\nonumber
    P^{(123)}_X\mathcal{B}+Q^{(123)}_X\mathcal{A}&=P^{(123)}_X\left(\mathcal{B}+\mathcal{A}\mathcal{A}_2^{-1}\mathcal{B}_2\right)=P^{(123)}_X\left[\mathcal{B}_1+2\mathcal{B}_2+\mathcal{B}_3+\left(\mathcal{A}_1+\mathcal{A}_3\right)\mathcal{A}_2^{-1}B_2\right]\\\nonumber
    &=P^{(123)}_X\left[\mathcal{B}_1\left(1+\mathcal{A}_1\mathcal{B}_1^{-1}\mathcal{A}_2^{-1}\mathcal{B}_2\right)+\mathcal{B}_3\left(1+\mathcal{A}_3\mathcal{B}_3^{-1}\mathcal{A}_2^{-1}\mathcal{B}_2\right)\right]\\
    &=P^{(123)}_X\left(2\mathcal{B}_1+2\mathcal{B}_3\right)=0\mod 2.
\end{align}
\end{widetext}
In the final equality, we used the fact that the polynomial $P^{(123)}_{X}$ is generated by $\mathcal{A}_1\mathcal{B}_1^{-1}\mathcal{A}_2^{-1}\mathcal{B}_2$ and $\mathcal{A}_2\mathcal{B}_2^{-1}\mathcal{A}_3^{-1}\mathcal{B}_3$, so that $\mathcal{A}_1\mathcal{B}_1^{-1}\mathcal{A}_2^{-1}\mathcal{B}_2P^{(123)}_X=\mathcal{A}_2\mathcal{B}_2^{-1}\mathcal{A}_3^{-1}\mathcal{B}_3P^{(123)}_X=P^{(123)}_X$. Meanwhile, $\left(P^{(123)}_X|Q^{(123)}_X\right)$ cannot be generated by products of $X$ checks  as long as the BB code has nontrivial logical space. In fact, any product of $X$ checks must be represented by a polynomial that contains the factor $\mathcal{A}=\sum_{\alpha=1}^3\mathcal{A}_\alpha$, which is not the case for $P_X^{(123)}$. Therefore,  $\left(P^{(123)}_X|Q^{(123)}_X\right)\notin \text{rs}(H_X)$\footnote{We note that, for the [[144,12,12]] code, $\left(P^{(123)}_X|Q^{(123)}_X\right)$ contains 2 different $X$ logical operators.}.

The polynomials in Eq. \ref{eq: pqpoly} also give rise to a set of $Z$ logicals $Z\left(P_Z=P_X^{(123)},Q_Z=Q_X^{(123)}\right)$, which can be verified using Eq. \eqref{eq: logicalconz}. However, we caution the reader that $Z\left(P_X^{(123)},Q_X^{(123)}\right)$ are NOT the logical $Z$ operators of $X\left(P_X^{(123)},Q_X^{(123)}\right)$, but rather logical $Z$ operators of another independent set of $X$ logical operators.

We now show that the Floquet BB code $\mathcal{M}_\text{BB}$ constructed from the gadgets with the solution in Eq. \eqref{eq: bbsol} maps $X\left(P_X^{(123)},Q_X^{(123)}\right)$ to $Z\left(P_X^{(123)},Q_X^{(123)}\right)$, hence thereby performing a logical Hadamard as well as swapping these two sets of logical operators, similar to the case of the HH Floquet code. Recall that the L and R gadgets have bipartite connectivity so that each L gadget is connected to 9 R gadgets in the directions $(\alpha, \beta)$, $\alpha, \beta=1,2,3$, which are represented by monimials $\mathcal{A}_\alpha^{-1}\mathcal{B}_\alpha$.  
From Eq. \eqref{eq: bbsol}, the incoming $X$ (outgoing $Z$) operator can be simultaneously mapped to all 9 entries in the matrix $A_X$ ($\overline{A_Z}$), since $X^3=X$ and $Z^3=Z$. Therefore, we have
\begin{align}
    X\to\left(\begin{array}{ccc}
        XX & I & I \\
        I & XX & I \\
        I & I & XX
    \end{array}\right) \leftarrow\overline{Z}.
\end{align}
This means that any loop of incoming $X$ operators of the L and R data qubits along the two directions $(1,1)\to(2,2)$ and $(2,2)\to(3,3)$ will be mapped to a loop of outgoing $Z$ operators that are supported on the same set of data qubits. Since these two loop directions correspond to the monomials $\left(\mathcal{A}_1^{-1}\mathcal{B}_1\right)^{-1}\mathcal{A}_2^{-1}\mathcal{B}_2$ and $\left(\mathcal{A}_2^{-1}\mathcal{B}_2\right)^{-1}\mathcal{A}_3^{-1}\mathcal{B}_3$, the location of the L data qubits along the loop are exactly represented by the matrix $P_X^{(1,2,3)}$. The R data qubits are located on the loop which is shifted from the L data qubits by any relative direction between them. Choosing such a direction to be $(2,2)$, we arrive at the matrix $Q_X^{(1,2,3)}$ in Eq. \eqref{eq: pqpoly}. Therefore, we have 
\begin{equation}
    \mathcal{M}_\text{BB}\left(X\left(P_X^{(123)},Q_X^{(123)}\right)\right)=Z\left(P_X^{(123)},Q_X^{(123)}\right).
\end{equation}

Similarly, using the additional stabilizer in Eq. \eqref{eq: bbadditional}, these following set of maps of incoming $X$ and outgoing $Z$ operators yield the map of other sets of incoming $X$ logicals to outgoing $Z$ logicals of the BB code:
\begin{align}\nonumber
    X\to\left(\begin{array}{ccc}
        II & I & X \\
        I & XX & I \\
        X & I & II
    \end{array}\right) \leftarrow\overline{Z},\\\nonumber
    X\to\left(\begin{array}{ccc}
        XX & I & I \\
        I & II & X \\
        I & X & II
    \end{array}\right) \leftarrow\overline{Z},\\\nonumber
        X\to\left(\begin{array}{ccc}
        II & X & I \\
        X & II & I \\
        I & I & XX
    \end{array}\right) \leftarrow\overline{Z},\\\nonumber
        X\to\left(\begin{array}{ccc}
        XI & I & X \\
        I & IX & I \\
        I & X & II
    \end{array}\right) \leftarrow\overline{Z},\\\nonumber
        X\to\left(\begin{array}{ccc}
        XI & X & I \\
        I & II & X \\
        I & I & IX
    \end{array}\right) \leftarrow\overline{Z},\\\nonumber
        X\to\left(\begin{array}{ccc}
        II & I & X \\
        X& XI & I \\
        I & I & IX
    \end{array}\right) \leftarrow\overline{Z}.
\end{align}
We note that the underlying logical operators from these operator maps may not be mutually independent. Nevertheless, they together ensure that every nontrivial logical operator of the BB code will undergo a logical Hadamard plus SWAP gate.
For example, using the first operator map above, we define 
\begin{align}\label{eq: pqpoly321}\nonumber
    &P^{(321)}_{X}=\sum_{r,s=0} \left(\mathcal{A}_1\mathcal{B}_3^{-1}\mathcal{A}_2^{-1}\mathcal{B}_2\right)^r\left(\mathcal{A}_2\mathcal{B}_2^{-1}\mathcal{A}_3^{-1}\mathcal{B}_1\right)^s,\\
    &Q^{(321)}_X=\mathcal{A}_2^{-1}\mathcal{B}_2P^{(321)}_X,
\end{align}
then the set of $X$ logicals $X\left(P_X^{(321)},Q_X^{(321)}\right)$ will be mapped by the Floquet BB codes to
\begin{equation}
    \mathcal{M}_\text{BB}\left(X\left(P_X^{(321)},Q_X^{(321)}\right)\right)=Z\left(P_X^{(321)},Q_X^{(321)}\right).    
\end{equation}

\section{Detailed construction of the ZX-diagrams of gadgets}
\label{sec: additionalzx}
\subsection{ZX-diagrams of the gadget in the Floquet BB code}
\begin{figure}
    \centering
    \includegraphics[width=\linewidth]{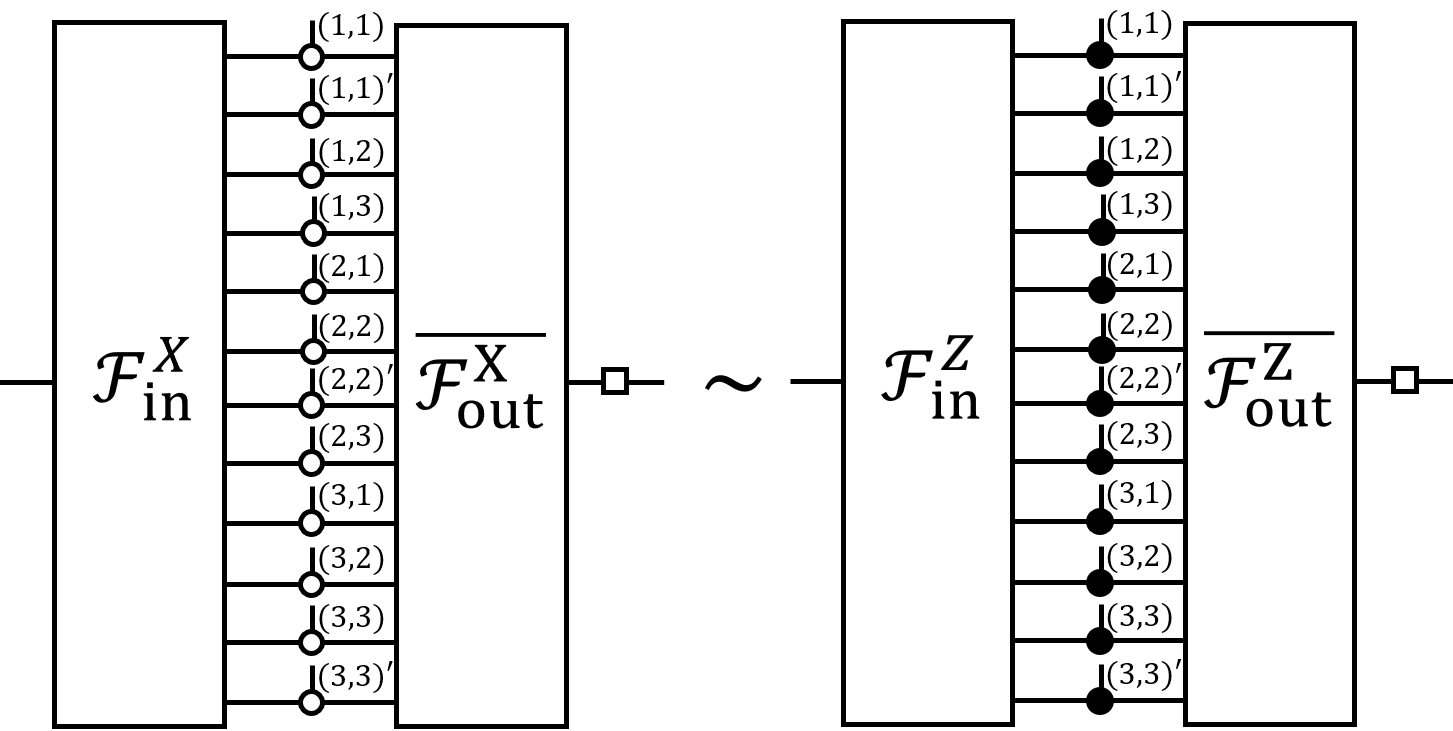}
    \caption{The gadget in the circuit layout with 12 physical qubits. To construct the rewinding schedule, we need both ZX-diagrammatic encoders constructed using the $X$ ecoding maps $\mathcal{F}^X_\text{in}/\overline{\mathcal{F}^X_\text{out}}$ and $Z$ encoding maps $\mathcal{F}^Z_\text{in}/\overline{\mathcal{F}^Z_\text{out}}$. The two compilations are equivalent under ZX-rules.}
    \label{fig: bb12}
\end{figure}
\begin{figure*}
    \centering
    \includegraphics[width=0.7\linewidth]{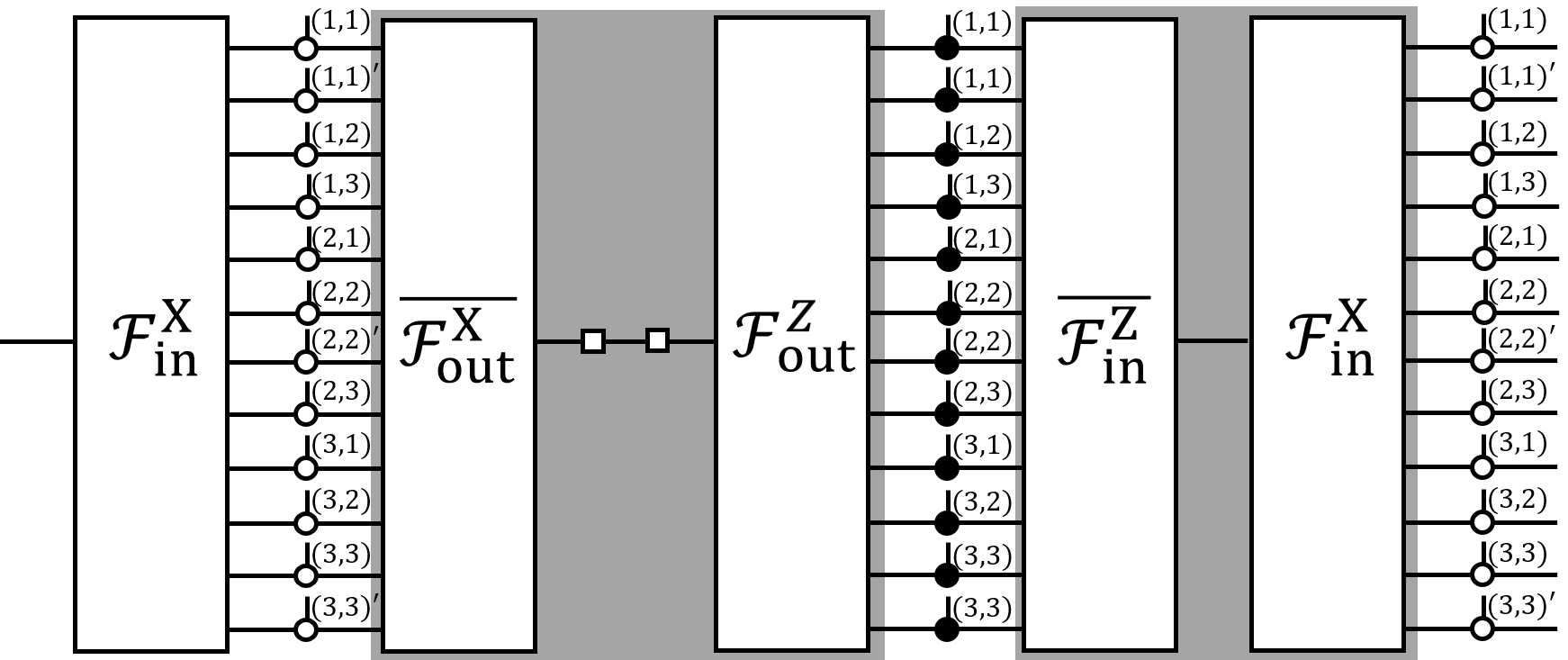}
    \includegraphics[width=0.7\linewidth]{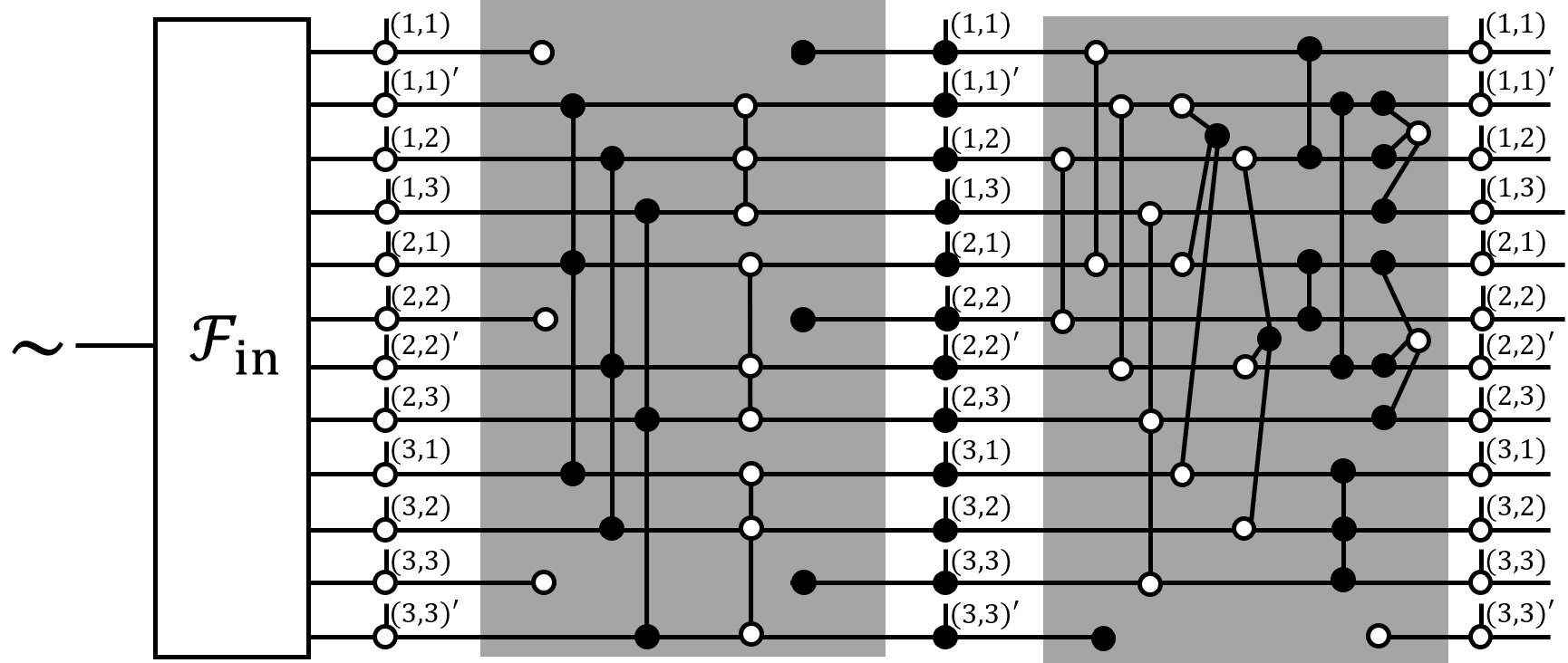}
    \caption{The 12-qubit circuit layout of the gadget for the Floquet BB code constructed via the $X$-basis encoder and the rewinded $Z$ basis encoder in Fig. \ref{fig: bb12}.}
    \label{fig: bb12zx}
\end{figure*}
\begin{figure*}
    \centering
    \includegraphics[width=0.9\linewidth]{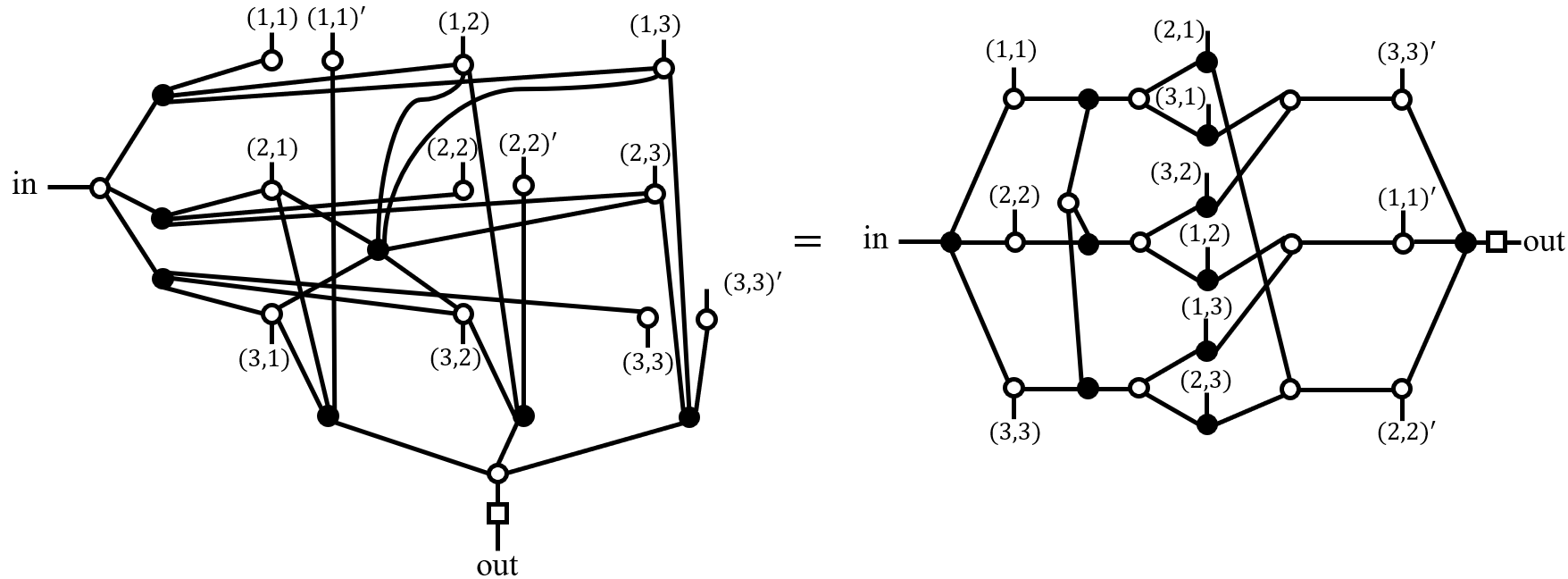}
    \caption{The diagrammatic encoder that yields the circuit layout of the gadget in Fig. \ref{fig: bbzx} with a 6-qubit spatial concatenation.}
    \label{fig: bbzx4}
\end{figure*}

In this appendix, we construct the ZX-diagram from the encoding map of the gadget from Eqs. \eqref{eq: bbencode}, \eqref{eq: bbsol} and \eqref{eq: bbadditional}. Since the encoding maps can be brought to a CSS form once we apply a Hadamard gate to the outgoing leg, we use the diagrammatic algorithm for CSS encoders in Sec. \ref{sec: fixzx}, which only requires the knowledge of either encoding matrices $H_X$ or $H_Z$. We use $H_X$ in the following construction. 

We first construct the encoder with the shortest schedule, which requires 12 physical qubits. The strategy here is to construct a stabilizer code encoder $\mathcal{F}^X_\text{in}$ from the incoming legs using $A^X$ in Eq. \eqref{eq: bbsol}, together with the additional stabilizer $(XX)_{(1,1)}X_{(1,2)}X_{(2,1)}(XX)_{(2,2)}$ in Eq. \eqref{eq: bbadditional}, and a dual stabilizer code encoder $\overline{\mathcal{F}^X_\text{out}}$ using $\overline{A^Z}$ (without the need for the additional stabilizer). This converts the ZX-diagram into a three-stage layout, see Fig. \ref{fig: bb12}. To construct a simple meausurement schedule, we adopt the strategy of rewinding by following such a gadget with the dual gadget compiled using the $Z$ encoding map, whose encoder and dual encoder are denoted by $\mathcal{F}^Z_\text{in}/\mathcal{F}^Z_\text{out}$ and the additional $Z$ stabilizer, $(ZZ)_{(1,1)}Z_{(1,2)}Z_{(2,1)}(ZZ)_{(2,2)}$, is encoded in $\mathcal{F}^Z_\text{in}$. In this way, the dual encoder $\overline{\mathcal{F}^X_\text{out}}$ is followed by the encoder $\mathcal{F}^Z_\text{out}$, see Fig. \ref{fig: bb12zx}. They together form a measurement circuit of all the $Z$ stabilizers that commutes with the three encoding maps in $\overline{A^Z}$, followed by a measurement circuit of all the $X$ stabilizers that commutes with the encoding maps in $\overline{A^X}$. These stabilizers that we need to measure can be chosen as
\begin{align}\label{eq: foutstab}\nonumber
    &Z_{(1,1)'}Z_{(2,1)},\ Z_{(2,1)}Z_{(3,1)}, \ Z_{(1,2)}Z_{(2,2)'},\ Z_{(2,2)'}Z_{(3,2)},\\\nonumber
    &Z_{(1,3)}Z_{(2,3)},\ Z_{(2,3)}Z_{(3,3)'},\ Z_{(1,1)},\ Z_{(2,2)},\ Z_{(3,3)},\\\nonumber
    &X_{(1,1)'}X_{(1,2)},\ X_{(1,2)}X_{(1,3)}, \ X_{(2,1)}X_{(2,2)'},\ X_{(2,2)'}X_{(2,3)},\\
    &X_{(3,1)}X_{(3,2)},\ X_{(3,2)}X_{(3,3)'},\ X_{(1,1)},\ X_{(2,2)},\ X_{(3,3)}.
\end{align}
Here we have abused the notation of internal legs to also label the physical qubits in the circuit layout of the ZX-diagram.
The measurement circuits are shown in the first gray shaded box in Fig. \ref{fig: bb12zx}.
Similarly,  the dual encoder $\overline{\mathcal{F}^Z_\text{in}}$ compiled from the encoding maps in $A^Z$ is followed by $\mathcal{F}^X_\text{in}$, which leads to a measurement circuit for all the $X$ stabilizers that commute with the three encoding maps in $A^Z$ and the additional $Z$ stabilizer, followed by another measurement circuit that measures all the $Z$ stabilizers that commute with the three encoding maps in $A^X$ and the additional $X$ stabilizer. These stabilizers that we need to measure can be chosen as
\begin{align}\label{eq: finstab}\nonumber
    &X_{(1,1)}X_{(2,1)},\ X_{(1,2)}X_{(2,2)},\ X_{(1,1)'}X_{(2,2)'},\ X_{(1,1)'}X_{(2,1)}X_{(3,1)},\\\nonumber
    &X_{(1,2)}X_{(2,2)'}X_{(3,2)},
    X_{(1,3)}X_{(2,3)},X_{(2,3)}X_{(3,3)},X_{(3,3)'}\\\nonumber
    &Z_{(1,1)}Z_{(1,2)},\ Z_{(2,1)}Z_{(2,2)},\ Z_{(1,1)'}Z_{(2,2)'},\ Z_{(1,1)'}Z_{(1,2)}Z_{(1,3)},\\
    &Z_{(2,1)}Z_{(2,2)'}Z_{(2,3)},
    Z_{(3,1)}Z_{(3,2)},Z_{(3,2)}Z_{(3,3)},Z_{(3,3)'}
\end{align}
Combining with the $X$ and $Z$ nodes that are direct connected to the internal legs, the final circuit layout of the ZX-diagram in shown Fig. \ref{fig: bb12}. This corresponds to a Floquet BB code with 12 physical qubits in each gadget, one per each internal leg. The measurement schedule is a four-round rewinding schedule as follows: (0) measure all the $X$ stabilizers followed by all the $Z$ stabilizers in Eq. \eqref{eq: finstab}, (1) pairwise measure all inter-gadget checks in $X$ basis in every bond direction, (2) measure all the $Z$ stabilizers followed by all the $X$ stabilizers in Eq. \eqref{eq: foutstab}, (3) pairwise measure all inter-gadget checks in $Z$ basis for every bond direction. We note that all the stabilizers in the BB stabilizer code will be measured twice in the four measurement rounds. From the resource-theoretical perspective, this corresponds to the situation where we trade the dynamical qubit overhead with the circuit depth.

To compile a circuit layout of the gadget's ZX-diagram with fewer number of dynamical qubit overhead, which is presented in the maintext in Fig. \ref{fig: bbzx}, we again use the diagrammatic algorithm in Sec. \ref{sec: fixzx} for CSS encoders. Using the encoding maps of incoming $X$ and outgoing $Z$ operators in Eq. \ref{eq: bbsol}, together with the additional $X$ stabilizer in Eq. \ref{eq: bbadditional}, we have the ZX-diagram of the encoder in Fig. \ref{fig: bbzx4}, which can be brought to a circuit layout as in Fig. \ref{fig: bbzx} in the main text by repeated usage of the bialgebra rule in Eq. \eqref{eq: bialgebra}.

\subsection{ZX-diagram of the gadget in the CSS Floquet Haah code}

\begin{figure}
    \centering
    \includegraphics[width=0.5\linewidth]{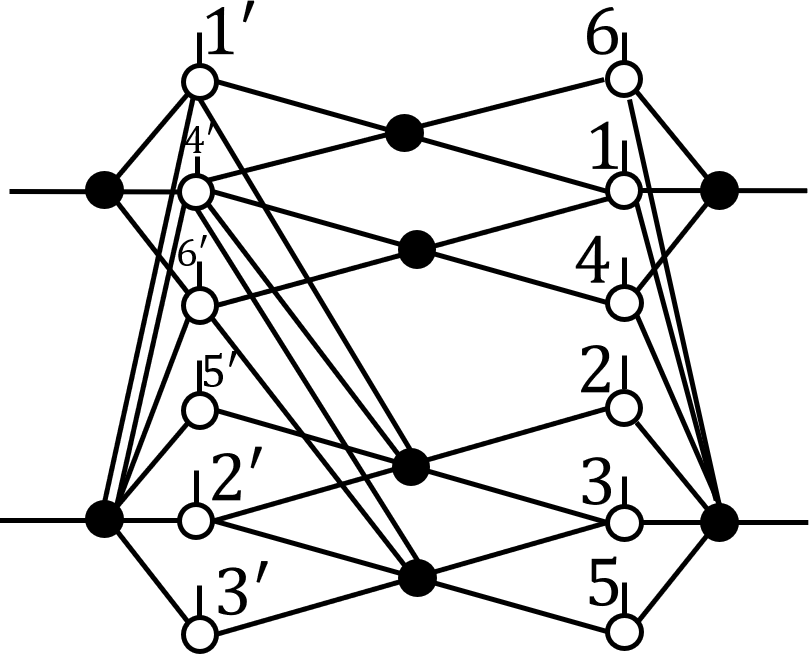}
    \caption{Diagrammatic construction of the CSS encoder for the gadget in CSS Floquet Haah code.}
    \label{fig: haahcss}
\end{figure}
We now construct the ZX-diagram of the gadget in the CSS Floquet Haah code from the $X$ encoding matrix $H_X$ in Eq. \eqref{eq: haahmat} with the solution in Eq. \eqref{eq: haahsol} using the diagrammatic algorithm for CSS encoder in Sec. \ref{sec: fixzx}. Again, we use the strategy that only connects each incoming/outgoing legs to one set of internal legs, while transforming the rest of the encoding maps completely to internal stabilizers. To this end, we see that encoding matrix $H_X$ can be transformed into
\begin{equation}
    \left(\begin{array}{c|cccccc|c}
        10&01&00&00&01&00&01 &00 \\
        01&01&01&01&01&01&01&00 \\
        00&10&00&00&10&00&10&10\\
        00&10&10&10&10&10&10&01\\
        00&11&00&00&01&00&10&00\\
        00&10&00&00&11&00&01&00\\
        00&01&11&10&01&01&00&00\\
        00&00&01&11&01&10&01&00
    \end{array}\right).
\end{equation}
Using these $X$ stabilizers, the diagrammatical encoder is constructed in Fig. \ref{fig: haahcss}, which can be brought to the circuit layout with a six-qubit spatial concatenation in Fig. \ref{fig: haahzx}.

\section{Non-SLP and hook error in the pairwise measurement surface code}
\label{sec: gidney}

\begin{figure}
    \centering
    \includegraphics[width=\linewidth]{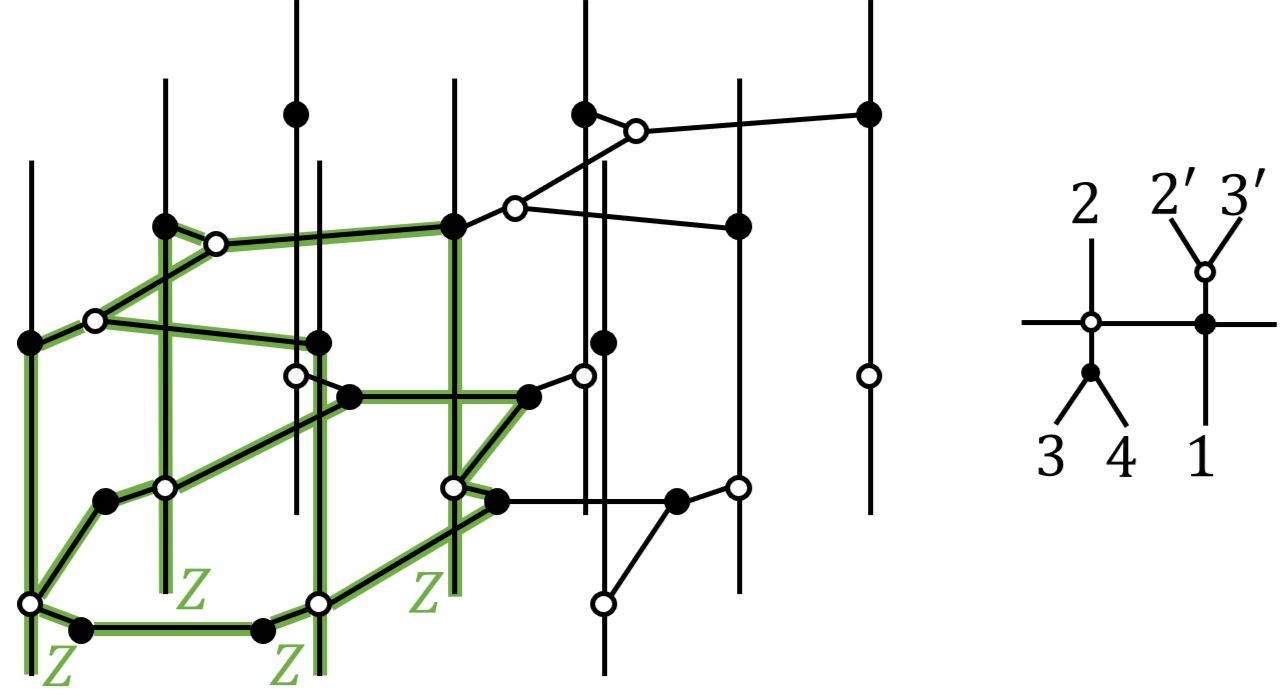}
    \caption{Left: The ZX-diagram of the pairwise measurement surface code in the gadget layout through dragging the intra-plaquette nodes towards the data qubits. The Pauli web of the incoming $Z$ stabilizer is marked in green shaded lines. The Pauli web goes into nearby gadgets that are not part of the $Z$ stabilizer. Right: the ZX-diagram of the L gadget in the gadget layout of the pairwise measurement surface code.}
    \label{fig: pairwise}
\end{figure}
\begin{figure}
    \centering
    \includegraphics[width=0.7\linewidth]{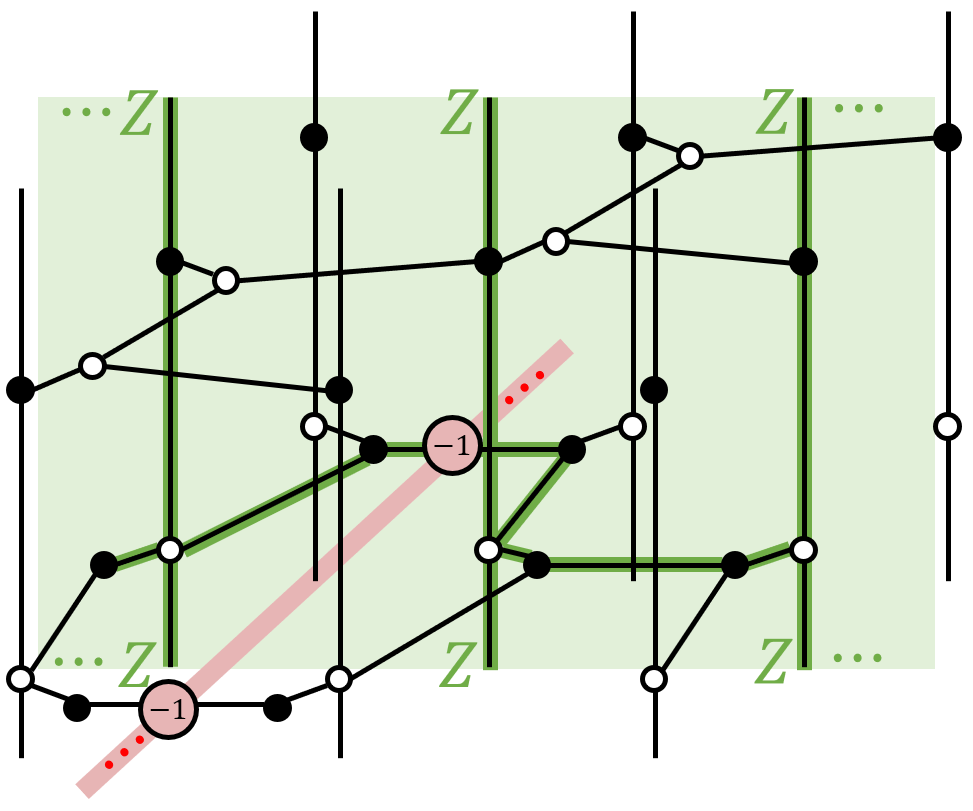}
    \caption{The Hook error in the pairwise measurement surface code. Here the green plane represents the support of the Pauli web of the logical $Z$ operator. The hook error is a line of Pauli $X$ error represented by a line of $X$ nodes with phase $\pi$, which are marked in red. It anti-commutes with the Pauli web of the logical $Z$ operator while commuting with all the spacetime $Z$ stabilizer (for example, the one in Fig. \ref{fig: pairwise}). The weight of the hook error is $d/2$, instead of $d$.  }
    \label{fig: gidneyhook}
\end{figure}
In this section, we use the ZX-diagram of the pairwise measurement surface code proposed in Ref. \cite{gidney_pair_2023} as an example to show that a gadget layout that does not satisfy the SLPC will potentially lead to a reduction in the spacetime code distance from $d$ to $d/2$.

To reach a gadget layout form of the ZX-dagram, we bring the two nodes inside each plaquette to two of the four data qubits around the plaquette (see Fig. \ref{fig: pairwise}).  Therefore, the ZX-diagram near the vertical line that represents a data qubit can be viewed as the ZX-diagram of a local gadget. Now consider the Pauli web of an incoming $Z$ stabilizer. From Fig. \ref{fig: pairwise}, it is easy to see that the Pauli web enters two other gadgets that are not part of the four-body $Z$ stabilizer. The outgoing $X$ stabilizers have similar problems. Therefore, such a gadget layout does not satisfy the SLPC. 

The reduction in space-time distance can also be visualized directly from the Pauli web of the logical $Z$ operator, as shown in Fig. \ref{fig: gidneyhook}. There, we find a chain of $X$ errors that faults the logical $Z$ operator while being undetected, i.e. it commutes with all the spacetime $Z$ stabilizers. Physically, every $X$ error long the line represents a measurement error in pairwise measuring $ZZ$ of the two ancillas.
Since the error only occurs in every other plaquette along the line, the weight of such a logical error is $d/2$, instead of $d$. This is known as a hook error in the QEC literature.  
Similarly, a weight-$d/2$ line of $Z$ errors in the direction perpendicular to the $X$ error chain in Fig. \ref{fig: gidneyhook} will fault the logical $X$ operator.
\begin{figure}
    \centering
\includegraphics[width=\linewidth]{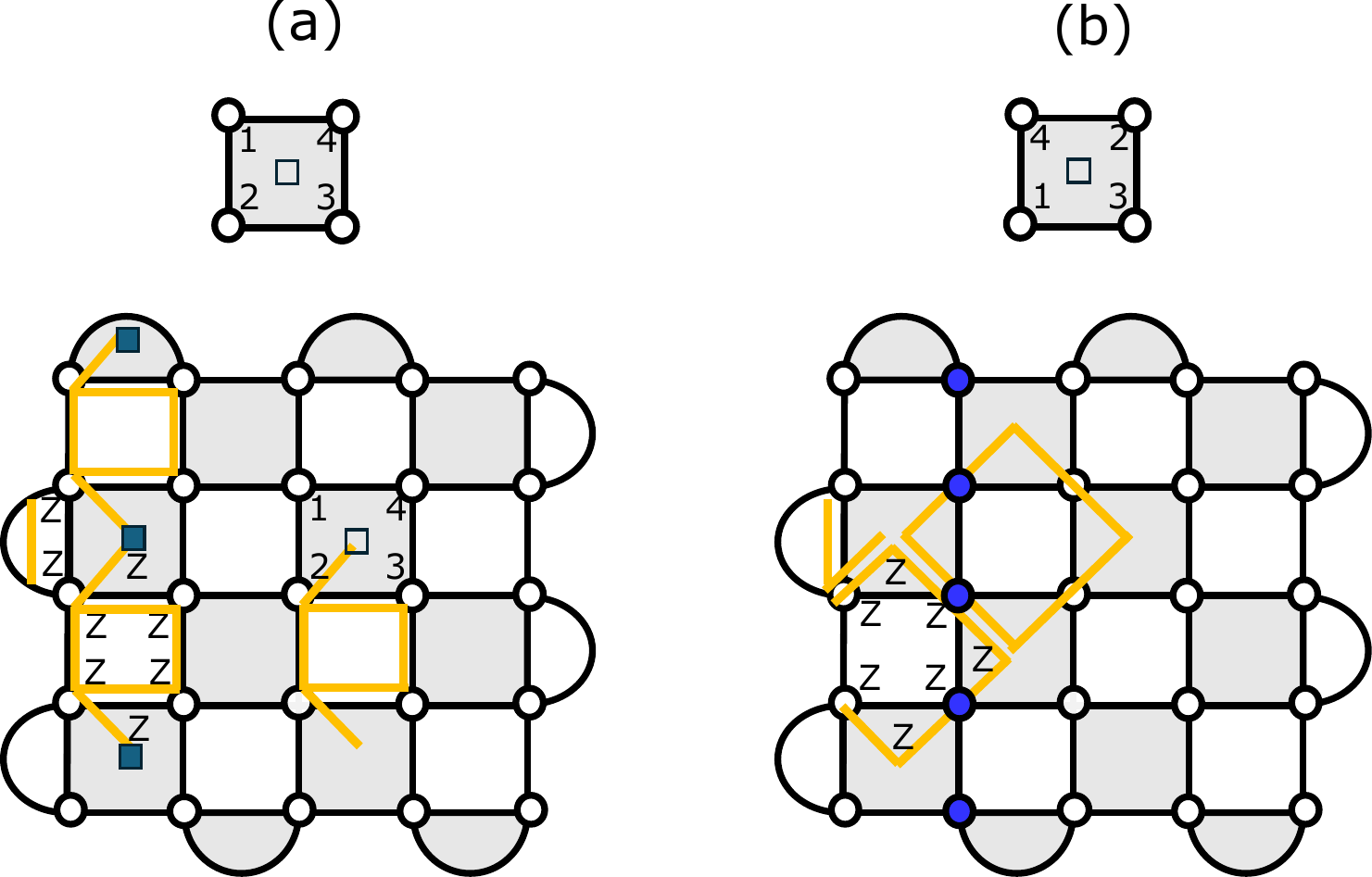}
    \caption{The X-stabilizer supported on data qubits (circles) of the gray plaquettes is measured using CNOTs with an ancillary qubit (shown as square) used as control. The ordering of CNOTs is prescribed using the numbers 1, 2, 3 and 4 on the data qubits. (a) This ordering leads to evolved Z stabilizers (6-qubit terms in the bulk and 2-qubit terms at the boundary, shown using orange lines) for the ISG after the first two CNOTs for each X stabilizer measurement circuit as shown. A smaller X-logical can be supported on the purple square ancillaries due to the asymmetric shape of these evolved stabilizers which has a reduced number of Z stabilizers supported on the ancillary qubits compared to the data qubits. (b) A different ordering that avoids hook errors leads to symmetrically evolved bulk Z stabilizers (8-qubit terms in the bulk, 7-qubit terms and 3-qubit terms at the boundary shown using orange lines) in the middle of the circuit with the same number of Z stabilizers supported on the ancillary qubits as the original data qubits. The logical operator is supported on the data qubits shown in purple.}
    \label{fig:ISGs_CNOT_circuit}
\end{figure}
Nevertheless, as pointed out by Ref. \cite{gidney_pair_2023}, the spacetime distance may still be preserved for surface code, if the $X$ logical operator is perpendicular to the direction that the Pauli web of the incoming $Z$ operator extends (i.e. the $X$ logical operator goes from left to right in Fig. \ref{fig: pairwise}. rather than the direction in Fig. \ref{fig: gidneyhook}). And similarly,
As mentioned at the end of Sec. \ref{sec: spacetimed}, this is exactly the case where non-SLP of the Pauli web does not extend in the direction of the logical operator.

\section{Non-SLP and hook error in the SASECs for the surface code}
\label{sec:NonSLP_CNOTcircuits}
We compare the spacetime distance for the syndrome extraction SASECs for the surface code with and without hook errors, using the instantaneous stabilizer groups (ISGS). In the circuit with hook error, the ordering of CNOTs is such that the X error in the middle of the SASEC propagates to two Xs on the path of an X-logical. In the circuit without hook error, the X error in the middle of the circuit propagates to two Xs along the diagonal such that it can only be part of an X-logical that is longer, and thus does not affect the distance. In terms of ISGs, this can be explained from the structure of the stabilizers in the two cases for the ISG in the middle of the circuit. In one case, the evolved Z stabilizers in the middle of the syndrome extraction SASEC for X stabilizers are asymmetric with respect to the square lattice (which decides the minimum weight logical operators) as shown while in the other case, they are symmetric. In both cases, the evolved Z stabilizers have spread out on the lattice because of the support on the nearby ancillaries. But it is the asymmetry of the evolved Z stabilizers that leads to a reduced static code distance for the ISG in the middle of the circuit in the case with hook errors. In the asymmetric case, there are fewer neighboring stabilizers supported on the ancillary qubits, and thus a smaller-weight logical operator can be supported on them. We illustrate this in Fig.~\ref{fig:ISGs_CNOT_circuit}.

\begin{figure}
    \centering
    \includegraphics[width=0.5\linewidth]{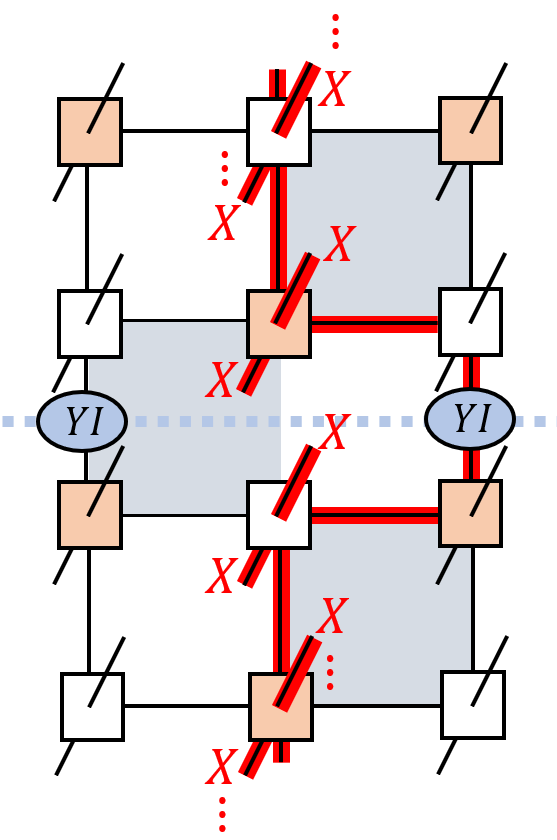}
    \caption{The spacetime distance of the HH Floquet code is reduced from $d$ to $d-1$. Here we show Pauli web of the vertical $X$ logical operator when going through the dynamical code with a missing bond. The string of Pauli errors $YI$ in the direction perpendicular to the missing bond, marked by the dashed line, causes an undetected logical error, since it anti-commutes with the bond operator $XY$ of the vertical $X$ logical operator. Such an error chain has weight $d-1$.}
    \label{fig: brokenlogical}
\end{figure}
\begin{figure}
    \centering
    \includegraphics[width=0.75\linewidth]{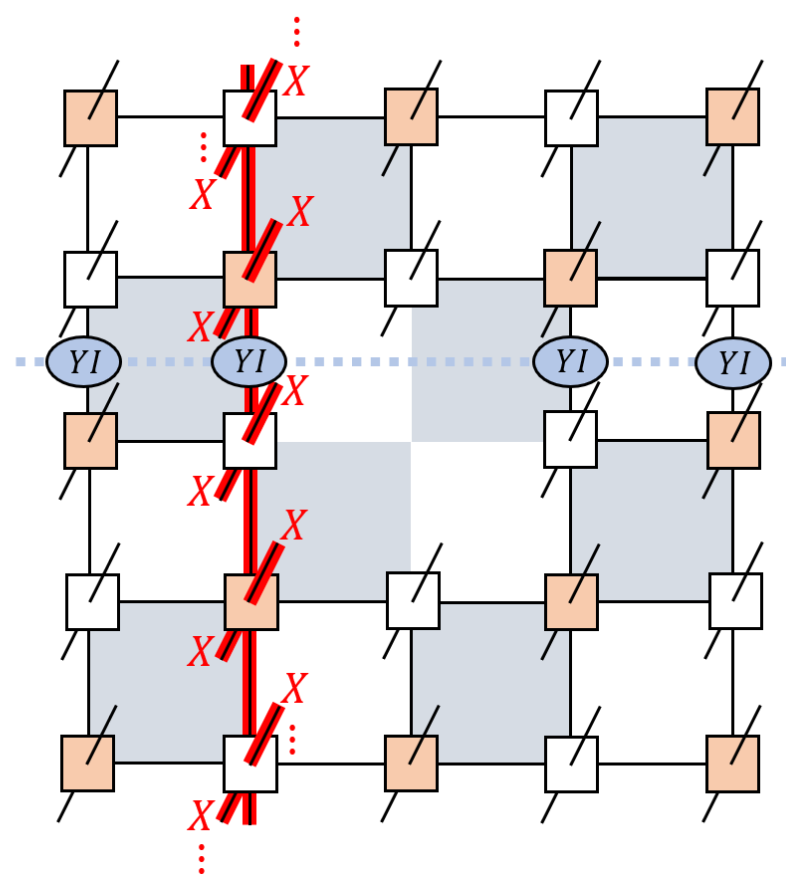}
    \caption{In the modified 6-round schedule in Sec. \ref{sec: dropout}, the spacetime code distance of the HH Floquet code is reduced from $d$ to $d-1$ with the missing gadget. The lowest weight undetected logical error for the vertical $X$ logical operator can be chosen as the string of $d-1$ $YI$ errors on the dotted line. }
    \label{fig: dropoutdistance}
\end{figure}

\section{Spacetime code distance with fabrication defects}
\label{sec: defectdistance}

In this Appendix, we demonstrate that, for both cases of broken connector and qubit dropout in the HH Floquet code, the spacetime code distance is reduced from $d$ to $d-1$ when we implement the modified schedules in Figs. \ref{fig: brokenschedule} and \ref{fig: dropoutlat}. We will also compare our approach to Ref. \cite{mclauchlan_accommodating_2024} when dealing with qubit dropouts.

We first consider the broken connector case. In the gadget layout, consider a chain of Pauli errors $YI$ that occur in the extensive direction perpendicular to the missing bond. With periodic conditions, such an error chain has weight $d-1$. Now consider the vertical $X$ logical operator that goes around the missing bond, see Fig. \ref{fig: brokenlogical}. We see that the chain of $YI$ errors causes a logical error that faults the vertical $X$ logical operator while commuting with all the spacetime stabilizers, since it anti-commutes with the bond operator of the logical operator $XY$. Similarly, the string of Pauli errors $IY$ will fault the vertical logical $Z$ operator.
Therefore, the spacetime code distances of both logical subspaces are reduced by $1$.

Similarly, for the case of qubit dropout, the spacetime code distance of the HH Floquet code is also reduced by $1$ from the modified 6-round schedule around the dropout qubit in Sec. \ref{sec: dropout}. In fact, a string of $d-1$ $YI$ errors will again fault the logical $X$ operator without being detected by any spacetime stabilizers, see Fig. \ref{fig: dropoutdistance}. In comparison, we also adapt the approach in Ref. \cite{mclauchlan_accommodating_2024} to a gadget layout form for the toric code on the square lattice, which further removes connections to the R gadget on the bottom left and top right of the missing gadget. In this case, the spacetime code distance is actually reduced by 2 for all $X$ and $Z$ logical operators in both directions, as demonstrated in Fig. \ref{fig: dropoutdistance2}. The schedule in this case is still 3-round, as shown in Fig. \ref{fig: dropoutschedule2}. We note that the original implementation of the protocol in Ref. \cite{mclauchlan_accommodating_2024} is for the toric code on the hexagonal lattice, in which case the spacetime distance for some of the logical operators is reduced by 1, instead of 2.

\begin{figure}[H]
    \centering
    \includegraphics[width=0.7\linewidth]{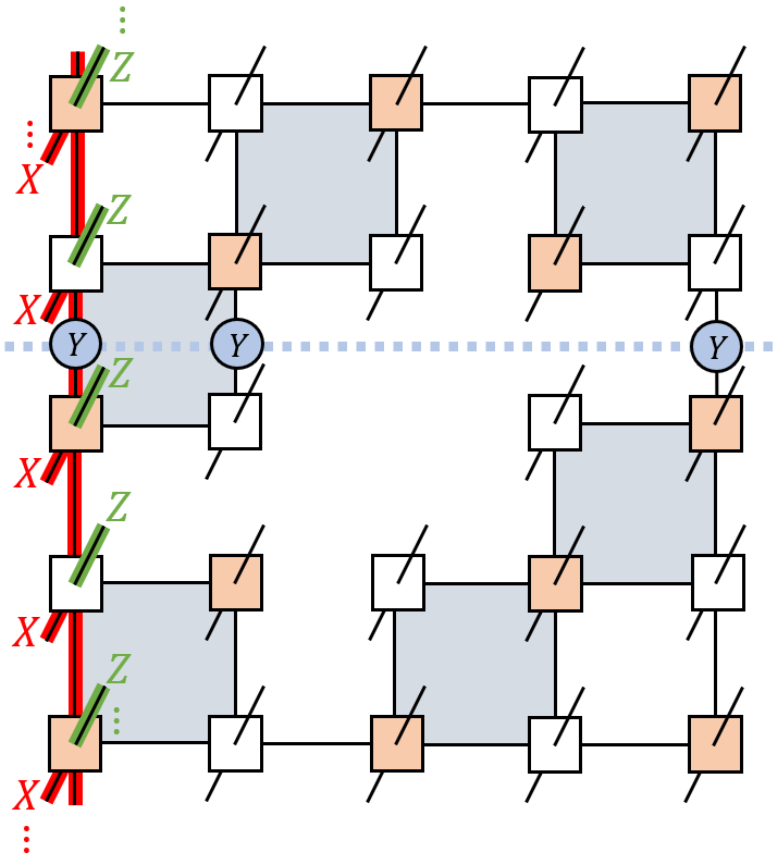}
    \caption{The gadget layout adapted from Ref. \cite{mclauchlan_accommodating_2024} for qubit dropouts. The lowest weight undetected logical error for the vertical logical $X$ operator can be chosen as the string of $Y$ errors, which has weight $d-2$. Distance reduction of other logical operators can be similarly obtained.}
    \label{fig: dropoutdistance2}
\end{figure}

\begin{figure}[H]
    \centering
    \includegraphics[width=0.7\linewidth]{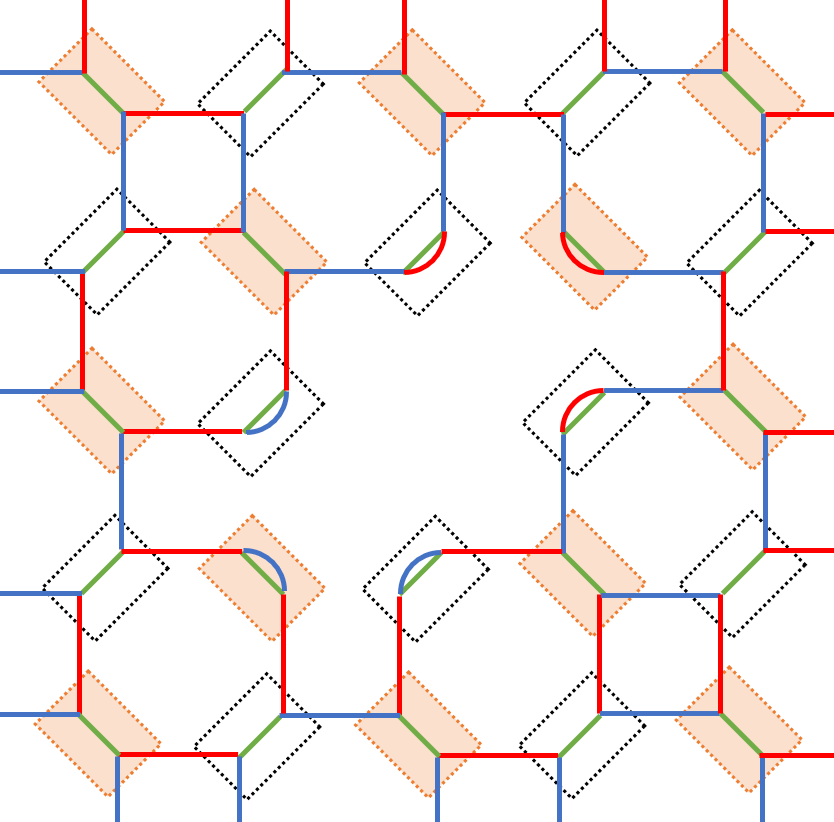}
    \caption{Lattice implementation of the protocol adapted from Ref. \cite{mclauchlan_accommodating_2024} for qubit dropouts, which is still implemented by the 3-round schedule: gZZ, bXX and rYY.}
    \label{fig: dropoutschedule2}
\end{figure}

\end{document}